\pgfplotsset{compat=newest}
\newcommand\footnoteref[1]{\protected@xdef\@thefnmark{\ref{#1}}\@footnotemark}
\newenvironment{mymatrix}{\begin{bmatrix}} {\end{bmatrix} }
\def\ve#1{{\mathchoice{\mbox{\boldmath$\displaystyle #1$}}%
              {\mbox{\boldmath$\textstyle #1$}}%
              {\mbox{\boldmath$\scriptstyle #1$}}%
              {\mbox{\boldmath$\scriptscriptstyle #1$}}}}
\renewcommand{\vec}[1]{\ensuremath{\boldsymbol{#1}}}
\newcommand{\Fq}{\ensuremath{\mathbb{F}_q}}
\DeclareMathOperator{\wt}{wt}
\DeclareMathOperator{\Vol}{Vol}
\newcommand{\mycode}[1]{\ensuremath{\mathcal{#1}}}
\newcommand{\0}{\ve{0}}
\renewcommand{\H}{\ve{H}}
\newcommand{\y}{\ve{y}}
\renewcommand{\c}{\ve{c}}
\newcommand{\G}{\ve{G}}
\newcommand{\Y}{\ve{Y}}
\newcommand{\A}{\ve{A}}
\newcommand{\B}{\ve{B}}
\newcommand{\F}{\mathbb{F}}
\newtheorem{lem}{Lemma}
\newtheorem{prop}{Proposition}
\newtheorem{cor}{Corollary}
\newtheorem{const}{Construction}
\newtheorem{thmmystyle}{Theorem}
\newtheorem{rem}{Remark}
\newtheorem{examplex}{Example}
\newenvironment{customthm}[1]
{\innercustomthm}
{\endinnercustomthm}
\newenvironment{customconst}[1]
{\innercustomconst}
{\endinnercustomthm}
\newcommand{\removelatexerror}{\let\@latex@error\@gobble}
\newcommand*{\rom}[1]{\expandafter\@slowromancap\romannumeral #1@}
\newcommand*{\QEDA}{\hfill\ensuremath{\blacktriangleright}}
\newcommand{\pow}{\mu}
\newcommand{\vmes}{\ve{m}}
\newcommand{\mes}{m}
\newcommand{\mesp}{m(x)}
\definecolor{capri}{rgb}{0.0, 0.75, 1.0}
			\edef\arga{\thisrow{#1}}
			\edef\argb{#2}
\begin{document}

\title{Coding and Bounds for\\ Partially Defective Memory Cells \vspace{-0.4 cm}}
\author{\IEEEauthorblockN{Haider Al Kim$^{1,2}$\thanks{
This work has received funding from the German Research Foundation (Deutsche Forschungsgemeinschaft, DFG) under Grant No. WA3907/1-1, the German Academic Exchange Service (Deutscher Akademischer Austauschdienst, DAAD) under the support program ID 57381412, and the European Union's Horizon 2020 research and innovation program through the Marie Sklodowska-Curie under Grant No.~713683. 
This article was presented in part at the 2019 International Symposium Problems of Redundancy in Information and Control Systems (Redundancy), at the 17th International Workshop on Algebraic and Combinatorial Coding Theory (ACCT2020), and also as an extended abstract at the 2021 annual Non-Volatile Memories Workshop (NVMW).
}, Sven Puchinger$^{1,3}$, Ludo Tolhuizen$^{4}$, Antonia Wachter-Zeh$^{1}$}
\IEEEauthorblockA{
 $^1$Institute for Communications Engineering, Technical University of Munich (TUM), Germany\\ $^2$Electronic and Communications Engineering, University of Kufa (UoK), Iraq\\
 $^3$Department of Applied Mathematics and Computer Science, Technical University of Denmark (DTU), Denmark\\
 $^4$Philips Research, High Tech Campus 34, Netherlands\\
 Email: haider.alkim@tum.de, sven.puchinger@tum.de, ludo.tolhuizen@philips.com, antonia.wachter-zeh@tum.de}
 \vspace{-0.5cm}}
\maketitle
\begin{abstract}
		This paper considers coding for so-called \emph{partially stuck (defect)} memory cells.
	Such memory cells can only store partial information as some of their levels cannot be used fully due to, e.g., wearout.
	First, we present new constructions that are able to mask $u$ partially stuck cells while correcting at the same time $t$ random errors. 
	The process of ''masking'' determines a word whose entries coincide with writable levels at the (partially) stuck cells. 
For $u>1$ and alphabet size $q>2$, our new constructions improve upon the required redundancy of known constructions for $t=0$, and require less redundancy for masking partially stuck cells than former works required for masking fully stuck cells (which cannot store any information). 
Second, we show 
that treating some of the partially stuck cells as erroneous cells can decrease the required redundancy for some parameters. 
Lastly, we derive Singleton-like, sphere-packing-like, and Gilbert--Varshamov-like bounds. 
Numerical comparisons 
state that our constructions match the Gilbert--Varshamov-like bounds for several code parameters, e.g., BCH codes that contain all-one word by our first construction. 
\end{abstract}
\begin{IEEEkeywords}
	flash memories, phase change memories, non-volatile memories, defective memory, (partially) stuck cells, BCH code, cyclic code, sphere packing bound, Gilbert-Varshamov bound
 \vspace{-0.1cm}
\end{IEEEkeywords}
\section{Introduction}
\IEEEPARstart{T}{he} demand for reliable memory solutions and in particular for non-volatile memories such as flash memory and \emph{phase change memories} (PCMs) for different applications is steadily increasing. These multi-level devices provide permanent storage and a rapidly extendable capacity. Recently developed devices exploit an increased number of cell levels while at the same time the physical size of the cells was decreased.
Therefore, coding and signal processing solutions are essential to overcome reliability issues. The key characteristic of PCM cells is that they can switch between two main states: an amorphous state and a crystalline state. PCM cells may become \emph{defect} (also called \emph{stuck}) \cite{gleixner2009reliability,kim2005reliability,lee2009study,pirovano2004reliability} if they fail in switching their states.
This occasionally happens due to the cooling and heating processes of the cells. Therefore, cells can only hold a single phase \cite{gleixner2009reliability},\cite{pirovano2004reliability}. 
In multi-level PCM cells, failure may occur at a position in either of extreme states or in the partially programmable states of crystalline. 

The work~\cite{wachterzeh2016codes} investigates codes that mask so-called \emph{partially stuck} (partially defective) cells, i.e., cells which cannot use all levels.
For multi-level PCMs, the case in which the partially stuck level $s=1$ is particularly important since this means that a cell can reach all crystalline sub-states, but cannot reach the amorphous state.

Figure~\ref{Fig1} depicts the general idea of reliable and (partially) defective memory cells.
It shows two different cell level representations: Representation~1 forms the binary extension filed $\mathbb{F}_{2^4}$ and Representation~2 forms the set of integers modulo $q=4$, i.e., $\mathbb{Z}/4\mathbb{Z}$.
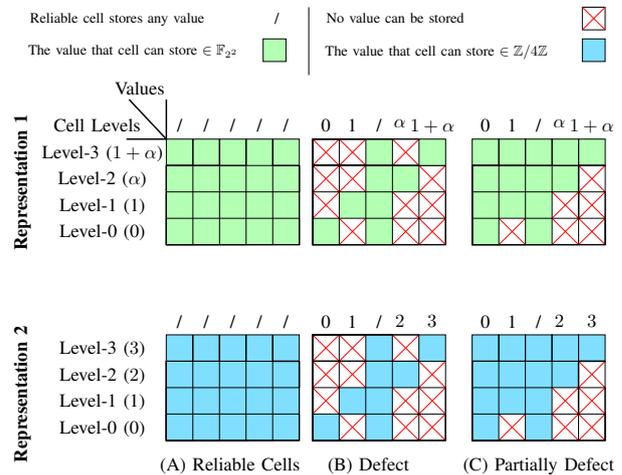
\begin{figure} [h]
	
	\begin{center}
	\scalebox{0.6}{	
		\begin{tikzpicture}
			\draw
			(9.1,4) node[anchor=north] {\small {No value can be stored}} 			
			;
			\draw(3.35,3.3) node[anchor=north] {\small {The value that cell can store $\in \mathbb{F}_{2^2}$}} 
			;
			\draw(10,3.3) node[anchor=north] {\small {The value that cell can store $\in \mathbb{Z}/4\mathbb{Z}$}} 
			;		
			\fill[capri!50!white] (13.2,2.8) rectangle (13.7,3.3);
			\fill[green!30!white] (6.2,2.8) rectangle (6.7,3.3);
			\draw (13.2,2.8) rectangle (13.7,3.3);
			\draw (13.2,3.5) rectangle (13.7,4);
			\draw (6.2,2.8) rectangle (6.7,3.3);
			\draw (6.5,4) node[anchor=north] {/}		
			(3,4) node[anchor=north] {\small {Reliable cell stores any value}}
			;
			\draw[red,-] (13.2,3.5) -- (13.7,4);
			\draw[red,-] (13.2,4) -- (13.7,3.5);
			\draw[black,-] (7.2,2.5) -- (7.2,4);	
		\end{tikzpicture}
	}	

	\scalebox{0.70}{		
		\begin{tikzpicture}[baseline={(0,-0.5)}]	
			\draw
			(-1.2,2) node[anchor=north] {Level-3 ($1+\alpha$)}
			(-1.2,1.5) node[anchor=north] {Level-2 ($\alpha$)}
			(-1.2,1) node[anchor=north] {Level-1 (1)}
			(-1.2,0.5) node[anchor=north] {Level-0 (0)}
			(0.25,2.5) node[anchor=north] {/}
			(0.75,2.5) node[anchor=north] {/}
			(1.25,2.5) node[anchor=north] {/}
			(1.75,2.5) node[anchor=north] {/}
			(2.25,2.5) node[anchor=north] {/}
			;
			\draw[thick,-] (0,1.5) -- (2.5,1.5);
			\draw (-2.7,2.6) node[anchor=north] {\rotatebox{90}{\textbf{Representation~1}}};
			\draw (-0.5,3.2) node[anchor=north] {Values};
			\draw (-1.3,2.5) node[anchor=north] {Cell Levels};
			\draw[thick,-] (0,2) -- (-0.7,2.75);
			\draw[thick,-] (0,2) -- (0,2.75);
			\draw[thick,-] (0,2) -- (-0.75,2);
			\fill[green!30!white] (0,0) rectangle (2.5,2);
			\draw (0,0) rectangle (2.5,2);
			\draw (0,0) rectangle (2.5,0.5);
			\draw (0,0) rectangle (2.5,1);
			\draw (0,0) rectangle (0.5,1.5);
			\draw (0,0) rectangle (1,1.5);
			\draw (0,0) rectangle (1.5,1.5);
			\draw (0,0) rectangle (2,1.5);
			\draw[thick,-] (2,1.5) -- (2.5,1.5);
			\draw[thick,-] (2.5,1) -- (2.5,1.5);
			\draw[thick,-] (0.5,1.5) -- (0.5,2);
			\draw[thick,-] (1,1.5) -- (1,2);
			\draw[thick,-] (1.5,1.5) -- (1.5,2);
			\draw[thick,-] (2,1.5) -- (2,2);
			\draw
			(3,2.5) node[anchor=north] {0}
			(3.5,2.5) node[anchor=north] {1}
			(4,2.5) node[anchor=north] {/}
			(4.39,2.5) node[anchor=north] {$\alpha$}
			(5,2.5) node[anchor=north] {$1+\alpha$};
			\draw (2.75,0) rectangle (5.25,1.5);
			\fill[green!30!white] (2.75,0) rectangle (3.25,0.5);
			\fill[green!30!white] (3.25,0.5) rectangle (3.75,1);
			\fill[green!30!white] (3.75,0) rectangle (4.25,2);
			\fill[green!30!white] (4.25,1) rectangle (4.75,1.5);
			\fill[green!30!white] (4.75,1.5) rectangle (5.25,2);
			\draw[red,-] (2.75,0.5) -- (3.25,1);
			\draw[red,-] (2.75,1) -- (3.25,0.5);
			\draw[red,-] (2.75,1) -- (3.25,1.5);
			\draw[red,-] (2.75,1.5) -- (3.25,1);
			\draw[red,-] (2.75,1.5) -- (3.25,2);
			\draw[red,-] (2.75,2) -- (3.25,1.5);
			\draw[red,-] (3.25,0) -- (3.75,0.5);
			\draw[red,-] (3.25,0.5) -- (3.75,0);
			\draw[red,-] (3.25,1) -- (3.75,1.5);
			\draw[red,-] (3.25,1.5) -- (3.75,1);
			\draw[red,-] (3.25,1.5) -- (3.75,2);
			\draw[red,-] (3.25,2) -- (3.75,1.5);
			\draw[red,-] (4.25,0) -- (4.75,0.5);
			\draw[red,-] (4.25,0.5) -- (4.75,0);
			\draw[red,-] (4.25,1.5) -- (4.75,2);
			\draw[red,-] (4.25,2) -- (4.75,1.5);
			\draw[red,-] (4.25,0.5) -- (4.75,1);
			\draw[red,-] (4.25,1) -- (4.75,0.5);
			\draw[red,-] (4.75,0) -- (5.25,0.5);
			\draw[red,-] (4.75,0.5) -- (5.25,0);
			\draw[red,-] (4.75,0.5) -- (5.25,1);
			\draw[red,-] (4.75,1) -- (5.25,0.5);
			\draw[red,-] (4.75,1) -- (5.25,1.5);
			\draw[red,-] (4.75,1.5) -- (5.25,1);
			\draw (2.75,1.5) rectangle (3.25,2);
			\draw (2.75,1.5) rectangle (3.75,2);
			\draw (2.75,1.5) rectangle (4.25,2);
			\draw (2.75,1.5) rectangle (4.75,2);
			\draw (2.75,1.5) rectangle (5.25,2);
			\draw (2.75,0) rectangle (5.25,0.5);
			\draw (2.75,0) rectangle (5.25,1);
			\draw (2.75,0) rectangle (3.25,1.5);
			\draw (2.75,0) rectangle (3.75,1.5);
			\draw (2.75,0) rectangle (4.25,1.5);
			\draw (2.75,0) rectangle (4.75,1.5);
			;	
			\draw	
			(6,2.5) node[anchor=north] {0}
			(6.5,2.5) node[anchor=north] {1}
			(7,2.5) node[anchor=north] {/}
			(7.39,2.5) node[anchor=north] {$\alpha$}
			(8,2.5) node[anchor=north] {$1+\alpha$};
			\draw[thick,-] (6,1.5) -- (6,1.5);
			\draw (5.75,0) rectangle (7.75,1.5);
			\fill[green!30!white] (5.75,0) rectangle (6.25,0.5);
			\fill[green!30!white] (6.25,0.5) rectangle (6.75,1);
			\fill[green!30!white] (6.75,0) rectangle (7.25,1.5);
			\fill[green!30!white] (5.75,1) rectangle (7.75,1.5);
			\fill[green!30!white] (5.75,0.5) rectangle (6.25,1);
			\draw[thick,-] (7.75,1) -- (7.75,1.5);
			\fill[green!30!white] (7.75,0.5) rectangle (7.75,1);
			\fill[green!30!white] (5.75,1.5) rectangle (8.25,2);
			\draw[red,-] (6.25,0) -- (6.75,0.5);
			\draw[red,-] (6.25,0.5) -- (6.75,0);
			\draw[red,-] (7.25,0) -- (7.75,0.5);
			\draw[red,-] (7.25,0.5) -- (7.75,0);
			\draw[red,-] (7.25,0.5) -- (7.75,1);
			\draw[red,-] (7.25,1) -- (7.75,0.5);
			\draw[red,-] (7.75,0) -- (8.25,0.5);
			\draw[red,-] (7.75,0.5) -- (8.25,0);	
			\draw[red,-] (7.75,0.5) -- (8.25,1);
			\draw[red,-] (7.75,1) -- (8.25,0.5);
			\draw[red,-] (7.75,1) -- (8.25,1.5);
			\draw[red,-] (7.75,1.5) -- (8.25,1);
			\draw (5.75,0) rectangle (8.25,0.5);
			\draw (5.75,0) rectangle (8.25,1);
			\draw (5.75,0) rectangle (6.25,1.5);
			\draw (5.75,0) rectangle (6.75,1.5);
			\draw (5.75,0) rectangle (7.25,1.5);
			\draw (5.75,0) rectangle (8.25,1.5);
			\draw[thick,-] (7.75,1.5) -- (8.25,1.5);
			\draw[thick,-] (7.75,.5) -- (7.75,1);
			\draw (5.75,1.5) rectangle (6.25,2);
			\draw (6.25,1.5) rectangle (6.75,2);
			\draw (6.75,1.5) rectangle (7.25,2);
			\draw (7.25,1.5) rectangle (7.75,2);
			\draw (7.75,1.5) rectangle (8.25,2);		
		\end{tikzpicture}
	}

	\scalebox{0.70}{	
		\begin{tikzpicture}[baseline={(0,-0.5)}]	
			\draw
			(-1.2,2) node[anchor=north] {Level-3 ($3$)}
			(-1.2,1.5) node[anchor=north] {Level-2 ($2$)}
			(-1.2,1) node[anchor=north] {Level-1 (1)}
			(-1.2,0.5) node[anchor=north] {Level-0 (0)}
			(0.25,2.5) node[anchor=north] {/}
			(0.75,2.5) node[anchor=north] {/}
			(1.25,2.5) node[anchor=north] {/}
			(1.75,2.5) node[anchor=north] {/}
			(2.25,2.5) node[anchor=north] {/}	
			;
			\draw[thick,-] (0,1.5) -- (2.5,1.5);
			\draw (-2.7,2.3) node[anchor=north] {\rotatebox{90}{\textbf{Representation~2}}};
			\draw (-0.5,3.2) node[anchor=north] {};
			\draw (-1.3,2.5) node[anchor=north] {};
			\fill[capri!50!white] (0,0) rectangle (2.5,2);
			\draw (0,0) rectangle (2.5,2);
			\draw (0,0) rectangle (2.5,0.5);
			\draw (0,0) rectangle (2.5,1);
			\draw (0,0) rectangle (0.5,1.5);
			\draw (0,0) rectangle (1,1.5);
			\draw (0,0) rectangle (1.5,1.5);
			\draw (0,0) rectangle (2,1.5);
			\draw[thick,-] (2,1.5) -- (2.5,1.5);
			\draw[thick,-] (2.5,1) -- (2.5,1.5);
			\draw[thick,-] (0.5,1.5) -- (0.5,2);
			\draw[thick,-] (1,1.5) -- (1,2);
			\draw[thick,-] (1.5,1.5) -- (1.5,2);
			\draw[thick,-] (2,1.5) -- (2,2);	
			\draw
			(3,2.5) node[anchor=north] {0}
			(3.5,2.5) node[anchor=north] {1}
			(4,2.5) node[anchor=north] {/}
			(4.39,2.5) node[anchor=north] {$2$}
			(5,2.5) node[anchor=north] {$3$};
			\draw (2.75,0) rectangle (5.25,1.5);
			\fill[capri!50!white] (2.75,0) rectangle (3.25,0.5);
			\fill[capri!50!white] (3.25,0.5) rectangle (3.75,1);
			\fill[capri!50!white] (3.75,0) rectangle (4.25,2);
			\fill[capri!50!white] (4.25,1) rectangle (4.75,1.5);
			\fill[capri!50!white] (4.75,1.5) rectangle (5.25,2);
			\draw[red,-] (2.75,0.5) -- (3.25,1);
			\draw[red,-] (2.75,1) -- (3.25,0.5);
			\draw[red,-] (2.75,1) -- (3.25,1.5);
			\draw[red,-] (2.75,1.5) -- (3.25,1);
			\draw[red,-] (2.75,1.5) -- (3.25,2);
			\draw[red,-] (2.75,2) -- (3.25,1.5);
			\draw[red,-] (3.25,0) -- (3.75,0.5);
			\draw[red,-] (3.25,0.5) -- (3.75,0);
			\draw[red,-] (3.25,1) -- (3.75,1.5);
			\draw[red,-] (3.25,1.5) -- (3.75,1);
			\draw[red,-] (3.25,1.5) -- (3.75,2);
			\draw[red,-] (3.25,2) -- (3.75,1.5);
			\draw[red,-] (4.25,0) -- (4.75,0.5);
			\draw[red,-] (4.25,0.5) -- (4.75,0);
			\draw[red,-] (4.25,1.5) -- (4.75,2);
			\draw[red,-] (4.25,2) -- (4.75,1.5);
			\draw[red,-] (4.25,0.5) -- (4.75,1);
			\draw[red,-] (4.25,1) -- (4.75,0.5);
			\draw[red,-] (4.75,0) -- (5.25,0.5);
			\draw[red,-] (4.75,0.5) -- (5.25,0);
			\draw[red,-] (4.75,0.5) -- (5.25,1);
			\draw[red,-] (4.75,1) -- (5.25,0.5);
			\draw[red,-] (4.75,1) -- (5.25,1.5);
			\draw[red,-] (4.75,1.5) -- (5.25,1);
			\draw (2.75,1.5) rectangle (3.25,2);
			\draw (2.75,1.5) rectangle (3.75,2);
			\draw (2.75,1.5) rectangle (4.25,2);
			\draw (2.75,1.5) rectangle (4.75,2);
			\draw (2.75,1.5) rectangle (5.25,2);
			\draw (2.75,0) rectangle (5.25,0.5);
			\draw (2.75,0) rectangle (5.25,1);
			\draw (2.75,0) rectangle (3.25,1.5);
			\draw (2.75,0) rectangle (3.75,1.5);
			\draw (2.75,0) rectangle (4.25,1.5);
			\draw (2.75,0) rectangle (4.75,1.5);
			;	
			\draw	
			(6,2.5) node[anchor=north] {0}
			(6.5,2.5) node[anchor=north] {1}
			(7,2.5) node[anchor=north] {/}
			(7.39,2.5) node[anchor=north] {$2$}
			(8,2.5) node[anchor=north] {$3$};
			\draw[thick,-] (6,1.5) -- (6,1.5);
			\draw (5.75,0) rectangle (7.75,1.5);
			\fill[capri!50!white] (5.75,0) rectangle (6.25,0.5);
			\fill[capri!50!white] (6.25,0.5) rectangle (6.75,1);
			\fill[capri!50!white] (6.75,0) rectangle (7.25,1.5);
			\fill[capri!50!white] (5.75,1) rectangle (7.75,1.5);
			\fill[capri!50!white] (5.75,0.5) rectangle (6.25,1);
			\draw[thick,-] (7.75,1) -- (7.75,1.5);
			\fill[capri!50!white] (7.75,0.5) rectangle (7.75,1);
			\fill[capri!50!white] (5.75,1.5) rectangle (8.25,2);
			\draw[red,-] (6.25,0) -- (6.75,0.5);
			\draw[red,-] (6.25,0.5) -- (6.75,0);
			\draw[red,-] (7.25,0) -- (7.75,0.5);
			\draw[red,-] (7.25,0.5) -- (7.75,0);
			\draw[red,-] (7.25,0.5) -- (7.75,1);
			\draw[red,-] (7.25,1) -- (7.75,0.5);
			\draw[red,-] (7.75,0) -- (8.25,0.5);
			\draw[red,-] (7.75,0.5) -- (8.25,0);
			\draw[red,-] (7.75,0.5) -- (8.25,1);
			\draw[red,-] (7.75,1) -- (8.25,0.5);
			\draw[red,-] (7.75,1) -- (8.25,1.5);
			\draw[red,-] (7.75,1.5) -- (8.25,1);
			\draw (5.75,0) rectangle (8.25,0.5);
			\draw (5.75,0) rectangle (8.25,1);
			\draw (5.75,0) rectangle (6.25,1.5);
			\draw (5.75,0) rectangle (6.75,1.5);
			\draw (5.75,0) rectangle (7.25,1.5);
			\draw (5.75,0) rectangle (8.25,1.5);
			\draw[thick,-] (7.75,1.5) -- (8.25,1.5);
			\draw[thick,-] (7.75,.5) -- (7.75,1);
			\draw (5.75,1.5) rectangle (6.25,2);
			\draw (6.25,1.5) rectangle (6.75,2);
			\draw (6.75,1.5) rectangle (7.25,2);
			\draw (7.25,1.5) rectangle (7.75,2);
			\draw (7.75,1.5) rectangle (8.25,2);
			\draw
			(1.2,-0.2) node[anchor=north] {(A) Reliable Cells} ;
			\draw (3.8,-0.2) node[anchor=north] {(B) Defect} 
			(7,-0.2) node[anchor=north] {(C) Partially Defect};
		\end{tikzpicture}
	}
	\caption{Illustration of reliable and (partially) defective memory cells. In this figure, there are $n=5$ cells with $q=4$ possible levels. The cell levels $\in \mathbb{F}_{2^2}$ are mapped to (0, 1, $\alpha$ or $1+\alpha$) shown in Representation~1 or $\in \mathbb{Z}/4\mathbb{Z}$ are mapped to ($0,1,2$ or $3$) shown in Representation~2. Case (A) illustrates fully reliable cells which can store any of the four values in both representations. In the stuck scenario as shown in case (B), the defective cells can store only the exact stuck level $s$. Case (C) is more flexible (partially defective scenario). Partially stuck cells at level $s \geq 1$ can store level $s$ or higher. }
	\label{Fig1}
	\end{center}
\end{figure}
\subsection{Related Work} 
Coding for memories with stuck cells, also known as \emph{defect-correcting codes} for \emph{memories with defects}, dates back to the 1970s, cf. the work by Kuznetsov and Tsybakov \cite{Kuznetsov1974Tsybakov}.
They proposed binary defect-correcting codes in finite and asymptotic regimes whose required redundancy is at least the number of defects. Later works \cite{Tsybakov1975etal,Tsybakov1975defect-correcting,Tsybakov1975defects-and-error-correction,Belov1977Shashin,Losev1978Konopel,Kuznetsov1978KasamiandYamamura,heegard1983partitioned,Kuznetsov1985,Chen1985,Borden1987Vinck,dumer1987aditivecoding,dumer1989asymptotically,dumer1990asymptotically} investigated the problem of defective cells under various aspects: binary and non-binary, only defect-correcting coding and error-and-defect-correcting coding, and finite and asymptotic length analysis. 

In binary defect-correcting coding models, e.g. \cite{Tsybakov1975defect-correcting,Losev1978Konopel,Kuznetsov1978KasamiandYamamura,Chen1985,Belov1977Shashin,Borden1987Vinck,dumer1987aditivecoding}, the authors dealt with masking stuck cells without considering additional substitution errors. In these studies, it is unclear if the proposed constructions are optimal in terms of their required redundancy. 
The works \cite{Tsybakov1975defects-and-error-correction,heegard1983partitioned,Kuznetsov1985} considered masking stuck memory cells while at the same time correcting potential random errors. In \cite{heegard1983partitioned}, so-called partitioned cyclic code and partitioned BCH codes were proposed for this task. 

The asymptotic model of stuck-cell-masking codes also received considerable attention in the previously mentioned papers. 
Moreover, there is work devoted to asymptotically optimal codes for a fixed number of defects \cite{dumer1989asymptotically} or for a number of defects proportional to the codeword length \cite{dumer1990asymptotically}. 
The proposed constructions, for example \cite [Section 4]{dumer1990asymptotically} and its extended version in \cite [Section 5]{dumer1990asymptotically} that can additionally correct substitution errors, show that $u$ check symbols are sufficient for masking $u$ defects. However, they use codes with a property that is not well studied in coding theory. Therefore, we do not dwell on \cite{dumer1989asymptotically} and \cite{dumer1990asymptotically}, and also our goal is to obtain code constructions for finite code length $n$.

The recent work \cite{wachterzeh2016codes} considers \emph{partially} stuck memory cells (see Figure~1. C), and improves upon the redundancy necessary for masking compared to all prior works for classical stuck cells. 
However, the paper does not consider error correction in addition to masking. 
\subsection{Our Contribution}
In this paper, 
we consider the problem of combined error correction and masking of partially stuck cells. Compared to the conventional stuck-cell case in \cite{heegard1983partitioned}, we reduce the redundancy necessary for masking, similar to the results in \cite{wachterzeh2016codes}, and even reduce further compared to \cite[Construction 5]{wachterzeh2016codes}. 

If cells are partially stuck at level $1$, we can simply use a $(q-1)$-ary error correcting code as mentioned in \cite[Section III]{wachterzeh2016codes}.
However, this approach could require too much redundancy if a cell is partially stuck at different levels rather than $1$.
For instance, using $(q-s)$-ary codes for $ 2 \leq s \leq q-1$ reduces the cardinality of the code because exempting $s$ out of the available $q$ levels is quite expensive. 
Further, for relatively few partially stuck-at-$1$ cells, even a $(q-1)$-ary error correcting code is not a competitor to 
our constructions 
(cf. Figure~\ref{compare_with_ordinary_GV}).
Therefore, considering sophisticated coding schemes is favorable.

We provide code constructions for any number of partially stuck cells; see Table~\ref{table1} for an overview of our constructions and their required redundancies. 
For the error-free case, where only masking is necessary, our redundancies coincide with those from \cite{wachterzeh2016codes} or are even smaller. 

Our paper also investigates a technique where the encoder, after a first masking step, introduces errors at some partially stuck positions of a codeword in order
to satisfy the stuck-at constraints. The decoder uses part of the error-correcting capability to correct these introduced errors.

We also derive bounds on our code constructions,
namely a Singleton-type, sphere-packing-type, and Gilbert-Varshamov-type bounds.
We provide a numerical analysis by comparing our code constructions and the derived bounds with other trivial codes and known limits.

Our focus is on long codes over small alphabets, i.e., the code length $n$ is larger than the field size $q$. Otherwise, one could instead mask by a code of length $n<q$ (by using, e.g., \cite{SOLOMON1974395}).

The remainder of the paper is arranged as follows. In Section~\ref{preliminaries}, we provide notations and define the models of joint errors and partially defective cells examined in this study. Our code constructions along with their encoding and decoding algorithms are presented in Section~\ref{u_less_than_q} and \ref{Constructions_u_leq_q_leq_n}. Section~\ref{arbitrary_s_levels} generalizes the previous constructions to mask partially stuck cells at any arbitrary level and correct errors additionally. Section~\ref{section_trading} investigates exchanging error correction capability toward more partially stuck cells masking possibility. Upper- and lower-like bounds on our constructions are derived in Section~\ref{upper_bounds_psmc} and \ref{lower_like_bounds}, respectively. In Section~\ref{comparisons}, we provide numerical and analytical comparisons. Finally, Section~\ref{Conclusion} concludes this work. 
\section{Preliminaries}\label{preliminaries}
\subsection{Notations }\label{ssec:notation}
For a prime power $q$, let $\mathbb{F}_q$ denote the finite field of order $q$ and ${\mathbb{F}}_q[x]$ be the set of all univariate polynomials with coefficients in $\mathbb{F}_q$. 
For $g,f \in \mathbb{Z}_{> 0}$, 
 denote $[f] = \{0,1, \dots, f-1\}$ and $[g,f]=\{g,g+1,\dots,f-1\}$. 

As usual, an $[n,k,d]_q$ code is a linear code over $\Fq$ of length $n$, dimension $k$ and minimum (Hamming) distance $d$. The (Hamming) weight $\wt(\ve{x})$ of a vector $\ve{x}\in\Fq^n$ equals its number of non-zero entries.

We fix throughout the paper a total ordering ``$\geq$'' of the elements of $\Fq$ such that $a \geq 1 \geq 0$ for all $a \in \Fq \setminus \{0\}$. 
	So $0$ is the smallest element in $\Fq$, and $1$ is the next smallest element in $\Fq$.
We extend the ordering on $\Fq$ to $\Fq^n$:
for $\vec{x}=(x_0,\ldots ,x_{n-1})\in\Fq^n$ and $\vec{y}=(y_0,\ldots, y_{n-1})\in\Fq^n$, we say that $\vec{x}\geq \vec{y}$ if and only if 
$x_i\geq y_i$ for all $i\in [n]$. 

In order to simplify notation, we sometimes identify $x\in\Fq$ with the number of field elements not larger than $x$, that is,
with the integer $q-|\{ y\in\Fq\mid x\geq y\}|$. The meaning of $x$ will be clear from the context.
Figure~\ref{Fig1} depicts the two representations that are equivalent in this sense.
Finally, we denote the $q$-ary entropy function by $h_q$, that is
\begin{align*}
	& h_q(0)=0 \mbox{, } h_q(1)=\log_q(q-1), \mbox{ and } h_q(x)= -x\log_q(x)\\
	& -(1-x)\log_q(1-x) + x\log_q(q-1) \mbox{ for } 0 < x < 1.
\end{align*}
\subsection{Definitions} \label{ssec:definitions}
\subsubsection{Defect and Partially Defect Cells}
A cell is called defect (\emph{stuck at level $s$}), if it can only store the value $s$.
A cell is called partially defect (\emph{partially stuck at level $s$}), if it can only store values which are at least $s$. Note that a cell that is partially defect at level 0 is a non-defect cell which can store any of the $q$ levels and a cell that is partially defect at level $q-1$ is a (fully) defect cell.
\subsubsection{($\Sigma$, $t$)-PSMC} For $\Sigma \subset \F^n_q$ and non-negative integer $t$, a $q$-ary ($\Sigma$, $t$)-\emph{partially-stuck-at-masking code} $\mycode{C}$ of length $n$ and size $M$ is a coding scheme consisting of a message set $\mathcal{M}$ of size $M$,
an encoder $\mathcal{E}$ and a decoder $\mathcal{D}$.

The encoder $\mathcal{E}$ is a mapping from $\mathcal{M}\times\Sigma$ to $\Fq^n$ such that 
\[ \mbox{for each } (\vmes,\vec{s})\in \mathcal{M} \times \Sigma, \quad \mathcal{E}(\vmes,\vec{s}) \geq \vec{s}, \]
For each $(\vmes,\vec{s})\in \mathcal{M}\times \Sigma$ and each $\vec{e}\in\Fq^n$ such that
\[ \wt({\ve{e}})\leq t \mbox{ and } \mathcal{E}(\vmes,\vec{s})+\vec{e}\geq \vec{s}, \] 
it holds that
\[ \mathcal{D}(\mathcal{E}(\vmes,\vec{s}) + \ve{e}) = \vmes . \]
\subsubsection{($u,1,t)$ PSMC}
A $q$-ary $(u,1,t)$ PSMC of length $n$ and cardinality $\mathcal{M}$ is a $q$-ary $(\Sigma,t)$ PSMC of length $n$ and size $\mathcal{M}$ where
\[ \Sigma= \{ \ve{s}\in\ \{0,1\}^n \mid \wt(\ve{s}) \leq u \} . \]
In this special case, the partially stuck-at condition means that the output of the encoder is non-zero at each position of the support $\ve{\phi}$ of $\ve{s}$.
\section{Code Construction for Masking up to $q-1$ Partially-Stuck-at-$1$ Cells
	\label{u_less_than_q}
}
\subsection{Code Construction}
In this section, we present a coding scheme over $\Fq$ that can mask up to $q-1$ partially stuck cells and additionally can correct errors.
We adapt the construction from \cite{wachterzeh2016codes}, which allows to mask up to $q-1$ partially-stuck-at-$1$ ($s_i =1$ for all $i$) cells with only a single redundancy symbol, but cannot correct any substitution errors. 
\begin{const}\label{cons:matrix_construction:u<q}
	Assume that there is an $[n, k, d]_q$ code $\mycode{C}$ with a $k \times n$ generator matrix of the form
	\begin{align*}
	\ve{G} = 
	\begin{bmatrix}
	\ve{G}_1\\
	\ve{G}_0
	\end{bmatrix}
	=\begin{bmatrix} \ve{0}_{k-1 \times 1} & \ve{I}_{k-1} & \ve{P}_{(k-1) \times (n-k)} \\ {1} & {\ve{1}_{k-1}} & {\ve{1}_{n-k}} \end{bmatrix},
	\end{align*}
where $\ve{I}_{k-1}$ is the $(k-1) \times (k-1)$ identity matrix, $\ve{P} \in \mathbb{F}^{(k-1)\times (n-k)}_q$, and $\ve{1}_{\ell}$ is the all-one vector of length $\ell$.
	Encoder and decoder are shown in Algorithm~\ref{tab:a1} and Algorithm~\ref{tab:a2}. 
\end{const} 
\begin{thmmystyle}
	\label{thm1}
	The coding scheme in Construction~\ref{cons:matrix_construction:u<q} is a ($q-1,1,\lfloor\frac{d-1}{2}\rfloor$) PSMC of length $n$ and cardinality
	$q^{k-1}$.
\end{thmmystyle}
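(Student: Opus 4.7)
\emph{Approach.} The construction of $\ve{G}$ is arranged so that $\ve{G}_0 = \ve{1}_n$, meaning the all-ones vector is a codeword of $\mycode{C}$. The plan is to exploit this: for any $\c' = \m \ve{G}_1$ derived from a message $\m \in \Fq^{k-1}$, the shifted vector $\c' + \alpha \ve{1}_n$ is again in $\mycode{C}$ for every $\alpha \in \Fq$. This gives one free scalar that I can use to satisfy the masking constraints. The $k-1$ message coordinates plus the single masking scalar together account for the $k$ dimensions of $\mycode{C}$, while the code's minimum distance $d$ yields the error-correcting capability. The cardinality $q^{k-1}$ is then immediate from the fact that the message space is $\Fq^{k-1}$.

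\emph{Masking step.} Given $\m \in \Fq^{k-1}$ and $\ve{s} \in \{0,1\}^n$ with $\wt(\ve{s}) \leq q-1$, I would set $\c' = \m \ve{G}_1$ and let $\phi = \suppH(\ve{s})$. The claim is that there exists $\alpha \in \Fq$ with $(\c' + \alpha \ve{1}_n)_i \neq 0$ for every $i \in \phi$: each $i \in \phi$ forbids at most the single value $\alpha = -c'_i$, so at most $|\phi| \leq q-1$ values of $\Fq$ are excluded, and at least one admissible $\alpha$ remains. The encoder then outputs $\c = \c' + \alpha \ve{1}_n \in \mycode{C}$. By the total ordering on $\Fq$, every nonzero element is $\geq 1$, so $c_i \geq 1 = s_i$ for $i \in \phi$, while $c_i \geq 0 = s_i$ trivially for $i \notin \phi$; hence $\c \geq \ve{s}$, verifying the masking condition.

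\emph{Decoding and message recovery.} For a received word $\y = \c + \e$ with $\wt(\e) \leq \lfloor (d-1)/2 \rfloor$, any bounded-distance decoder for $\mycode{C}$ recovers $\c$ uniquely, because $\mycode{C}$ has minimum distance $d$. It then remains to extract $\m$ from $\c$. Using the shape of $\ve{G}$: the first column of $\ve{G}_1$ is zero and the first entry of $\ve{G}_0$ is $1$, so the first component of $\c$ equals $\alpha$; since columns $2, \ldots, k$ of $\ve{G}_1$ form $\ve{I}_{k-1}$ (with $\ve{G}_0$ contributing a $1$ to each), I recover $m_j = c_{j+1} - \alpha$ for $j = 1, \ldots, k-1$. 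The decoder therefore outputs $\m$ exactly, completing the verification.

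\emph{Main obstacle.} The only non-trivial content is the pigeonhole-style counting in the masking step, which crucially uses $\wt(\ve{s}) \leq q-1 < q = |\Fq|$; this is precisely why the construction is bounded at $q-1$ partially stuck-at-$1$ cells, and is the natural point at which a tighter analysis could not improve the result without additional structure on $\ve{G}_1$. The remainder reduces to algebraic bookkeeping, and the extra constraint $\mathcal{E}(\m,\ve{s}) + \e \geq \ve{s}$ in the PSMC definition does not affect the decoding argument, since bounded-distance decoding succeeds for every error pattern of weight at most $\lfloor (d-1)/2 \rfloor$ regardless of whether it respects the stuck-at constraint.
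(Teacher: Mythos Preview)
Your proposal is correct and follows essentially the same approach as the paper: the pigeonhole argument for choosing the scalar multiple of the all-ones row is identical (up to a harmless sign convention, $+\alpha$ versus the paper's $-v$), and the recovery of the message from the first coordinate of the decoded codeword is exactly the mechanism the paper uses. Your explicit remark that the constraint $\mathcal{E}(\m,\ve{s})+\e\geq\ve{s}$ is irrelevant to bounded-distance decoding is a nice clarification the paper leaves implicit.
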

\begin{algorithm}
	\caption{Encoding}
	\label{tab:a1}	
	\KwIn{
		\begin{itemize}
			\item Message: 
			$\vmes = (\mes_0, \mes_1,\dots, \mes_{k-2}) \in {\mathbb{F}_q^{k-1}}$
			\item Positions of partially stuck-at-$1$ cells:
			$\ve{\phi}$
	\end{itemize}}
	
	Compute $\ve{w} = \vmes \cdot \ve{G}_1$
	
	Find $v \in \mathbb{F}_q\setminus \{ w_i \mid i \in \ve{\phi}\}$
		
	Compute $\ve{c} = \ve{w} -v \cdot \ve{G}_0 $
	
	\KwOut{Codeword $\ve{c} \in \mathbb{F}^n_q$ }	
\end{algorithm}
\begin{algorithm}
	\label{tab:a2}
	\caption{Decoding}	
	\KwIn{
		\begin{itemize}
			\item Retrieve $\ve{y} = \ve{c} + \ve{e}$ , $\ve{y} \in \mathbb{F}^n_q $
	\end{itemize} }
	
	$\hat{\c} \gets$ decode $\ve{y}$ in $\mathcal{C}$
	
	$\hat{v} \gets$ first entry of $\hat{\c}$
	
	$\hat {\ve{w}} = (\hat{w}_0, \hat{w}_1,\cdots, \hat{w}_{n-1}) \leftarrow (\hat{\ve{c}} - \hat{v} \cdot \ve{G}_0)$
		
	$\hat{\vmes} \leftarrow (\hat{w}_1, \dots, \hat{w}_{k-1})$
		
	\KwOut{
		Message vector $\hat{\vmes} \in \mathbb{F}^{k-1}_q$} 	
\end{algorithm}	
\begin{proof} 
To mask the partially-stuck-at-$1$ positions, the codeword has to fulfill:
	\begin{equation} \label{equ}
		c_i\geq 1 , \mbox{ for all } i \in \ve{\phi}.
	\end{equation}
Since $\mid \ve{\phi}\mid \;< q$, there is at least one value $v \in \mathbb{F}_q$ such that $w_i \not = v$, $\mbox{ for all } i \in \ve{\phi}$. Thus, ${c}_i=(w_i - v)\not = 0$ and \eqref{equ} is satisfied.
	
The decoder (Algorithm~\ref{tab:a2}) gets $\ve{y}$, which is $\c$ corrupted by at most $\lfloor\frac{d-1}{2}\rfloor$ substitution errors. The decoder of $\mathcal{C}$ can correct these errors and obtain $\c$.

Due to the structure of $\G$, the first position of $\c$ equals $-v$. Hence, we can compute $\hat{\ve{w}} = \ve{w}$ (cf.~Algorithm~\ref{tab:a2}) and $\hat\vmes=\vmes$. 
\end{proof}
\begin{cor}\label{col0}
If there is an $[n,k,d]_q$ code containing a word of weight $n$, then there is a $q$-ary $(q-1,1,\lfloor\frac{d-1}{2}\rfloor)$ PSMC of 
length $n$ and size $q^{k-1}$.
\end{cor}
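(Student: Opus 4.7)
The plan is to reduce Corollary~\ref{col0} to Theorem~\ref{thm1} by showing that, starting from an $[n,k,d]_q$ code containing a word of weight $n$, one can always produce an equivalent code that admits a generator matrix of precisely the form required by Construction~\ref{cons:matrix_construction:u<q}. Since Theorem~\ref{thm1} then supplies the PSMC with the claimed parameters, the corollary follows.

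First, I would normalize the weight-$n$ codeword. Let $\mathcal{C}'$ be the given $[n,k,d]_q$ code and let $\vec{c}'=(c_0',\dots,c_{n-1}')\in\mathcal{C}'$ with $c_i'\neq 0$ for all $i$. Define the coordinate-wise rescaled code
\[
\mathcal{C}'' \;=\; \bigl\{\,(x_0/c_0',\dots,x_{n-1}/c_{n-1}') \;:\; (x_0,\dots,x_{n-1})\in\mathcal{C}'\,\bigr\}.
\]
Rescaling each coordinate by a nonzero scalar is a monomial transformation, so $\mathcal{C}''$ is also an $[n,k,d]_q$ code, and by construction $\vec{1}_n\in\mathcal{C}''$.

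Next, I would produce a generator matrix in the required block form. Extend $\vec{1}_n$ to a basis of $\mathcal{C}''$ and write the corresponding generator matrix with the all-one vector as its last row. For each of the first $k-1$ rows, subtract the appropriate scalar multiple of the last row to zero out its $0$-th coordinate; this preserves the last row. The first $k-1$ rows now lie in the hyperplane $\{x_0=0\}$ and form a $(k-1)$-dimensional subspace, so some $k-1$ of their last $n-1$ coordinates are linearly independent. After a permutation of coordinates $1,\dots,n-1$ (which yields an equivalent code with the same distance and still containing $\vec{1}_n$), these independent columns can be moved to positions $1,\dots,k-1$, and Gaussian elimination on the top $k-1$ rows turns them into $\vec{I}_{k-1}$ without touching the last row or column $0$. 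The result is a generator matrix of the shape
\[
\ve{G} \;=\; \begin{bmatrix} \ve{0}_{k-1\times 1} & \ve{I}_{k-1} & \ve{P} \\ 1 & \ve{1}_{k-1} & \ve{1}_{n-k} \end{bmatrix},
\]
exactly as in Construction~\ref{cons:matrix_construction:u<q}.

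Finally, Theorem~\ref{thm1} applied to this code yields a $(q-1,1,\lfloor(d-1)/2\rfloor)$ PSMC of length $n$ and cardinality $q^{k-1}$, which is the statement of the corollary. I expect the only mildly delicate point to be the justification that a coordinate permutation is harmless: it changes the labeling of cells but not the PSMC property, since the PSMC definition is symmetric over all size-at-most-$u$ stuck patterns $\ve{\phi}$. All other steps are standard linear algebra (basis extension, row reduction, column permutation), and no further obstacle arises.
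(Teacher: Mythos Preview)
Your proof is correct and is precisely the argument the paper leaves implicit: the corollary is stated without proof as an immediate consequence of Theorem~\ref{thm1}, and you have spelled out the standard reduction (rescale the weight-$n$ word to $\ve{1}_n$, extend to a basis, row-reduce and column-permute into the block form of Construction~\ref{cons:matrix_construction:u<q}). Your remark that coordinate permutation and rescaling are harmless because the PSMC definition quantifies over all stuck-at patterns and the corollary only asserts existence of \emph{some} PSMC with the given parameters is exactly the right justification.
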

To obtain a cyclic code, similar to \cite[Theorem~2]{heegard1983partitioned}, we can adapt Algorithms~\ref{tab:a1} and \ref{tab:a2} of Construction~\ref{cons:matrix_construction:u<q} to directly operate on the generator polynomial of the code, which may be beneficial in practice. We present this variant in Appendix~\ref{app:variant_u<q_construction_cyclic_codes}.
For instance, any cyclic code whose generator polynomial is a divisor of $g_0(x) =1+x+x^2+ \dots +x^{n-1}$ contains the all-one codeword. For BCH codes, this is the case if the defining set of the code does not contain $0$. This gives an explicit family of codes whose parameters, for a specific choice of cyclotomic cosets, can be bounded by standard bounds on the minimum distance of cyclic codes such as the BCH bound.
\subsection{Comparison to the Conventional Stuck-Cell Scenario}
Theorem~\ref{thm1} combines \cite[Theorem~1]{heegard1983partitioned} and \cite[Theorem~4]{wachterzeh2016codes} to provide a code construction that can mask partially stuck cells and correct errors. The required redundancy is a single symbol for masking plus the redundancy for the code generated by the upper part of $\G$, needed for the error correction. In comparison, \cite[Theorem~1]{heegard1983partitioned} requires at least 
\begin{align*}
\min\{ n-k \, : \, \exists \, [n,k,d]_q \text{ code with } d > u\} \geq u
\end{align*}
redundancy symbols to mask $u$ stuck cells, where the inequality follows directly from the Singleton bound.

In the following, we present Tables~\ref{table2} and~\ref{table3} to compare ternary 
cyclic codes
	of length $n= 8$ for masking partially stuck cells to masking stuck cells \cite{heegard1983partitioned}, both with error correction. 
	
The tables show that masking partially stuck cells requires less redundancy than masking stuck cells, both with and without additional error correction. The reason is that there is only one forbidden value in each partially stuck-at-$1$ cell, while there
are $q-1$ forbidden values in each stuck at cell.
\subsection{ Remarks on Construction~\ref{cons:matrix_construction:u<q}
}
\begin{rem}\label{rem00}
The special case of Theorem~\ref{thm1} with $n<q$ was used in \cite{SOLOMON1974395} for constructing
a $(q-1)$-ary error-correcting code from a $q$-ary Reed-Solomon code, which can be of interest if $q-1$ is not the power of a prime.
\end{rem}
\begin{rem}\label{rem02}
	The code constructions in Theorem~\ref{thm1} and~\ref{thm:cyclic_construction} also work over the ring of integers modulo $q$ ($\mathbb{Z}/q\mathbb{Z}$) in which $q$ is not necessarily a prime power, similar to the construction for $u< q$ in \cite{wachterzeh2016codes}.
\end{rem}
\begin{rem}\label{rem_further_decrease_redundancy}
	According to \cite[Construction~3]{wachterzeh2016codes}, it is possible to further decrease the required redundancy for masking $u$ partially-stuck-at-$1$ cells to $1-\log_q\lfloor\frac{q}{u+1}\rfloor$. 
	We can use the same strategy here.
	Let $z= \lfloor (\frac{q}{u+1}) \rfloor$. We choose disjoint sets $A_1,A_2,..,A_z$ of size $u+1$ in $\mathbb{F}_q$. As additional information, the encoder picks $j \in \{1,2,\dots,z\}$. In Step~2 of Algorithm 1, it selects $v$ from $A_j$. As the decoder acquires $v$, it can obtain $j$ as well.
\end{rem}
\begin{table*}[h]
	\caption{ Ternary Codes for \textbf{Partially} Stuck-at-$1$ Memory Cell for $n=8$
	}
	\label{table2}
	\begin{center}
		\scalebox{1}{
			\begin{tabular}{ | l |l |l| l | l |}
				\hline
				Cardinality &Overall redundancy&$u$ &$t$& Defining set $D_c$ by \eqref{defining_set} 
				\\ \hline
				$3^7$ & 1 & 2 &0 &$\{4\}$
				\\ \hline
				$3^6$ & 2 & 2 &0 &$\{4\}$
				\\ \hline
				$3^5$ & 3 & 2 &0 &$\{5,7\}$
				\\ \hline
				$3^4$ & 4 & 2 &1 &$\{4,5,7\}$
				\\ \hline	
				$3^3$& 5 &2& 1&$\{1,2,3,6\}$
				\\ \hline
				$3^2$ &6 &2 &1 &$\{1,2,3,4,6\}$
				\\ \hline
				$3^2$&6& 2&1&$\{1,3,4,5,7\}$
				\\ \hline	
				3& 7 & 2&1 &$\{1,2,3,5,6,7\}$
				\\ \hline		
			\end{tabular}
		}
	\end{center}
\end{table*}
\begin{table*}[h]
	\caption{Ternary Codes for Stuck-at Memory \cite{heegard1983partitioned} for $n=8$
	}
	\label{table3}
	\begin{center}
		\scalebox{1}{
			\begin{tabular}{ | l | l|l |l| l |
			}
				\hline
			
				Cardinality &Overall Redundancy&$u$ &$t$&The defining set $D_c$ by \eqref{defining_set} 
				\\ \hline
				$3^7$ & 1 & 1&0&$\{0\}$
				\\ \hline
				$3^6$ & 2 & 1&0&$\{0\}$
				\\ \hline
				$3^5$& 3 & 1 &0 &$\{5,7\}$
				\\ \hline
				$3^4$ &4& 1 &1 &$\{0,1,3\}$
				\\ \hline
				$3^3$& 5 & 1& 1&$\{1,2,3,6\}$
				\\ \hline
				$3^2$ & 6 & 1 &2 &$\{0,1,2,3,6\}$
				\\ \hline
				$3^2$ & 6 &2&1 &$\{0,1,3\}$
				\\ \hline
				3&7&2&1 &$\{1,2,3,6\}$
				\\ \hline
			\end{tabular}
		}	
	\end{center}
\end{table*}
\section{Code Constructions for Masking more than $q- 1$ Partially-Stuck-at-$1$ Cells
}\label{Constructions_u_leq_q_leq_n}
The masking technique in the previous section only guarantees successful masking up to a number of $q-1$ partially stuck-at-$1$ cells. In this section, we present techniques to mask more than $q-1$ cells.
	
Depending on the values of the stored information in the partially stuck positions, Construction~\ref{cons:matrix_construction:u<q} may be able to mask more than $q-1$ cells.
In Section~\ref{ssec:probabilistic_masking}, we determine the probability that masking is possible for fixed partially stuck cell positions and randomly chosen information vectors.

Next, we propose two code constructions for simultaneous masking and error correction when $q \leq u < n$.
One is based on the masking-only construction in \cite[Construction~4]{wachterzeh2016codes} and the other is based on \cite[Section~\rom{6}]{wachterzeh2016codes}, which are able to mask $u\geq q$ partially stuck positions, but cannot correct any errors.
We generalize these constructions to be able to cope with errors.
The latter construction may lead to larger code dimensions for a given pair ($u$, $t$), 
in a similar fashion as \cite[Construction~5]{wachterzeh2016codes} improves upon \cite[Construction~4]{wachterzeh2016codes}.
Further, taking $t=0$ it achieves larger codes sizes than \cite[Construction~5]{wachterzeh2016codes} if the all-one word is in the code.
\subsection{Probabilistic Masking}\label{ssec:probabilistic_masking}
We determine the probability that masking is possible for $u\geq q$ partially stuck-at $1$ cells stuck positions with the code constructions in Theorem~\ref{thm1} and Theorem~\ref{thm2} are used.
This \emph{probabilistic masking} approach enables us to use the memory cells with a certain probability even if there are more than $q-1$ partially stuck cells.
\begin{thmmystyle}
	\label{thm3}
	Let $\ve{G}$ be as in Construction~\ref{cons:matrix_construction:u<q},
and let $\ve{\phi}\subset [n]$ have size $u$.
If the columns of $\ve{G}$ indexed by the elements in $\ve{\phi}$ are linearly independent, a uniformly drawn message from
$\Fq^{k-1}$ results in a word $\ve{c}$ with $c_i\neq 0$ for all $i\in \ve{\phi}$ with probability
	\begin{equation}
		\label{eq4}
		\mathrm{P}(q,u) = 1 - 
		{\dfrac{ \sum_{i=0}^{q-1} (-1)^i {q \choose i} (q-i)^{u}}{q^{u}}}.
	\end{equation}
\end{thmmystyle}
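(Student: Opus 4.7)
The plan is to reduce the question to a counting problem about surjective tuples in $\Fq^u$ and then invoke inclusion-exclusion. First, I would observe that for a fixed message $\vmes$ the masking step in Algorithm~\ref{tab:a1} succeeds exactly when there exists some $v \in \Fq$ with $v \neq w_i$ for every $i \in \ve{\phi}$, i.e., exactly when $\{w_i : i \in \ve{\phi}\}$ is a proper subset of $\Fq$. Hence
\begin{equation*}
\mathrm{P}(q,u) \;=\; 1 - \Pr_{\vmes}\!\left[\{w_i : i \in \ve{\phi}\} = \Fq\right],
\end{equation*}
and the task becomes computing this ``surjectivity probability''.

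Next I would analyze the joint distribution of $(w_i)_{i \in \ve{\phi}}$ for uniform $\vmes \in \Fq^{k-1}$. Since this tuple equals the image of $\vmes$ under the linear map given by the $(k-1) \times u$ submatrix of $\ve{G}_1$ indexed by $\ve{\phi}$, it is uniformly distributed on a subspace $W \subseteq \Fq^u$. Let $D := \{v \cdot \ve{1}_u : v \in \Fq\}$ be the all-constant line. Because the last row $\ve{G}_0$ of $\ve{G}$ is $\ve{1}_n$, the extended map $(\vmes, v) \mapsto \vmes \cdot \ve{G}_1 + v \cdot \ve{G}_0$, restricted to the coordinates in $\ve{\phi}$, has image $W + D$; and the hypothesis that the columns of $\ve{G}$ indexed by $\ve{\phi}$ are linearly independent forces this restricted map $\Fq^k \to \Fq^u$ to have rank $u$, so that $W + D = \Fq^u$.

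Third, I would exploit translation invariance. Adding a constant to every entry of $\ve{y} \in \Fq^u$ only permutes the set $\{y_i\}$, so the event ``$\{y_i\} = \Fq$'' is constant on each coset of $D$ in $\Fq^u$. Since $W + D = \Fq^u$, every coset of $D$ meets $W$ in the same number of points --- this number is $q$ when $W = \Fq^u$, and $1$ when $\dim W = u-1$, in which case $W \oplus D = \Fq^u$. A short count then gives
\begin{equation*}
\Pr_{\vmes}\!\left[\{w_i : i \in \ve{\phi}\} = \Fq\right] \;=\; \frac{\#\{\text{surjective cosets of } D \text{ in } \Fq^u\}}{q^{u-1}} \;=\; \frac{N(u,q)}{q^{u}},
\end{equation*}
where $N(u,q)$ denotes the number of surjective functions from a $u$-set to a $q$-set.

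Finally, I would evaluate $N(u,q)$ by the standard inclusion-exclusion over which elements of $\Fq$ are missed by the entries of a tuple: $N(u,q) = \sum_{i=0}^{q}(-1)^i \binom{q}{i}(q-i)^u$. The $i = q$ summand vanishes, so the upper limit can be truncated to $q-1$, matching \eqref{eq4} exactly. The step I expect to be the most delicate is the second one: the hypothesis concerns the columns of $\ve{G}$, not of $\ve{G}_1$, and the former is strictly weaker, so $(w_i)_{i \in \ve{\phi}}$ need not be uniformly distributed on the whole of $\Fq^u$ --- it is only uniform on $W$, which may well be a proper hyperplane. The coset/translation argument of the third step is precisely what closes this gap, by showing that the surjectivity fraction inside $W$ coincides with the global surjectivity fraction in $\Fq^u$.
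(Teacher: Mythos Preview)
Your proposal is correct and follows the same overall approach as the paper: reduce the masking-success event to the non-surjectivity of $(w_i)_{i\in\ve{\phi}}$, and count surjections by inclusion--exclusion.

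Where you go further than the paper is in your second and third steps. The paper's proof simply asserts that, because the columns of $\ve{G}$ indexed by $\ve{\phi}$ are independent, the restriction $\ve{w}|_{\ve{\phi}}$ is uniform on $\Fq^u$. But $\ve{w}=\vmes\ve{G}_1$, not $\vmes\ve{G}$, and the hypothesis on $\ve{G}$ does \emph{not} force the corresponding columns of $\ve{G}_1$ to be independent; for instance, whenever $0\in\ve{\phi}$ the first column of $\ve{G}_1$ is zero, so $\ve{w}|_{\ve{\phi}}$ lives on a proper hyperplane of $\Fq^u$. Your observation that the surjectivity event is constant on cosets of the diagonal line $D$, together with $W+D=\Fq^u$, is exactly what is needed to show that the surjectivity fraction inside $W$ nevertheless equals $N(u,q)/q^u$. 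So your argument is not merely a restatement of the paper's proof but a genuine repair of a step the paper leaves unjustified.
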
 
 \begin{proof}
 An appropriate value for $v$ in Step 2 in Algorithm~\ref{tab:a1} cannot be found if and only if $\{w_i\mid i\in \ve{\phi}\}= \Fq$ which is true if and only $f: \ve{\phi}\mapsto \Fq$ defined by $f_w(i)=w_i$ is a surjection. As is well-known (see e.g \cite[Example 10.2]{VanLintWilson}, the number of surjections from a set of size $u$ to a set of size $q$ equals
\begin{equation}\label{eq:surj} \sum_{i=0}^{q-1} (-1)^i {q\choose i} (q-i)^u .
 \end{equation}
 As the columns of $\G$ are independent, the vector $\ve{w}$ restricted to $\ve{\phi}$ is distributed uniformly on $\Fq^u$, and hence a word is not masked with probability equal to the expression from (\ref{eq:surj}) divided by $q^u$.
\end{proof}
The following example illustrates that the probability that masking is successful can be quite large.
\begin{examplex}\label{ex2}
Let $q =3$, $n=8$, $n-k =0$.
	The probability to mask $u=n-1$ partially stuck-at-$1$ memory cells is
$\mathrm{P}(3,7) = 0.17$. 
This ratio is $0.77$ if $u=q$ and clearly it is $1$ if $u<q$. 
\end{examplex} 
\begin{rem}\label{rem2}
	The assumption in Theorem~\ref{thm3} that the columns of $\ve{G}$ indexed by the partially stuck positions are linearly independent is fulfilled for most
codes with high probability if $u\leq k-1$, especially if $u \ll k-1$. For dependent columns, it becomes harder to count the number of 
intermediate codewords $\ve{w}$ that do not cover the entire alphabet since
	$ w_i$ for all $i \in \ve{\phi}$ is not uniformly distributed over $\Fq^u$.
\end{rem}
\subsection{Code Construction over $\F_q$ for Masking Up to $ q+d_0-3$ Partially Stuck Cells}
We recall that \cite[Construction~4]{wachterzeh2016codes} can mask more than $q-1$ partially stuck-at-$1$ cells and it is a generalization of the all-one vector construction 
\cite[Theorem~4]{wachterzeh2016codes}.
	Hence, replacing the $\ve{1}_n$ vector in Theorem~\ref{thm1} by a parity-check matrix as in \cite[Construction~4]{wachterzeh2016codes} allows masking of 
$q$ or more partially stuck-at $1$ cells, 
 and correct $t$ errors.
\begin{const}\label{cons-ext}
	Suppose that there is an $[n, k, d ]_q$ code $\mycode{C}$ with a $k \times n$ generator matrix of the following form:
		\begin{center}			
		\scalebox{1}{	
			$ \ve{G} = \begin{bmatrix} & \ve{G}_{1} & \\ &\ve{H}_0\end{bmatrix}$}
	\end{center}
 where $\H_0 \in \F_q^{l\times n}$ is a parity-check matrix of an $[n,n-l,d_0]$ code $\mycode{C}_0$.
	Encoder and decoder are shown in Algorithm~\ref{a5} and Algorithm~\ref{a6}. 
\end{const}
\begin{thmmystyle}
	\label{thm-ext}
	The coding scheme in Construction~\ref{cons-ext} is a ($d_0+q-3,1,\lfloor\frac{d-1}{2}\rfloor$) PSMC of length $n$ and cardinality $q^{k-l}$.
\end{thmmystyle}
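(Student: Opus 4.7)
The plan is to adapt the proof of Theorem~\ref{thm1} by replacing the single masking scalar $v$ by a vector $\ve{v}\in\F_q^l$, so that the row combination $\ve{v}\cdot\ve{H}_0$ plays the role previously played by $v\cdot\ve{G}_0$. Three things need to be shown: (i) the encoder can always find $\ve{v}$ so that $\ve{c}=\ve{w}-\ve{v}\cdot\ve{H}_0$ (with $\ve{w}=\vmes\cdot\ve{G}_1$) satisfies $c_i\ne 0$ for every $i\in\ve{\phi}$ whenever $|\ve{\phi}|\le d_0+q-3$; (ii) the decoder recovers $\vmes$ from $\ve{y}=\ve{c}+\ve{e}$ under the weight bound $\wt(\ve{e})\le\lfloor(d-1)/2\rfloor$; (iii) distinct messages yield distinct codewords, giving the claimed cardinality $q^{k-l}$.

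Parts (ii) and (iii) are routine and follow the lines of Theorem~\ref{thm1}. Since $\mycode{C}$ has minimum distance $d$, any decoder for $\mycode{C}$ produces $\hat{\ve{c}}=\ve{c}$ from $\ve{y}$, and the linear independence of the rows of $\ve{G}$ forces a unique representation $\ve{c}=(\vmes,-\ve{v})\cdot\ve{G}$, from which $\vmes$ is read off by a single linear-algebra step; the same uniqueness gives (iii).

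The crux is the masking claim (i). Let $\h_i$ denote the $i$-th column of $\ve{H}_0$, fix any subset $\ve{\phi}_1\subseteq\ve{\phi}$ with $|\ve{\phi}_1|=d_0-2$, and set $\ve{\phi}_2=\ve{\phi}\setminus\ve{\phi}_1$, so $|\ve{\phi}_2|\le q-1$ by the hypothesis $|\ve{\phi}|\le d_0+q-3$. Because $\mycode{C}_0$ has minimum distance $d_0$, any $d_0-1$ columns of $\ve{H}_0$ are linearly independent; in particular the $\h_i$ with $i\in\ve{\phi}_1$ are linearly independent, so the affine set
\[
V_0=\{\ve{v}\in\F_q^l\;:\;\ve{v}\cdot\h_i=w_i-1\ \text{for all }i\in\ve{\phi}_1\}
\]
has cardinality $q^{l-d_0+2}$, and every $\ve{v}\in V_0$ already forces $c_i=1\ne 0$ on $\ve{\phi}_1$. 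For each $i\in\ve{\phi}_2$, the $d_0-1$ columns $\{\h_j:j\in\ve{\phi}_1\cup\{i\}\}$ remain linearly independent, so as $\ve{v}$ ranges over $V_0$ the coordinate $\ve{v}\cdot\h_i$ takes each element of $\F_q$ equally often; hence the ``bad'' set $\{\ve{v}\in V_0:\ve{v}\cdot\h_i=w_i\}$ has cardinality exactly $q^{l-d_0+1}$. A union bound over the at most $q-1$ positions of $\ve{\phi}_2$ yields at most $(q-1)\,q^{l-d_0+1}<q^{l-d_0+2}=|V_0|$ bad choices, so some $\ve{v}\in V_0$ keeps $c_i\ne 0$ on all of $\ve{\phi}$.

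The main obstacle is precisely this counting argument: the split size $d_0-2$ is dictated by the requirement that $\ve{\phi}_1\cup\{i\}$ stay within the linear-independence regime of $\ve{H}_0$ for every $i\in\ve{\phi}_2$, while the residual slack $q-1$ for $|\ve{\phi}_2|$ is just what the pigeonhole step requires, together pinning down the masking threshold at $d_0+q-3$. Once the masking succeeds, decoding correctness and the cardinality claim follow immediately.
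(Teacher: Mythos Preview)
Your proof is correct. The masking argument works: the split $\ve{\phi}=\ve{\phi}_1\cup\ve{\phi}_2$ with $|\ve{\phi}_1|=d_0-2$ lets you fix $c_i=1$ on $\ve{\phi}_1$ via the affine system (the $d_0-2$ columns are independent since $\mycode{C}_0$ has distance $d_0$), and for each $i\in\ve{\phi}_2$ the column $\h_i$ is independent of $\{\h_j:j\in\ve{\phi}_1\}$, so $\ve{v}\cdot\h_i$ is uniform on $V_0$; the union bound $(q-1)q^{l-d_0+1}<q^{l-d_0+2}$ finishes it. One trivial edge case you glossed over: when $|\ve{\phi}|<d_0-2$ you cannot pick $\ve{\phi}_1$ of size $d_0-2$, but then just take $\ve{\phi}_1=\ve{\phi}$, $\ve{\phi}_2=\emptyset$, and the affine system is already solvable.

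Your route differs from the paper's main proof, which invokes the \emph{constructive} greedy procedure of \cite[Theorems~7--8]{wachterzeh2016codes}: after a triangularising row operation on $\ve{H}_0$, the coordinates $z_0,\dots,z_{l-1}$ are chosen sequentially, each $z_i$ avoiding at most $q-1$ forbidden values on its block. That argument yields an explicit encoder (essentially Gaussian elimination plus a one-pass scan), whereas your counting argument is non-constructive but shorter and makes the role of the threshold $d_0+q-3$ completely transparent. Your approach is in fact close to the paper's \emph{alternative} non-constructive proof of Proposition~\ref{modified_version_of_Theorem3} in Appendix~\ref{alternative_Proof_of_Proposition_2}, which also fixes the first $d_0-2$ coordinates and then averages over the remaining affine slice; your union bound is the $s_i=1$ specialisation of that averaging (Lemma~\ref{lem:basic0}).
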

\begin{algorithm}
	\caption{Encoding}
	\label{a5}
	\KwIn{
		\begin{itemize}
			\item Message: $\vmes\in\F_q^{k-l}$
			\item Positions of partially stuck-at-$1$ cells: $\ve{\phi}$
		\end{itemize}
	}	
	Compute $\ve{w} = \vmes \cdot \ve{G}_1$
	
	Find $\ve{z}\in\F_q^l$ as explained in the proof of \cite[Theorem~7]{wachterzeh2016codes}\label{Step2_a5}	
	
	Compute $\ve{c} = \ve{w} + \ve{z} \cdot \ve{H}_0$	
	
	\KwOut{
		Codeword $\ve{c} \in \mathbb{F}^n_q$} 	
\end{algorithm}
\begin{algorithm}
	\label{a6}
	\caption{Decoding}	
	\KwIn{$\ve{y} = \ve{c}+ \ve{e} \in \mathbb{F}^n_q $}
	
	$\hat{\c} \gets$ decode $\y$ in the code $\mathcal{C}$
	
	Determine $\hat{\vmes}\in \Fq^{k-l}$ and $\hat{\ve{z}}\in\Fq^{l}$ such that $\hat{\c}=\hat{\vmes}\ve{G}_1 + \hat{\ve{z}}\ve{H}_0$.
		
	\KwOut{
		Message vector $\hat{\vmes} \in \mathbb{F}^{k-l}_q$ }	
\end{algorithm}	
\begin{proof}
	 Let $\ve{\phi}\subset [n]$ have size $u\leq q+d_0-3$.
	Algorithm~\ref{a5} finds $\ve{z} =\{z_0,z_1,\dots,z_{l-1}\}$ similar to \cite[Theorem~7]{wachterzeh2016codes} instead of only finding $v$ value as demonstrated in Algorithm~\ref{tab:a1}. 
	Then the proof is exactly the same as in \cite[Theorem~8]{wachterzeh2016codes} for the masking part. In short, the authors in the proof of \cite[Theorem~7]{wachterzeh2016codes} subdivides the code length $n$ into $l$ block lengths of sizes at most $q-1$. Hence, as each block contains at most $q-1$ constraints, then in a corresponding block there is at least $z_i \in \F_q $ such that $z_i \cdot \H_{0_{0,i}} \neq -w_i \mbox{ for } i \in [l]$. Then \cite[Theorem~8]{wachterzeh2016codes} reduces $l$ such that an appropriate $\ve{z} \in \F^l_q$ still exists as any $u-q+2 \leq d_0-1$ of $\H_0$ labeled by $u$ are linearly independent. 
The error correction part of the proof follows the proof of Theorem~\ref{thm1}.
\end{proof}
The gain of Theorem~\ref{thm-ext} in the number of partially stuck cells that can be masked comes at the cost of larger redundancy. 
However, the redundancy is still smaller than the redundancy of the construction for masking stuck-at cells and error correction in \cite{heegard1983partitioned}.
In particular, let $\mycode{C}$ be an $[n,k,d\geq 2t+1]$ code containing an $[n,l]_q$ subcode $\mycode{C}_0$ for which $\mycode{C}_0^{\perp}$ has minimum distance $d_0$.
With Theorem~\ref{thm-ext}, we obtain a $(d_0+q-3,1,\lfloor\frac{d-1}{2}\rfloor)$ PSMC of length $n$ and cardinality $q^{k-l}$.
The construction in \cite{heegard1983partitioned} yields a coding scheme with equal cardinality, allowing for masking up to $d_0-1$ fully stuck cells and correcting $\lfloor\frac{d-1}{2}\rfloor$ errors.
Hence, exactly $q-2$ more cells that are partially stuck at levels $1$
than classical stuck cells can be masked.
\begin{examplex}\label{exc4}
We apply Construction~\ref{cons-ext} to masking up to $u = 4$ partially stuck cells over $\F_4$ and $\vmes\in \mathbb{F}_{4}^9$.
Let $\alpha$ be a primitive element in $\F_{16}$ and let $\mycode{C}$ be the $[15,12,3]_4$ code 
with zeros $\alpha^0$ and $\alpha^1$. Let $\mycode{C}_0$ be the $[15,3]$ subcode of $\mycode{C}$ be the BCH code with zeros
$\{ \alpha^i\mid 0\leq i\leq 14\}\setminus\{ \alpha^5,\alpha^6,\alpha^9\}$. As $\mycode{C}_0^{\perp}$ is equivalent to the 
$[15,12,3]_4$ code with zeros $\alpha^5,\alpha^6,\alpha^9$, it has minimum distance $d_0=3$.
Hence, we obtain a $(4,1,1)$ PSMC code of cardinality $4^{9}$. 
\end{examplex}
\subsection{Code Construction over $\F_{2^\pow}$ for Masking Up to $2^{\pow-1}(d_0+1)-1$ Partially Stuck Cells}
We generalize \cite[Section~\rom{6}]{wachterzeh2016codes} to be able to cope with errors. 
Unlike \cite[Section~\rom{6}]{wachterzeh2016codes} that could be over any prime power $q$, the following code construction works over the finite field $\F_{q}$ where $q= 2^{\pow}$
in order to describe 
a $2^{\mu}$-ary partially stuck cells code construction. This is because binary subfiled subcodes that are required in this construction are not linear subspace for codes over any prime power $q$.
We denote by $\beta_0=1,\beta_1,\dots,\beta_{\pow-1}$ a basis of $\F_{2^\pow}$ over $\F_{2}$. That is, any element $a \in \F_{2^\pow}$ can be uniquely represented as 
$a=\sum_{i=0}^{\pow-1} a_i \beta_i$ where $a_i \in \F_2$ for all $i$. 
In particular, $a \in \F_2$ if and only if $a_1=\dots=a_{\pow-1}=0$. This is a crucial property of $\F_{2^\pow}$ that we will use in Construction~\ref{consbinary}.
\begin{const}\label{consbinary}
Let $\pow >1$.
Suppose $\G$ is a $k\times n$ generator matrix of an $[n,k,d]$$_{2^\pow}$ code $\mycode{C}$ of the form
\begin{equation}
	\G =\left[ \begin{matrix} \H_0 \cr \G_1 \cr \ve{x} \end{matrix} \right] 
\end{equation}
where
\begin{enumerate}
	\item $\H_0\in\F_2^{l\times n}$
 is a parity check matrix of an $[n,n-l,d_0]$$_{2}$ code $\mycode{C}_0$,
	\item $\G_1 \in \mathbb{F}_{2^\pow}^{k-l-1\times n}$,
	\item $\ve{x}\in \mathbb{F}_{2^\pow}$ has Hamming weight $n$.
\end{enumerate}
Encoder and decoder are shown in Algorithm~\ref{a13} and Algorithm~\ref{a14}.
\end{const}
\begin{algorithm}[ht]	
	\caption{Encoding ($\vmes; \ve{m^\prime}; \ve{\phi}$)}
	\label{a13}
	\KwIn{
		\begin{itemize}
			\item Message: \\
			$(\vmes',\vmes)\in \mathcal{F}^l \times \F_{2^{\pow}}^{k-l-1} $, where \\ $\mathcal{F}=\{\sum_{i=1}^{\pow-1} x_i\beta_i \mid (x_1,\ldots ,x_{\pow-1})\in \mathbb{F}_2^{\pow-1}\}$.
			\item Positions of partially stuck-at-$1$ cells: $\ve{\phi}$
			\item Notions introduced in Construction~\ref{consbinary}.
		\end{itemize}
	}
	$\ve{w} \gets \vmes' \cdot \H_0 + \vmes\cdot \G_1 + z \cdot \ve{x}$
	where $z\in\mathbb{F}_{2^{\pow}}$ is chosen such that $|\{i\in\ve{\phi}\mid w_i\in\F_2\}| \leq d_0-1$.\label{Step1}
	\\
	Choose $\ve{\gamma}\in\mathbb{F}_2^{l}$ such that $(\ve{\gamma}\H_0)_i = 1-{\ve{w}}_i$ for all $i\in \ve{\phi}$ for which ${\ve{w}}_i\in\mathbb{F}_2$.\label{Step2}

	\KwOut{$\ve{c}= \ve{w}+\ve{\gamma} \cdot \H_0\in \mycode{C}$
	}
\end{algorithm}
\begin{algorithm}[ht]
	\label{a14}
	\caption{Decoding}	
	\KwIn{
		\begin{itemize}
			\item $\ve{y} = \ve{c}+\ve{e} \in \mathbb{F}^n_{2^\pow}$, where $\ve{c}$ is a valid output of Algorithm~\ref{a13} and $\ve{e}$ is an error of Hamming weight at most $t$.
			\item Notions introduced in Construction~\ref{consbinary}.
		\end{itemize}
	}
	$\hat{\c} \gets$ decode $\ve{y}$ in the code $\mathcal{C}$ \\
      Obtain $\ve{a}\in\F_{2^{\pow}}^{l}, \hat{\vmes}\in\F_{2^{\pow}}^{k-l-1}, \hat{z}\in\F_{2^{\pow}}$
      such that $\hat{\c}=\ve{a}\H_0 + \hat{\vmes}\G_1 + \hat{z}\ve{x}$. \\
   Obtain $\hat{\vmes'}\in \mathcal{F}^{k-l-1}$ and $\hat{\ve{\gamma}}\in\F_2^{k-l-1}$ such that $\ve{a}=\hat{\vmes'} + \hat{\ve{\gamma}}$. \\
	\KwOut{($\hat{\vmes}, \hat{\vmes}')$}
\end{algorithm}	
\begin{thmmystyle}\label{thmbinary}
The coding scheme in Construction~\ref{consbinary} is a $2^{\pow}$-ary $(2^{\pow-1}d_0-1,1,\lfloor\frac{d-1}{2}\rfloor)$ PSMC of length $n$ and cardinality $2^{\pow(k-l-1)}2^{l(\pow-1)}$.
\end{thmmystyle}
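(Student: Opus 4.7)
The plan is to verify four things: (i) Step~\ref{Step1} of Algorithm~\ref{a13} always finds a suitable $z$; (ii) Step~\ref{Step2} always finds a suitable $\ve{\gamma}$; (iii) the resulting $\c$ lies in $\mycode{C}$ and satisfies $c_i \geq 1$ on $\ve{\phi}$ whenever $|\ve{\phi}| \leq 2^{\pow-1}d_0 - 1$; and (iv) the decoder correctly recovers $(\vmes,\vmes')$ when the error has weight at most $\lfloor(d-1)/2\rfloor$. The claimed cardinality then follows immediately from $|\mathcal{F}| = 2^{\pow-1}$.

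The heart of the argument is (i), which is a counting/averaging step. Write $\ve{w} = \ve{b} + z\ve{x}$ with $\ve{b} := \vmes'\H_0 + \vmes\G_1$; since $\ve{x}$ has Hamming weight $n$, we have $x_i \neq 0$ for all $i$, so for each $i \in \ve{\phi}$ the map $z \mapsto b_i + z x_i$ is a bijection on $\F_{2^\pow}$. Hence exactly $|\F_2| = 2$ choices of $z$ produce $w_i \in \F_2$, so
\[
\sum_{z \in \F_{2^\pow}} \bigl|\{i \in \ve{\phi} : w_i \in \F_2\}\bigr| \;=\; 2|\ve{\phi}| \;\leq\; 2^{\pow} d_0 - 2.
\]
Dividing by $|\F_{2^\pow}| = 2^\pow$ gives an average of $d_0 - 2^{1-\pow} < d_0$, so some $z$ achieves at most $d_0 - 1$ ``bad'' positions, as required.

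For (ii) and (iii), set $S := \{i \in \ve{\phi} : w_i \in \F_2\}$, so $|S| \leq d_0 - 1$. Because $\H_0$ is a parity-check matrix of an $[n,n-l,d_0]_2$ code, any $d_0 - 1$ of its columns are $\F_2$-linearly independent; equivalently, the $\F_2$-linear map $\ve{\gamma} \mapsto (\ve{\gamma}\H_0)|_S$ from $\F_2^l$ to $\F_2^S$ is surjective, so we may pick $\ve{\gamma}$ with $(\ve{\gamma}\H_0)_i = 1 - w_i$ on $S$. For $i \in S$ the output is $c_i = 1$, while for $i \in \ve{\phi}\setminus S$ we have $w_i \in \F_{2^\pow}\setminus\F_2$ and $(\ve{\gamma}\H_0)_i \in \F_2$, so $c_i = w_i + (\ve{\gamma}\H_0)_i \notin \F_2$ and in particular $c_i \neq 0$. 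Under the fixed ordering of $\F_{2^\pow}$ this is equivalent to $c_i \geq 1$. Moreover, $\c = (\vmes' + \ve{\gamma})\H_0 + \vmes\G_1 + z\ve{x}$ is a combination of rows of $\G$, so $\c \in \mycode{C}$.

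For (iv), since $\mycode{C}$ has minimum distance $d$, any decoder for $\mycode{C}$ recovers $\hat\c = \c$ whenever $\wt(\e) \leq \lfloor(d-1)/2\rfloor$. Linear independence of the rows of $\G$ (which generates a $k$-dimensional code) makes the decomposition $\hat\c = \ve{a}\H_0 + \hat\vmes\G_1 + \hat z\ve{x}$ unique over $\F_{2^\pow}$, and the choice of basis with $\beta_0 = 1$ yields the $\F_2$-direct-sum decomposition $\F_{2^\pow} = \F_2 \oplus \mathcal{F}$. Hence each coordinate of $\ve{a} = \vmes' + \ve{\gamma}$ splits uniquely into its $\mathcal{F}$-part $\hat{\vmes}'_i$ and its $\F_2$-part $\hat{\ve{\gamma}}_i$, recovering $\vmes'$ along with $\vmes$. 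The main obstacle is step (i): the tightness of the averaging bound, which demands $u \leq 2^{\pow-1}d_0 - 1$, is precisely what permits the subsequent masking via $\ve{\gamma}$ in step (ii), and thus pins down the parameter in the theorem.
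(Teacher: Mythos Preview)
Your proof is correct and follows essentially the same approach as the paper's own proof: the same averaging argument over $z\in\F_{2^{\pow}}$ for step (i), the same use of the independence of any $d_0-1$ columns of $\H_0$ for step (ii), the same case split on whether $w_i\in\F_2$ for step (iii), and the same unique-decomposition argument (using that $\G$ has full rank and that $\F_{2^{\pow}}=\F_2\oplus\mathcal{F}$) for the decoding step. Your write-up is in fact a bit more explicit in places (e.g., naming the surjectivity of $\ve{\gamma}\mapsto(\ve{\gamma}\H_0)|_S$ and the direct-sum splitting), but there is no substantive difference in method.
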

\begin{proof}	
 Let $\ve{\phi}\subset [n]$ have size $u\leq 2^{\pow-1}d_0-1$.
\\ We first show the existence of $z$ from Step 1.
For each $i\in\ve{\phi}$, we have that 
$\ve{x}_i\neq 0$, so there are exactly two elements $z\in\F_{2^{\pow}}$ such that $(\vmes' \cdot \H_0 + \vmes\cdot \G_1)_i + z\ve{x}_i\in\mathbb{F}_2$.
As a result,
	\begin{align}
\nonumber 2u= 2|\ve{\phi}| = &\mid \{(i,z)\in \ve{\phi}\times 
\mathbb{F}_{2^{\pow}} \mid (\vmes' \cdot \H_0 + \vmes\cdot \G_1)_i \\ \nonumber &+ z\ve{x}_i \in \mathbb{F}_2\} \mid.
\end{align} 
As $u < 2^{\pow-1}d_0$, there is a $z\in\mathbb{F}_{2^{\pow}}$ such that the condition in Step 1 is satisfied.
	
As $\H_0$ is the parity check matrix of a code with minimum distance $d_0$, any $d_0-1$ columns of $\H_0$ are independent, so an appropriate 
$\ve{\gamma}$ 
exists. Now we show that $c_i\neq 0$ for all $i\in \vec{\phi}$. 
	 Indeed, if $\ve{w}_i\not\in\mathbb{F}_2$, then $\ve{c}_i= \ve{w}_i + ({\ve{\gamma}}\H_0)_i \in\{\ve{w}_i,\ve{w}_i+1\}$, 
	so $\ve{c}_i\not\in\mathbb{F}_2$.
	By Step~\ref{Step2} in Algorithm~\ref{a13}, for $\ve{w}_i\in \mathbb{F}_2$, we have that $\ve{c}_i=1$. 	
	Hence, for all $i\in\vec{\phi}$, $\ve{c}_i$ is either $1$ or is in $\F_{2^\pow}\setminus \F_2$, i.e., $c_i\neq 0$.
	
	\textbf{Decoding:}
	As $\ve{c}\in\mycode{C}$, $\hat{\ve{c}}=\ve{c}$.
	As $\G$ has full rank, and 
\[ \ve{c}= (\vmes'+\ve{\gamma}) \H_0 + \vmes \G_1 + z\ve{x}, \]
it holds that $\ve{a}=\hat{\vmes'} + \hat{\ve{\gamma}}$, $\hat{\vmes}=\vmes$ and $\hat{z}=z$.
As $\hat{\vmes'}\in \mathcal{F}^{l}$ and $\hat{\ve{\gamma}}\in\mathbb{F}_2^{l}$, we can retrieve $\hat{\vmes'}=\vmes'$ 
from $\ve{a} =\hat{\vmes'} + \hat{\ve{\gamma}}$. 
\end{proof}
We show next two minor extensions on Theorem~\ref{thmbinary} for the special case that $\ve{x}$ is the all-one vector.
	\begin{prop}\label{proposition_binary}
		If $\ve{x}$ is the all-one vector in Theorem~\ref{thmbinary}, then the coding scheme in Construction~\ref{consbinary} can be modified to produce a $2^{\pow}$-ary $(2^{\pow-1}d_0-1,1,\lfloor\frac{d-1}{2}\rfloor)$ PSMC of length $n$ and cardinality $2\times 2^{\pow(k-l-1)}2^{l(\pow-1)}$.
	\end{prop}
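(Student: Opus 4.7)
The plan is to exploit the fact that when $\ve{x}=\ve{1}_n$, the set of ``bad'' $z$-values for each partially stuck position is an entire coset of $\F_2$ in $\F_{2^\pow}$, not just a generic pair; this coset structure leaves one extra bit of freedom in the choice of $z$ in Step~\ref{Step1} of Algorithm~\ref{a13}, which can be used to carry an additional message bit.

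In detail, for each $i\in\ve{\phi}$, since $\ve{x}_i=1$ we have $w_i = (\vmes'\H_0+\vmes\G_1)_i + z$, so $w_i\in\F_2$ exactly when $z$ lies in the coset $B_i := (\vmes'\H_0+\vmes\G_1)_i + \F_2$. As cosets of $\F_2$ inside $\F_{2^\pow}$ are disjoint or identical, the count $|\{i\in\ve{\phi}: w_i\in\F_2\}|$ depends only on which coset $C$ contains $z$, and equals $n_C := |\{i\in\ve{\phi}: B_i = C\}|$. Summing over the $2^{\pow-1}$ cosets gives $\sum_C n_C = u \leq 2^{\pow-1}d_0-1 < 2^{\pow-1}d_0$, so averaging produces a coset $C^\star$ with $n_{C^\star}\leq d_0-1$. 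Writing $C^\star = a + \F_2$ with $a\in\mathcal{F}$, either element of $C^\star$ is a legitimate choice of $z$ for Step~\ref{Step1}.

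The encoder modification is then to take the message to be $(\mes'',\vmes',\vmes)\in\F_2\times\mathcal{F}^l\times\F_{2^\pow}^{k-l-1}$, find $C^\star = a+\F_2$ as above, and set $z := a+\mes''$; the remainder of Algorithm~\ref{a13} is unchanged and still produces a masking codeword by the argument of Theorem~\ref{thmbinary}. Algorithm~\ref{a14} is extended by one line: after recovering $\hat z$ from the still-unique linear decomposition of $\hat{\ve{c}}$ (as $\G$ has full rank), decompose $\hat z = \hat a+\hat{\mes}''$ with $\hat a\in\mathcal{F}$ and $\hat{\mes}''\in\F_2$, and output $\hat{\mes}''$ as the extra bit. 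Injectivity for each fixed $\ve{\phi}$ is immediate because $\ve{x}=\ve{1}_n$ is nonzero and distinct values of $z$ yield distinct codewords, so the overall cardinality is $2\cdot 2^{\pow(k-l-1)}\cdot 2^{l(\pow-1)}$, as claimed.

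The main obstacle is recognising and exploiting this coset structure: without $\ve{x}=\ve{1}_n$ the two bad $z$-values for a given index need not differ by an element of $\F_2$, so the averaging argument is confined to pairs of elements rather than cosets and the ``free'' bit disappears. Once the coset observation is in place, the remainder of the proof is a routine rewiring of one message bit through $z$ and a reapplication of the masking and error-correction bookkeeping already carried out in Theorem~\ref{thmbinary}.
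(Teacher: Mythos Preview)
Your proposal is correct and takes essentially the same approach as the paper: both exploit that when $\ve{x}=\ve{1}$, the number of binary entries in $\ve{w}$ depends only on the coset $z+\F_2$ (the paper phrases this as ``$z$ works iff $z+1$ works''), so a good $z_0\in\mathcal{F}$ exists and the remaining bit $\zeta\in\F_2$ can carry an extra message bit, recoverable because $\F_{2^\pow}=\mathcal{F}\oplus\F_2$. The only difference is presentational: you redo the averaging over cosets explicitly, whereas the paper invokes the existence of a good $z$ from Theorem~\ref{thmbinary} and then observes the $z\mapsto z+1$ invariance.
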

\begin{proof}
	For $\ve{x}=\ve{1}$,
	if $\vmes' \H_0+\vmes\G_1+z\ve{1}$ has at most $d_0-1$ binary entries, then so has $\vmes' \H_0+\vmes\G_1
	+(z+1)\ve{1}$.
	Hence, there is a $z_0\in\mathcal{F}$ such that $\ve{w}+z_0\ve{1}$ has at most $d_0-1$ binary entries, and we
	can encode
	\[ \ve{w}=\vmes' \H_0 + \vmes\G_1 + (z_0+\zeta)\ve{1}, \] 
	where $\zeta\in\{0,1\}$ is an additional message bit so that the cardinality from Theorem~\ref{thmbinary} is doubled. 
As $z_0\in\mathcal{F}$ and $\zeta\in\{0,1\}$, the pair $(z_0,\zeta)$ can be retrieved from $z_0+\zeta$.
\end{proof}
\begin{customconst}{3.A}[Extension of Construction~\ref{consbinary}]\label{Extension_Const3_more_u}
Let $\G$ be a $k\times n$ generator matrix of an $[n,k,d]$$_{2^\pow}$ code $\mycode{C}$ of the form
\begin{equation*}
	\G =\left[ \begin{matrix} \H_0 \cr \G_1 \cr \ve{1} \end{matrix} \right] \mbox{ , where }
\end{equation*}
1) $\ve{1}$ is the all-one vector of length $n$ \\
2) $\G_1\in \F_q^{k-l-1\times n}$ \\
3) $ \begin{bmatrix} \ve{H}_0 \\ \ve{1} \end{bmatrix}$ is the parity-check matrix of an 
$[n,n-l-1,d_e]_2$ code.
\end{customconst}
\begin{customthm}{4.A}\label{Extension_Th4_more_u}
If the conditions of Construction~\ref{Extension_Const3_more_u} hold, then Construction~\ref{consbinary} can be modified to produce a 
	 $2^{\pow}$-ary $(2^{\mu-1}d_e,1,\lfloor\frac{d-1}{2}\rfloor)$ PSMC of length $n$ and cardinality $2^{\pow(k-l-1)}2^{l(\pow-1)}$.
\end{customthm}
\begin{proof}
In Step 1 of Algorithm~\ref{a13}, the encoder determines $z$ such that
$\mid \{i\in \ve{\phi} \mid w_i\in \mathbb{F}_2\}\mid \;\;\leq d_e-1$; the existence of such a $z$
is proved as in the proof of Theorem~\ref{consbinary}.
Next, the encoder determines $\ve{\gamma}\in \{0,1\}^{l}$ and $\gamma_0\in\{0,1\}$ such that
\[ \ve{v}=
( \ve{\gamma},\gamma_0) \cdot \begin{bmatrix} \ve{H}_0 \\ \ve{1} \end{bmatrix}
\]
is such that $v_i=1-w_i$ for all $i\in\vec{\phi}$ for which $w_i\in\{0,1\}$.
The encoding output $\ve{c}=\ve{v}+\ve{w}=
(\vmes'+\ve{\gamma})\H_0 + \vmes\G_1 + (z_0+\gamma_0)\ve{1}$ thus is in $\mycode{C}$ and 
has no zeros in the positions of $\vec{\phi}$.
	
In decoding, from $\ve{c}$ both $(\vmes'+\ve{\gamma})$ and $\vmes$ can be retrieved, and so, as $\vmes'\in \mathcal{F}^{l}$ and $\ve{\gamma}\in\{0,1\}^{l}$,
 $\vmes'$ can be retrieved as well.
\end{proof}
Proposition~\ref{proposition_binary} doubles the size of the PSMC
as compared
to Theorem~\ref{thmbinary} (by using $\zeta$ as additional message bit), while masking the same number of partially stuck-at-errors and correcting the same number of substitution errors.
Theorem~\ref{Extension_Th4_more_u}, as compared to Theorem~\ref{thmbinary}, results in a PSMC of the same size and error correction capabilities,
but increases the number of cells that can be masked from $2^{\pow-1}d_0-1$ to $2^{\pow-1}d_e-1$. If $d_0$ is odd, then this increment is at least $2^{\pow-1}$.

Now, we show an example using nested BCH codes, allowing to store more symbols compared to Theorem~\ref{thmbinary} for the same code parameters. 

\begin{examplex}\label{ex5}
Let $\alpha$ be a primitive $15^{th}$ root of unity in $\mathbb{F}_{16}$, and let $\mycode{C}$ be the $[15,12,3]_4$
BCH code with zeros $\alpha^5$, $\alpha^6$ and $\alpha^9$. 
Let the $[15,4]_2$ subcode $\mycode{C}^\perp_0$ of $\mycode{C}$ be defined as
\[ \mycode{C}^\perp_0 = \{(x_0,\ldots ,x_{14})\in \mathbb{F}_2^{15} \mid 
\sum_{i=0}^{14} x_i\alpha^{ij}=0 \] 
\[\mbox{ for } j\in\{0,\ldots ,14\} \setminus \{7,11,13,14\}\} . \]
As $\ve{1}\in \mycode{C}\backslash \mycode{C}^\perp_0$, the code $\mycode{C}$ has a generator matrix
of the form given in Construction~\ref{consbinary}, namely
	$$\G^\prime =
\begin{bmatrix} \textcolor{red}{\ve{H}_0} \\\ve{G}_1\\ \textcolor{blue}{\ve{x}} \end{bmatrix},$$
 where 
 $\H_0$ is a generator matrix for $\mycode{C}^\perp_0$ and $\G_1$ has $12-4-1=7$ rows. 
The code $\mycode{C}_0=(\mycode{C}^{\perp}_0)^{\perp}$ is equivalent to the $[15,11]_2$ BCH code with
zeros $\alpha^7,\alpha^{11},\alpha^{13}$ and $\alpha^{14}$. As this BCH code has two consecutive zeros,
its minimum distance (and hence the minimum distance of $\mycode{C}_0$) is at least 3. 

We stipulate that $\alpha^4=\alpha+1$ to obtain explicit $\ve{G}^\prime$ as below,

\scalebox{0.75}{$
	\G^\prime=\left(\begin{array}{rrrrrrrrrrrrrrr}
		\textcolor{red}1 & \textcolor{red}0 & \textcolor{red}0 & \textcolor{red}1 & \textcolor{red}1 & \textcolor{red}0 & \textcolor{red}1 & \textcolor{red}0 & \textcolor{red}1 & \textcolor{red}1 & \textcolor{red}1 & \textcolor{red}1 & \textcolor{red}0 & \textcolor{red}0 & \textcolor{red}0 \\
		\textcolor{red}0 & \textcolor{red}1 & \textcolor{red}0 & \textcolor{red}0 & \textcolor{red}1 & \textcolor{red}1 & \textcolor{red}0 & \textcolor{red}1 & \textcolor{red}0 & \textcolor{red}1 & \textcolor{red}1 & \textcolor{red}1 & \textcolor{red}1 & \textcolor{red}0 & \textcolor{red}0 \\
		\textcolor{red}0 & \textcolor{red}0 & \textcolor{red}1 & \textcolor{red}0 & \textcolor{red}0 & \textcolor{red}1 & \textcolor{red}1 & \textcolor{red}0 & \textcolor{red}1 & \textcolor{red}0 & \textcolor{red}1 & \textcolor{red}1 & \textcolor{red}1 & \textcolor{red}1 & \textcolor{red}0 \\
		\textcolor{red}0 & \textcolor{red}0 & \textcolor{red}0 &\textcolor{red}1 & \textcolor{red}0 & \textcolor{red}0 & \textcolor{red}1 & \textcolor{red}1 & \textcolor{red}0 & \textcolor{red}1 & \textcolor{red}0 & \textcolor{red}1 & \textcolor{red}1 & \textcolor{red}1 & \textcolor{red}1 \\
      	 \omega & \omega & 0 & 1 & 0 & 0 & 0 & 0 & 0 & 0 & 0 & 0 & 0 & 0 & 0 \\
      	0 & \omega & \omega & 0 & 1 & 0 & 0 & 0 & 0 & 0 & 0 & 0 & 0 & 0 & 0 \\ 
      	0 & 0 & \omega & \omega & 0 & 1 & 0 & 0 & 0 & 0 & 0 & 0 & 0 & 0 & 0 \\
      	0 & 0 & 0 & \omega & \omega & 0 & 1 & 0 & 0 & 0 & 0 & 0 & 0 & 0 & 0 \\
      	0 & 0 & 0 & 0 & \omega & \omega & 0 & 1 & 0 & 0 & 0 & 0 & 0 & 0 & 0 \\
      	0 & 0 & 0 & 0 & 0 & \omega & \omega & 0 & 1 & 0 & 0 & 0 & 0 & 0 & 0 \\
      	0 & 0 & 0 & 0 & 0 & 0 & \omega & \omega & 0 & 1 & 0 & 0 & 0 & 0 & 0 \\
		\textcolor{blue}1 & \textcolor{blue}1 & \textcolor{blue}1 & \textcolor{blue}1 & \textcolor{blue}1 & \textcolor{blue}1 & \textcolor{blue}1 & \textcolor{blue}1 & \textcolor{blue}1 & \textcolor{blue}1 & \textcolor{blue}1 & \textcolor{blue}1 & \textcolor{blue}1 & \textcolor{blue}1 & \textcolor{blue}1
	\end{array}\right) 	
	$}, \\
where $\F_4$ has elements $\{0,1,\omega,\omega^2\}$ with $\omega=\alpha^5$.
Note that the top row of $\H_0$ corresponds to the generator polynomial for $\mycode{C}_0^{\perp}$, and the top row of
$\G_1$ corresponds to the coefficients of $(x+\alpha^5)(x+\alpha^6)(x+\alpha^9)$ which is the
generator polynomial of $\mycode{C}$. 
Application of 
	Proposition~\ref{proposition_binary} 
yields a $(5,1,1)$ PSMC over $\mathbb{F}_{2^2}$ of length $15$ and size $2 \times 4^{7}2^{4}= 2^{19}$, whereas application of Construction~\ref{Extension_Const3_more_u} gives a $(7,1,1)$ PSMC over $\mathbb{F}_{2^2}$ with cardinality $2^{2(7+4)-4}=2^{18}$.
Note that application of Construction~\ref{consbinary} yields a $(5,1,1)$ PSMC over $\mathbb{F}_{2^\pow}^2$ of length $15$ and size $4^72^4=2^{18}$.

Finally, we note that application of Theorem~\ref{thm-ext} to $\mycode{C}$ yields a $(4,1,1)$ PSMCs of size $4^{8}$, which has worse parameters than the three PSMC mentioned before.
\qquad \qquad\QEDA
\end{examplex}
Example~\ref{ex5} clearly shows that for the same code parameters, Construction~\ref{consbinary}, Proposition~\ref{proposition_binary} and Construction~\ref{Extension_Const3_more_u} significantly improve upon Construction~\ref{cons-ext}.
\begin{rem}\label{rem3}
	For masking only, choose $n-k =0$ in Construction~\ref{consbinary} and therefore,
	\begin{align*}
		\ve{G}_1= \begin{bmatrix}\ve{0}_{(n-l-1) \times (l+1)} & \ve{I}_{(n-l-1)}
			&\ve{0}_{(n-l-1)\times1} \end{bmatrix},
	\end{align*}
	and we can store $n-l-1$ information symbols.
	Thus, Proposition~\ref{proposition_binary} for masking only improves upon \cite[Construction~5]{wachterzeh2016codes}.
	For example if $l=4$, then $n-l-1= 10$ in \cite[Example~7]{wachterzeh2016codes} and the size of the
	code is $2^{2(n-l-1)}\cdot 2^l= 2^{24} $, while $n-l-1= 10$ in Proposition~\ref{proposition_binary} 
	for $\ve{x}=\ve{1}$ and the cardinality is $2\cdot 2^{2n-l-1} = 2^{25}$. 
\end{rem}
We summarize in Table~\ref{table1} our constructions and compare them with some of the previous works, namely with the construction for masking classical stuck cells in~\cite{heegard1983partitioned} and constructions for partially stuck cells without errors in~\cite{wachterzeh2016codes}. 
	\begin{table*}[h]
	\caption{Comparison between \cite {heegard1983partitioned}, \cite {wachterzeh2016codes}, and this work. We denote by $d$ the minimum distance required to correct errors and $d_0$ to mask (partially) stuck cells. A positive integer $\pow > 1$ is defined in Construction~\ref{consbinary}. Other Notation: See Section~\ref{ssec:notation}.
	}
	\begin{center}
			\def\arraystretch{1.5}
			\begin{tabular}{ |p{3.5cm}| p{3cm} |p{2.5 cm} | p{1.8 cm} | p{3 cm} |p{2.3 cm} |}
				\hline
				&\textbf{(Partially) Stuck Cells $u$} & \textbf{ Distance $d_0$} & \textbf{Errors $\lfloor \frac{d-1}{2} \rfloor$} & \textbf{Redundancy}& \textbf{Cardinality} \\ \hline
				Construction~\ref{cons:matrix_construction:u<q}& $\leq q-1$ & irrelevant & Yes & $n-k+1$ & $q^{k-1}$ \\ \hline
				
				Construction~\ref{cons-ext}& 
				$\leq n$ & $\geq u-q+3$ & Yes
				& 
				$n-k+l$ & $q^{k-l}$
				\\ \hline
				Construction~\ref{consbinary}& $\leq n$ & $\geq \lfloor\frac{2u}{2^\pow}\rfloor+1$ & Yes & 
				$n-k+1+\frac{l}{\pow}$ & $q^{k-1-\frac{l}{\pow}}$
				\\ \hline
				Proposition~\ref{proposition_binary}& $\leq n$ & $ \geq \lfloor\frac{2u}{2^\pow}\rfloor+1$ & Yes & 
				$n-k+1+\frac{l-1}{\pow}$& $q^{k-1+\frac{1-l}{\pow}}$
				\\ \hline
				Construction~\ref{Extension_Const3_more_u}& 
				$\leq n$ & $ \geq \lfloor\frac{2u}{2^\pow}\rfloor$ if $d_0$ is odd
				& Yes
				& 
				 $n-k+1+\frac{l}{\pow}$& $2^{2(k-1-\frac{l}{\pow})}$
				\\ \hline
				\cite[Construction~2]{wachterzeh2016codes}
				 & $\leq q-1$& irrelevant & No 
				& $1$ (since $n-k =0$) & $q^{n-1}$
				\\ \hline
				 \cite[Construction~4]{wachterzeh2016codes}& $\leq n$ & $\geq u-q+3$& 
				 No	
				& $l$ (since $n-k =0$) & $q^{n-l}$
				\\ \hline	
				 \cite[Construction~5]{wachterzeh2016codes}
				 & $\leq n$ & $\geq \lfloor\frac{2u}{q}\rfloor+1$ &
				 No
				& $1+l(1-\log_{q} \lfloor\frac{q}{2}\rfloor)$ (since $n-k =0$), and
			$1+\frac{l}{\pow}$ (for $q=2^\pow$) & $q^{n-1-l(1-\log_{q} \lfloor\frac{q}{2}\rfloor)}$, and $2^{2(n-1-\frac{l}{\pow})}$ (for $q=2^\pow$)
								\\ \hline	
				Proposition~\ref{proposition_binary} (masking only)& $\leq n$ & $\geq \lfloor\frac{2u}{2^\pow}\rfloor+1$ & 
				No
				& 
				$1+\frac{l-1}{\pow}$ (since $n-k =0$) & $2^{2(n-1-\frac{l-1}{\pow})}$
				\\ \hline
				\cite[Theorem~1]{heegard1983partitioned}&$\leq n$ & $\geq u+1$& 
				Yes
				& $n-k+l$ & $q^{k-l}$
				\\ \hline	
			\end{tabular}\label{table1}
	\end{center}
\end{table*}
\section{Generalization of the Constructions to Arbitrary Partially Defective Levels}\label{arbitrary_s_levels}
So far, we have considered the \emph{important} case for $s_i =1 \mbox{ for all } i \in \ve{\phi}$.
In this section, we present error correction and masking codes constructions that can mask partially stuck cells at any level $\ve{s}_i$ and correct errors additionally.
\subsection{Generalization of Theorem~\ref{thm1}}
Here, we give only the main theorem without adding the exact encoding and decoding processes because it follows directly as a consequence of Construction~\ref{cons:matrix_construction:u<q}.
\begin{thmmystyle}[Generalization of Theorem~\ref{thm1}]\label{arbitrary_s_for_u_leq_q}
	Let $\Sigma= \{ \ve{s}\in \Fq^n \mid \sum_{i=0}^{n-1} s_i\leq q-1\}$. 
	Assume there is an $[n, k, d]_q$ code $\mycode{C}$ of a generator matrix as specified in Theorem~\ref{thm1}.
Then there exists a $(\Sigma,\lfloor\frac{d-1}{2}\rfloor)$ PSMC over $\mathbb{F}_q$ of length $n$ and cardinality $q^{k-1}$.
\end{thmmystyle}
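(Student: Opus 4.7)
The plan is to mirror the proof of Theorem~\ref{thm1} almost verbatim, keeping the generator matrix structure, Algorithm~\ref{tab:a1}, and Algorithm~\ref{tab:a2} essentially intact, with only the criterion for selecting the masking scalar $v$ in Step~2 of Algorithm~\ref{tab:a1} adapted to cope with non-uniform levels $s_i$. After computing $\ve{w} = \vmes \cdot \ve{G}_1$, the encoder will seek $v \in \Fq$ satisfying $w_i - v \geq s_i$ for every $i \in [n]$; the output $\ve{c} = \ve{w} - v \cdot \ve{G}_0$ then satisfies $c_i = w_i - v \geq s_i$ in every coordinate, because $\ve{G}_0$ is the all-one row.

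The main step is to show such a $v$ always exists under the hypothesis $\sum_{i=0}^{n-1} s_i \leq q-1$. Using the integer identification of $\Fq$ fixed in Section~\ref{ssec:notation}, the set of field elements strictly smaller than $s_i$ has cardinality exactly $s_i$; since the map $v \mapsto w_i - v$ is a bijection of $\Fq$, the forbidden set
\[ B_i \;=\; \{\, v \in \Fq : w_i - v < s_i \,\} \]
has $|B_i| = s_i$ (and is empty whenever $s_i = 0$). A union bound then yields
\[ \Bigl| \bigcup_{i=0}^{n-1} B_i \Bigr| \;\leq\; \sum_{i=0}^{n-1} s_i \;\leq\; q - 1 \;<\; q, \]
so $\Fq \setminus \bigcup_i B_i$ is non-empty, and any $v$ in this set produces a valid codeword.

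The remainder of the argument is identical to Theorem~\ref{thm1}: for an error $\ve{e}$ with $\wt(\ve{e}) \leq \lfloor (d-1)/2 \rfloor$, the decoder of $\mycode{C}$ applied to $\ve{y} = \ve{c} + \ve{e}$ recovers $\ve{c}$; by the special form of $\ve{G}$, the first coordinate of $\ve{c}$ equals $-v$, so $\hat{v}$ is read off directly; subtracting $\hat{v} \cdot \ve{G}_0$ then recovers $\ve{w}$, whose last $k-1$ entries are $\vmes$. The cardinality $q^{k-1}$ follows since $\vmes$ ranges freely over $\mathbb{F}_q^{k-1}$.

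I expect no serious obstacle: the only mildly subtle point is the clean accounting $|B_i| = s_i$ via the fixed total order on $\Fq$, which is the natural generalization that replaces the simple count ``$|\ve{\phi}| < q$'' used in Theorem~\ref{thm1}. No modification to the decoder is needed, and the constraint $\sum_i s_i \leq q-1$ is tight in the same sense that $|\ve{\phi}| \leq q-1$ is tight in the stuck-at-$1$ case.
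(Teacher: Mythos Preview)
Your proposal is correct and matches the paper's own proof essentially verbatim: both modify Step~2 of Algorithm~\ref{tab:a1} to seek $v$ with $w_i - v \geq s_i$ for all $i$, argue its existence via the count $|B_i| = s_i$ and the union bound $\sum_i s_i \leq q-1 < q$, and then defer error correction and message recovery to Algorithms~\ref{tab:a1} and~\ref{tab:a2} exactly as in Theorem~\ref{thm1}. One cosmetic slip: the message $\vmes$ sits in coordinates $1,\ldots,k-1$ of $\ve{w}$ (per the structure of $\ve{G}_1$), not in the ``last $k-1$ entries''.
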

\begin{proof}
	We follow the generalization for the masking partially stuck at any arbitrary levels in \cite[Theorem~10]{wachterzeh2016codes}. Hence, for $\ve{s} \in \Sigma$, we modify Step~2 in Algorithm~\ref{tab:a1} such that
$w_i-v\geq s_i$ for all $i\in [n]$. 
Such a $v$ exists as each cell partially stuck at level $s_i$ excludes $s_i$ values for $v$, and $\sum_{i=0}^{n-1} s_i < q$.
The rest of the encoding steps and the decoding process are analogous to Algorithms~\ref{tab:a1} and \ref{tab:a2}. As the output from the encoding process is a codeword, we can correct $\lfloor\frac{d-1}{2}\rfloor$ errors.
\end{proof}
\subsection{Generalization of Construction~\ref{cons-ext}}
In the following, we generalize Construction~\ref{cons-ext} to arbitrary $\ve{s}$ stuck levels.
\begin{prop}\label{modified_version_of_Theorem3}
Let	
\begin{align*}
	\Sigma = & \left\{ \ve{s}\in \F_q^n \; \Big| \; \min \bigg\{ \sum_{i\in \Psi} s_i \; \big| \; \Psi\subseteq[n] , \mid \! \Psi \! \mid =n-d_0+2\bigg\} \right. \\
			& \quad \leq q-1 \Bigg\}
\end{align*}
	then the coding scheme in Construction~\ref{cons-ext} can be modified to produce a
	$(\Sigma,t)$ PSMC of length $n$ and size $q^{k-l}$.
\end{prop}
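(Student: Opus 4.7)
The plan is to mirror the proof of Theorem~\ref{thm-ext} while upgrading the masking step to handle weighted stuck levels, in the same way Theorem~\ref{arbitrary_s_for_u_leq_q} extended Theorem~\ref{thm1}. The encoder and decoder remain those of Algorithms~\ref{a5} and~\ref{a6}; only the selection rule for $\ve{z}$ in Step~2 of Algorithm~\ref{a5} changes.

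Given $\ve{s}\in\Sigma$, I would first fix a subset $\Psi\subseteq[n]$ with $|\Psi|=n-d_0+2$ attaining the minimum in the definition of $\Sigma$, so that $\sum_{i\in\Psi} s_i\leq q-1$, and set $\bar{\Psi}=[n]\setminus\Psi$ with $|\bar{\Psi}|=d_0-2$. These two sets play the same distinct roles as in Theorem~\ref{thm-ext}: the positions in $\Psi$ are handled by a common shift value in the spirit of Theorem~\ref{arbitrary_s_for_u_leq_q}, while the $d_0-2$ positions in $\bar{\Psi}$ are corrected through the remaining freedom in $\ve{z}\in\F_q^l$, using that any $d_0-1$ columns of $\H_0$ are linearly independent.

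The core counting step becomes the following. For each position $i\in\Psi$, the number of shift values $v\in\F_q$ for which $w_i+v$ fails to satisfy $w_i+v\geq s_i$ is exactly $s_i$; summing over $i\in\Psi$, at most $\sum_{i\in\Psi} s_i\leq q-1$ values of $v$ are forbidden in total, so by pigeonhole there is at least one admissible $v^{\star}$ making all positions in $\Psi$ simultaneously admissible. For the complementary set $\bar{\Psi}$, the independence of any $d_0-1$ columns of $\H_0$ supplies enough freedom in $\ve{z}$ to set the values on $\bar{\Psi}$ to any desired pattern in $\F_q^{\bar{\Psi}}$. Mirroring the argument of \cite[Theorems~7 and~8]{wachterzeh2016codes} invoked in Theorem~\ref{thm-ext}, the single-shift constraint on $\Psi$ and the arbitrary correction on $\bar{\Psi}$ can be realized simultaneously by one vector $\ve{z}\in\F_q^l$; the output $\ve{c}\in\mycode{C}$ then satisfies $\ve{c}\geq\ve{s}$ coordinate-wise. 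Decoding with Algorithm~\ref{a6} recovers $\vmes$ after correcting up to $t=\lfloor(d-1)/2\rfloor$ substitution errors exactly as in Theorem~\ref{thm-ext}, and the cardinality $q^{k-l}$ is unchanged since the message space is unchanged.

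The main obstacle is verifying that the combinatorial step of \cite[Theorem~8]{wachterzeh2016codes}, used as a black box in Theorem~\ref{thm-ext}, transfers to the weighted setting. The critical substitution is to replace the uniform cost ``one forbidden value per partially stuck cell'' with the weighted cost ``$s_i$ forbidden values at cell $i$''. Once this replacement is made, the pigeonhole and linear-independence steps go through with $\sum_{i\in\Psi} s_i$ in place of $|\Psi|$, and the remainder of the proof carries over unchanged.
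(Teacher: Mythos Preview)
Your high-level decomposition into $\Psi$ and $\bar{\Psi}=[n]\setminus\Psi$ and the weighted-forbidden-value idea (each cell $i$ forbids $s_i$ values rather than one) are exactly right and match the paper. The gap is in your ``core counting step'': you find a \emph{single} element $v^{\star}\in\F_q$ with $w_i+v^{\star}\geq s_i$ for all $i\in\Psi$ and then assert that this uniform shift on $\Psi$, together with prescribed values on $\bar{\Psi}$, can be realized as $\ve{z}\H_0$ for some $\ve{z}\in\F_q^l$. In general it cannot: requiring $(\ve{z}\H_0)_i$ to equal a common constant on $|\Psi|=n-d_0+2$ coordinates and prescribed values on $d_0-2$ more coordinates imposes $n$ linear constraints on the $l$-dimensional vector $\ve{z}$, and nothing in Construction~\ref{cons-ext} guarantees the row space of $\H_0$ contains such a vector. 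This is also not what \cite[Theorems~7--8]{wachterzeh2016codes} do; they never produce a single global shift.

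The paper's (constructive) proof avoids this by first bringing $\H_0$ to echelon form $\ve{T}\H_0=\Y=\begin{bmatrix} I_{d_0-2} & \A\\ \ve{0} & \B\end{bmatrix}$. The upper-triangular structure decouples the two parts: coefficients $\zeta_0,\dots,\zeta_{d_0-3}$ set the $d_0-2$ positions of $\bar{\Psi}$ exactly, and subsequent choices $\zeta_{d_0-2},\dots,\zeta_{l-1}$ never touch those positions again. For $\Psi$ the argument is \emph{block-wise}, not a single shift: columns of $\Y$ are partitioned into sets $L_{d_0-2},\dots,L_{l-1}$ according to their lowest nonzero row, and for each block $L_i$ a separate $\zeta_i$ is chosen. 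The pigeonhole bound you need there is $\sum_{j\in L_i} s_j\leq\sum_{j\in\Psi} s_j\leq q-1$, applied once per block, and inductively one checks that later choices do not spoil earlier blocks because $Y_{i,j}=0$ for $j$ in earlier $L_k$. Your final paragraph (``replace the uniform cost by $s_i$ forbidden values and rerun \cite[Theorem~8]{wachterzeh2016codes}'') is in fact the correct route, but it is not the single-$v^{\star}$ argument you spell out above; you need the per-block version and the echelon decoupling to make it go through.
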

\begin{proof}
 To avoid cumbersome notation, we assume without loss of generality that 
	$s_0\geq s_1 \geq \cdots \geq s_{n-1}$.
	As the $d_0-2$ leftmost columns of $\H_0$ are independent, there
	is an invertible $\ve{T}\in\F_q^{l\times l}$ such that the matrix $\Y=\ve{T}\H_0$ has the form
	\[ \Y = \begin{bmatrix} I_{d_0-2} & \A \\ \0 & \B 
	\end{bmatrix}, \]
	where $I_{d_0-2}$ is the identity matrix of size $d_0-2$, $\ve{0}$ denotes the $(l-d_0+2)\times (d_0-2)$ all-zero matrix,
	$\ve{A}\in \F_q^{(d_0-2)\times (n-d_0+2)}$ and $\ve{B}\in\F_q^{(l-d_0+2)\times (n-d_0+2)}$.
	As $\ve{T}$ is invertible, and any $d_0-1$ columns of $\H_0$ are independent, any $d_0-1$ columns of $\Y$ are 
	independent as well.
	
	For $0\leq i\leq l-1$, we define 
	\begin{equation}\label{partial_stuck_portion_of_phi}
		L_i = \{ j \in [n] \mid Y_{i,j}\neq 0 \mbox{ and } Y_{m,j}=0 \mbox{ for } m>i\} .
	\end{equation}
	Clearly, $L_0,\ldots , L_{l-1}$ are pairwise disjoint. 
	Moreover, for each $j\in \{d_0-2,d_0-1,\ldots ,n-1\}$, 
	 column $j$ of $\Y$ is independent from the $(d_0-2)$ 
	leftmost columns
	of $\Y$, and so there is an $i\geq d_0-2$ 
	such that $Y_{i,j}\neq 0$. Consequently,
	\begin{equation}\label{eq:Liexhausts}
		\bigcup_{i=d_0-2}^{l-1} L_i = \{d_0-2,\ldots , n-1\}.
	\end{equation}
	By combining (\ref{eq:Liexhausts}) and the form of $\Y$, we infer that
   \begin{equation}\label{eq:small_k}
       L_k=\{k\} \mbox{ for all } k\in [d_0-2] .
\end{equation}
Let $\ve{w}\in\Fq^n$ be the vector to be masked, i.e. the vector after Step~1 in Algorithm~\ref{a5}. 
The encoder successively determines the coefficients $\zeta_0,\ldots ,\zeta_{l-1}$ of $\ve{\zeta}\in\F_q^l$ such that
$\ve{w}+\ve{\zeta}\Y \geq \ve{s}$, as follows. \\
	For $j\in [d_0-2]$, the encoder sets $\zeta_j=s_j-w_j$. \\
	Now let $d_0-2\leq i \leq l-1$ and assume that $\zeta_0, \ldots, \zeta_{i-1}$ have been obtained such that
	\begin{equation}\label{eq:inductionstep}
		w_j + \sum_{k=0}^{i-1} \zeta_k Y_{k,j} \geq s_j \mbox{ for all } 
		j\in \bigcup_{k=d_0-2}^{i-1} L_k.
	\end{equation}
	It follows from combination of (\ref{eq:small_k}) and the choice of $\zeta_0,\ldots \zeta_{d_0-3}$ that (\ref{eq:inductionstep}) is satisfied for $i=d_0-2$.

	For each $j\in L_i$, we define $F_j$ as
	\[ F_j = \bigg\{ x\in \F_q \; \Big| \; w_j + \sum_{k=0}^{i-1} \zeta_k Y_{k,j} + xY_{i,j} < s_j \bigg\} .\]
	Clearly, $|F_j|=s_j$ as $Y_{i,j}\neq 0$, and so
	\[ \Big|\bigcup_{j\in L_i} F_j\Big| \leq \sum_{j\in L_i} \big|F_j\big| = \sum_{j\in L_i} s_j \leq 
	\sum_{j=d_0-2}^{n-1} s_j \leq q-1, \]
	where the last inequality follows from the assumption of $\Sigma$ in the proposition statement
	and the ordering of the components of $\ve{s}$.
	Hence, $\bigcup_{j\in L_i} F_j \neq \F_q$.
	The encoder chooses $\zeta_i\in \F_q\setminus \bigcup_{j\in L_i} F_j$.
	We claim that
	\begin{equation}\label{eq:Fj}
		w_j + \sum_{k=0}^i \zeta_k Y_{k,j} \geq s_j \mbox{ for all } j\in \bigcup_{k=0}^{i} L_k.
	\end{equation}
	For $j\in L_i$, (\ref{eq:Fj}) follows from the definition of $F_j$.
	For $j\in \bigcup_{k=0}^{i-1} L_k$, (\ref{eq:Fj}) follows from (\ref{eq:inductionstep}) and the fact that $Y_{i,j}=0$.
	
	By using induction on $i$, we infer that
	\begin{equation}\label{eq:all_cell_masked}
	w_j + \sum_{k=0}^{l-1} \zeta_k Y_{k,j} \geq s_j \mbox{ for all } j\in \cup_{k=0}^{l-1}L_k = [n].
\end{equation}
	That is, with
	$\ve{\zeta}=(\zeta_0,\ldots ,\zeta_{l-1})$, we have that $\ve{w}+\ve{\zeta}\Y \geq \ve{s}$.
	As $\Y=\ve{T}\H_0$, it follows that $\ve{z}:=\ve{\zeta}\ve{T}$ is such that
	\[ \ve{w} + \ve{z}\H_0 \geq \ve{s}. \]	
The decoding process remains as in Algorithm~\ref{a6}.
\end{proof}
We give an alternative \emph{non-constructive} proof for Proposition~\ref{modified_version_of_Theorem3} in Appendix~\ref{alternative_Proof_of_Proposition_2}. 
 \begin{rem}\label{remark_two_u_groups}
The proof of Proposition~\ref{modified_version_of_Theorem3} shows that $(d_0-2)$ cells can be set to any desired value,
while the remaining $(n-d_0+2)$ cells can be made to satisfy the partial stuck-at conditions, provided that the sum of the 
stuck-at levels in these ($n-d_0+2$) cells is less than $q$.
\end{rem}
\begin{cor} [Generalization of Theorem~\ref{thm-ext}]
	\label{arbitrary_s_for_u_geq_q}
   	Let $s\in\F_q$ and let
$\Sigma=\{ \ve{s}\in\F_q^n\mid \wt(\ve{s})\leq d_0+\lceil \frac{q}{s}\rceil - 3 \mbox{ and}
  \max \{s_i\mid i\in [n]\leq s\}.$ 
	The coding scheme in Construction~\ref{cons-ext} is a $(\Sigma,t)$ PSMC
	scheme of length $n$ and size $q^{k-l}$.
\end{cor}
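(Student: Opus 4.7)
The plan is to derive the corollary as a direct consequence of Proposition~\ref{modified_version_of_Theorem3} by verifying that every $\ve{s}$ satisfying the hypotheses here also belongs to the set $\Sigma$ appearing there. Once that inclusion is established, the masking guarantee transfers verbatim, and the error correction is inherited from Construction~\ref{cons-ext} exactly as in Theorem~\ref{thm-ext}.

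First I would fix $\ve{s}\in\F_q^n$ with $\wt(\ve{s})\leq d_0+\lceil q/s\rceil-3$ and $\max_i s_i\leq s$, and, after reindexing (as in the proof of Proposition~\ref{modified_version_of_Theorem3}), assume $s_0\geq s_1\geq\cdots\geq s_{n-1}$. I would then take $\Psi=\{d_0-2,d_0-1,\dots,n-1\}$, so that $|\Psi|=n-d_0+2$ and $\Psi$ indexes the $n-d_0+2$ smallest entries of $\ve{s}$.

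Next, since at most $d_0-2$ of the nonzero entries of $\ve{s}$ lie outside $\Psi$, the set $\Psi$ contains at most $\wt(\ve{s})-(d_0-2)\leq \lceil q/s\rceil-1$ nonzero entries, each bounded by $s$. This would give
\[ \sum_{i\in\Psi} s_i \;\leq\; s\bigl(\lceil q/s\rceil-1\bigr) \;\leq\; (q+s-1)-s \;=\; q-1, \]
using the elementary bound $s\lceil q/s\rceil\leq q+s-1$. Consequently the minimum sum over size-$(n-d_0+2)$ subsets of $[n]$ is at most $q-1$, placing $\ve{s}$ in the $\Sigma$ of Proposition~\ref{modified_version_of_Theorem3}. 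The proposition then yields an encoded word $\ve{c}\in\mycode{C}$ with $\ve{c}\geq\ve{s}$.

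Finally, the error correction and the cardinality count $q^{k-l}$ carry over from Theorem~\ref{thm-ext} without modification: the encoder output is a codeword of $\mycode{C}$, so Algorithm~\ref{a6} corrects up to $t$ substitution errors. The only quantitative ingredient is the one-line ceiling inequality $s\lceil q/s\rceil\leq q+s-1$; otherwise the argument is pure bookkeeping. I do not foresee a genuine obstacle, since the weight-and-maximum conditions on $\ve{s}$ appear to have been tuned precisely so that the subset $\Psi$ meets the $\leq q-1$ threshold required by Proposition~\ref{modified_version_of_Theorem3}. The only edge case worth mentioning is $\wt(\ve{s})\leq d_0-2$, where $\Psi$ contains only zero entries and the inequality $\sum_{i\in\Psi} s_i\leq q-1$ is trivial.
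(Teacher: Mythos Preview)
Your proposal is correct and follows essentially the same route as the paper: both arguments verify the hypothesis of Proposition~\ref{modified_version_of_Theorem3} by choosing $\Psi$ to avoid the $d_0-2$ largest entries of $\ve{s}$, then bounding $\sum_{i\in\Psi}s_i\le s(\lceil q/s\rceil-1)\le q-1$ via the same ceiling estimate (the paper writes it as $\lceil q/s\rceil<q/s+1$). The paper likewise splits off the edge case $\wt(\ve{s})\le d_0-2$, so your treatment matches it point for point.
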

\begin{proof}
     Let $\ve{s}\in\Sigma$ have weight $u\leq d_0+q-3$. Let $\Psi\subseteq [n]$ of size
$n-d_0+2$ be such that the number of non-zero components of $\ve{s}$ in $[n]\setminus \Psi$ equals $\min(d_0-2,u)$.
Then $\ve{s}$ has $u-\min(d_0-2,u)$ non-zero components in $\Psi$. As a consequence, if $u\leq d_0-2$, then
$\sum_{i\in\Psi} s_i=0$, and if $u>d_0-2$, then
	\[ \sum_{i\in \Psi} s_i \leq  s(u-d_0+2) \leq s(\lceil \frac{q}{s}\rceil -1) < s(\frac{q}{s}+1 - 1) = q. \]
	Hence in both cases, $\sum_{i\in \Psi} s_i \leq q-1$. The corollary thus follows from Proposition~\ref{modified_version_of_Theorem3}.
In particular, if $s=1$, the corollary agrees with Theorem~\ref{thm-ext}.
\end{proof}
We do not generalize Construction~\ref{consbinary} as it is tailored to the \emph{special} case where $s_i =1 \mbox{ for all } i \in \ve{\phi}$. 
\section{Trading Partially Stuck Cells with Errors \label{section_trading}}
In the constructions shown so far, the encoder output $\ve{c}$ is a word from an error correcting code $\mycode{C}$. 
If $\ve{c}$ does not satisfy the partial stuck-at conditions in $j$ positions, the encoder could modify it in these $j$ positions
to obtain a word $\ve{c}'=\ve{c}+\ve{e}'$ satisfying the partial-stuck at constrains, while $\wt(\ve{e}')=j$. 
If $\mycode{C}$ can correct $t$ errors, then it still is possible to correct $t-j$ errors in $\ve{c}'$. This observation was also made in
 \cite[Theorem~1]{heegard1983partitioned}.
The above reasoning shows that the following proposition holds.
\begin{prop}\label{proposition3}
If there is an $(n,M)_q (u,1,t)$ PSMC, then for any $j$ with $0\leq j\leq t$, 
there is an $(n,M)_q (u+j,1,t-j)$ PSMC.
\end{prop}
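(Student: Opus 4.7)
The proof follows directly from the informal reasoning that immediately precedes the proposition statement, so my plan is to formalize that argument as a generic reduction from the smaller PSMC to the larger one.

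The plan is to fix an encoder-decoder pair $(\mathcal{E}, \mathcal{D})$ realizing the given $(n,M)_q$ $(u,1,t)$ PSMC, and build from it a new pair $(\mathcal{E}', \mathcal{D}')$ for the $(u+j,1,t-j)$ regime. Given a message $\vmes \in \mathcal{M}$ and a partial stuck-at pattern $\ve{s}\in\{0,1\}^n$ with $\wt(\ve{s})\leq u+j$, I would split the support of $\ve{s}$ into two disjoint subsets $\ve{\phi}_1$ of size at most $u$ and $\ve{\phi}_2$ of size at most $j$. Let $\ve{s}^{(1)}$ be the indicator vector of $\ve{\phi}_1$, so that $\wt(\ve{s}^{(1)})\leq u$. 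Define $\ve{c}=\mathcal{E}(\vmes,\ve{s}^{(1)})$; by hypothesis $\ve{c}\geq \ve{s}^{(1)}$, but some entries of $\ve{c}$ indexed by $\ve{\phi}_2$ may be zero. Then set the new encoder output to $\ve{c}'$, where $c_i'=1$ for every $i\in\ve{\phi}_2$ with $c_i=0$, and $c_i'=c_i$ otherwise. By construction $\ve{c}'\geq \ve{s}$, and $\ve{e}':=\ve{c}'-\ve{c}$ has weight at most $|\ve{\phi}_2|\leq j$.

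For decoding, suppose the channel returns $\ve{y}=\ve{c}'+\ve{e}$ with $\wt(\ve{e})\leq t-j$ and $\ve{y}\geq \ve{s}$. Rewrite this as $\ve{y}=\ve{c}+\ve{e}''$ with $\ve{e}'':=\ve{e}'+\ve{e}$, whose weight is bounded by $j+(t-j)=t$. Since $\ve{\phi}_1\subseteq\mathrm{supp}(\ve{s})$, we have $\ve{y}\geq\ve{s}\geq\ve{s}^{(1)}$, so the pair $(\ve{c},\ve{e}'')$ satisfies exactly the hypotheses under which $\mathcal{D}$ is guaranteed to recover $\vmes$ from $\mathcal{E}(\vmes,\ve{s}^{(1)})+\ve{e}''$. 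Setting $\mathcal{D}'(\ve{y}):=\mathcal{D}(\ve{y})$ therefore returns $\vmes$. The cardinality is unchanged because the message space is the same $\mathcal{M}$ of size $M$.

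The only real thing to check is that the modification step genuinely costs only $j$ in Hamming weight, which is immediate because the support of $\ve{e}'$ is contained in $\ve{\phi}_2$. There is no meaningful obstacle; the subtlety, if any, is just being careful that the partial stuck-at constraint $\ve{y}\geq\ve{s}^{(1)}$ needed by $\mathcal{D}$ is implied by the stronger constraint $\ve{y}\geq\ve{s}$ that the channel guarantees in the new setting, which is immediate from $\ve{\phi}_1\subseteq\mathrm{supp}(\ve{s})$.
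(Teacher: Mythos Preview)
Your proposal is correct and follows essentially the same approach as the paper: the paper does not give a separate formal proof of this proposition, merely pointing to the preceding informal reasoning, and the full formalization appears only for the more general Theorem~\ref{tradingu_t}, whose encoder (choose a smaller constraint vector within distance $j$, encode against it, then repair the remaining positions via $c'_i=\max(c_i,s'_i)$) and decoder (apply $\mathcal{D}$ unchanged) match your construction exactly in the $(u,1,t)$ special case.
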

In the remainder of this section, we generalize the above proposition to general $\Sigma$ (Theorem~\ref{tradingu_t}).
We also provide variations on the idea of the encoder introducing errors to the result of a first encoding step in order that
the final encoder output satisfies the partially stuck-at conditions.	
\begin{thmmystyle} [Partial Masking PSMC]\label{tradingu_t} Let $\Sigma\subset \F_q^n$, and assume that there exists an $(n,M)_q (\Sigma,t)$ PSMC $\mycode{C}$.
	For any $j\in[t]$, there exists an $(n,M)_q (\Sigma^{(j)},t-j)$ PSMC $\mycode{C}_j$,
	where
	\[ \Sigma^{(j)}= \{\ve{s'}\in \F_q^n\mid \exists{\ve{s}\in\Sigma} \left[ d(\ve{s},\ve{s'})\leq j \mbox{ and } \ve{s'}\geq \ve{s} \right] \} . \]
\end{thmmystyle}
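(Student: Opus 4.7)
The plan is to use the given $(\Sigma,t)$ PSMC $\mycode{C}=(\mathcal{E},\mathcal{D})$ as a black box and build $\mycode{C}_j$ on the same message set $\mathcal{M}$, following the informal idea in the preamble to Proposition~\ref{proposition3}: spend $j$ of the $t$ available error-correction slots on deliberately introduced ``repair'' errors, so that a codeword originally masking some $\ve{s}\in\Sigma$ is patched into a word masking the stricter $\ve{s'}\in\Sigma^{(j)}$.

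Concretely, define the encoder $\mathcal{E}_j$ as follows. Given $(\vmes,\ve{s'})\in\mathcal{M}\times\Sigma^{(j)}$, choose (by definition of $\Sigma^{(j)}$) some $\ve{s}\in\Sigma$ with $\ve{s'}\geq\ve{s}$ and $d(\ve{s},\ve{s'})\leq j$. Compute $\ve{c}=\mathcal{E}(\vmes,\ve{s})$, then form $\ve{c}'$ from $\ve{c}$ by overwriting every coordinate $i$ with $c_i<s_i'$ by $s_i'$ (or any larger value); set $\mathcal{E}_j(\vmes,\ve{s'})=\ve{c}'$. The decoder is simply $\mathcal{D}_j=\mathcal{D}$.

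The key bookkeeping step is to control $\wt(\ve{c}'-\ve{c})$. Since $\ve{c}\geq\ve{s}$ by the PSMC property of $\mycode{C}$, every coordinate $i$ with $c_i<s_i'$ satisfies $s_i<s_i'$, and in particular $s_i\neq s_i'$; hence the support of $\ve{c}'-\ve{c}$ is contained in $\{i\colon s_i\neq s_i'\}$, whose size is at most $d(\ve{s},\ve{s'})\leq j$. By construction $\ve{c}'\geq\ve{s'}$. For correctness, take any $\ve{e}$ with $\wt(\ve{e})\leq t-j$ and $\ve{c}'+\ve{e}\geq\ve{s'}$, and write $\ve{c}'+\ve{e}=\ve{c}+\ve{\tilde e}$ with $\ve{\tilde e}=(\ve{c}'-\ve{c})+\ve{e}$. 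Then $\wt(\ve{\tilde e})\leq j+(t-j)=t$, and $\ve{c}+\ve{\tilde e}=\ve{c}'+\ve{e}\geq\ve{s'}\geq\ve{s}$, so the PSMC property of $\mycode{C}$ gives $\mathcal{D}(\ve{c}+\ve{\tilde e})=\vmes$, as required. The cardinality is clearly $M$.

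There is no serious obstacle here; the only points that require a small amount of care are (i) verifying that the repair operation touches at most $j$ positions, which uses the already-guaranteed $\ve{c}\geq\ve{s}$ together with $\ve{s'}\geq\ve{s}$, and (ii) combining the introduced error and the channel error into a single error pattern of weight at most $t$ whose sum with $\ve{c}=\mathcal{E}(\vmes,\ve{s})$ dominates $\ve{s}$ so that the original decoder's guarantee applies. Specializing to $\Sigma=\{\ve{s}\in\{0,1\}^n\colon\wt(\ve{s})\leq u\}$ recovers Proposition~\ref{proposition3}, since then $\Sigma^{(j)}\subseteq\{\ve{s}\in\{0,1\}^n\colon\wt(\ve{s})\leq u+j\}$.
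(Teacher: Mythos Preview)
Your proposal is correct and follows essentially the same approach as the paper: choose $\ve{s}\in\Sigma$ with $\ve{s'}\geq\ve{s}$ and $d(\ve{s},\ve{s'})\leq j$, encode $\ve{c}=\mathcal{E}(\vmes,\ve{s})$, patch it to $\ve{c'}$ with $c'_i=\max(c_i,s'_i)$, and decode with $\mathcal{D}$; the paper bounds the number of altered positions via the contrapositive ($s_i=s'_i\Rightarrow c'_i=c_i$), while you equivalently argue $c_i<s'_i\Rightarrow s_i<s'_i$, and the decoding correctness argument is identical.
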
	
\begin{algorithm}	
	\caption{Encoding}
	\label{alg_Encoder01}
	\KwIn{ $(\vmes,\vec{s'})\in \mathcal{M}\times \Sigma^{(j)}$.
	}	
	Determine $\ve{s}\in\Sigma$ such that $d(\ve{s},\ve{s'})\leq j$ and $\ve{s}'\geq \ve{s}$. 
	
	Let $\ve{c}=$ $\mathcal{E}$($\vmes,\ve{s})$.	
	
	Define $\ve{c'}=\mathcal{E}_j'$($\vmes,\ve{s'})$ as $c'_i=\max(c_i,s_i')$ for $i \in [n]$.
	
	\KwOut{
		Codeword $\ve{c'}$.
} 		
\end{algorithm}
\begin{algorithm}
	\label{alg_decoder02}
	\caption{Decoding}	
	\KwIn{Recived $\ve{y}=\ve{c'}+\ve{e}$ where $\wt(\ve{e})\leq t-j$ and $\ve{y}\geq \ve{s'}$}	
	
	Message $\vmes =\mathcal{D}(\ve{y})$

	\KwOut{
		Message vector $\vmes$ }
\end{algorithm}
\begin{proof}
	Let the encoder $\mathcal{E}_j$ and the decoder $\mathcal{D}_j$ for $\mycode{C}_j$ be Algorithm~\ref{alg_Encoder01} and Algorithm~\ref{alg_decoder02}, respectively.
	By definition, $\ve{c'}\geq \ve{s'}$. Moreover, if $s_i=s'_i$, then $c_i\geq s_i=s'_i$, so $c_i=c'_i$.
	As a result, $d(\ve{c},\ve{c'})\leq j$.
	
	In Algorithm~\ref{alg_decoder02}, the decoder $\mathcal{D}$ of $\mycode{C}$ is directly used for decoding $\mycode{C}_j$. 
	As $\ve{y}\geq \ve{s'}$, surely $\ve{y}\geq \ve{s}$. Moreover, we can write $\ve{y}=\ve{c} + (\ve{c'}-\ve{c}+\ve{e})$.
	As shown above, wt$(\ve{c'}-\ve{c})\leq j$, and so 
	$\wt(\ve{c}-\ve{c}'+\ve{e})\leq t.$ As a consequence, $\mathcal{D}(\ve{y})=\vmes$.
\end{proof}
We can improve on Theorem~\ref{tradingu_t} for Construction~\ref{consbinary} giving Lemma~\ref{lem03}.
	\begin{lem}\label{lem03}
				Given an $[n,k,d]_q$ code as defined in Construction~\ref{consbinary}, then 
	for any $j$ such that $0\leq j\leq \lfloor\frac{d-1}{2}\rfloor$, there is a $2^{\mu}$-ary $(2^{\mu-1}(d_0+j)-1,1,\lfloor\frac{d-1}{2}\rfloor-j)$ PSMC of length $n$ and size $q^{k-l-1}$.
		\end{lem}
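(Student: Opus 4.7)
The plan is to adapt Algorithms~\ref{a13}--\ref{a14} so that the encoder is allowed to leave up to $j$ partially-stuck positions as deliberate errors, which the decoder then absorbs into its $\lfloor(d-1)/2\rfloor$-error budget. The spirit is the same as Proposition~\ref{proposition3} and Theorem~\ref{tradingu_t}, but by intervening \emph{inside} the ``averaging over $z$'' step of Construction~\ref{consbinary} rather than at the encoder output, we should be able to buy $2^{\mu-1}$ additional partially-stuck positions for every unit of error-correction capability we give up, instead of only one.

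First I would weaken the requirement in Step~\ref{Step1} of Algorithm~\ref{a13}: demand only that $z\in\F_{2^\mu}$ be chosen so that the set $B(z):=\{\,i\in\ve{\phi}\mid (\vmes\cdot\G_1+z\ve{x})_i\in\F_2\,\}$ has size at most $d_0+j-1$. (Dropping the $\vmes'\H_0$ term leaves the message as $\vmes\in\F_q^{k-l-1}$ alone, which matches the stated cardinality $q^{k-l-1}$.) The counting argument from the proof of Theorem~\ref{thmbinary} still applies: since $\ve{x}$ has Hamming weight $n$, every $i\in\ve{\phi}$ contributes exactly two values of $z$ to the sum $\sum_{z\in\F_{2^\mu}}|B(z)|=2u$, so the pigeonhole principle yields such a $z$ whenever $u\leq 2^{\mu-1}(d_0+j)-1$.

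Second, I would select any subset $B^\ast\subseteq B(z)$ of size $\min(|B(z)|,d_0-1)$ and use the fact that any $d_0-1$ columns of $\H_0$ are $\F_2$-linearly independent to find $\ve{\gamma}\in\F_2^l$ with $(\ve{\gamma}\H_0)_i=1-w_i$ for every $i\in B^\ast$. The word $\ve{c}=\ve{w}+\ve{\gamma}\H_0\in\mycode{C}$ then satisfies $c_i=1$ on $B^\ast$; on $\ve{\phi}\setminus B(z)$ we have $w_i\in\F_{2^\mu}\setminus\F_2$, so adding the binary value $(\ve{\gamma}\H_0)_i$ keeps $c_i\notin\F_2$ and hence nonzero; only the at most $j$ positions in $B(z)\setminus B^\ast$ may still be zero. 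The encoder therefore outputs $\ve{c}':=\ve{c}+\ve{e}'$, where $e'_i=1-c_i$ on $B(z)\setminus B^\ast$ and $0$ elsewhere, so that $\wt(\ve{e}')\leq j$ and $\ve{c}'\geq \ve{s}$.

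For decoding, on input $\ve{y}=\ve{c}'+\ve{e}$ with $\wt(\ve{e})\leq\lfloor(d-1)/2\rfloor-j$, we have $\ve{y}=\ve{c}+(\ve{e}'+\ve{e})$ at Hamming distance at most $\lfloor(d-1)/2\rfloor$ from the codeword $\ve{c}\in\mycode{C}$, so the $\mycode{C}$-decoder returns $\ve{c}$ and $\vmes$ is then read off exactly as in Algorithm~\ref{a14}. The main obstacle, as I see it, is really just the pigeonhole bound in the first step; everything afterwards runs in parallel with the proof of Theorem~\ref{thmbinary}, and the only other thing to check carefully is that the modification $\ve{c}\mapsto\ve{c}'$ preserves the partial stuck-at constraint while contributing no more than $j$ positions to the decoder's error list.
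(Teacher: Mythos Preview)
Your proposal is correct and follows essentially the same route as the paper: weaken the pigeonhole bound in Step~\ref{Step1} of Algorithm~\ref{a13} to allow up to $d_0+j-1$ binary entries in $\ve{\phi}$, mask $d_0-1$ of them via $\ve{\gamma}\H_0$ in Step~\ref{Step2}, and flip the remaining at most $j$ to $1$ as deliberate errors absorbed by the decoder. Your write-up is in fact more explicit than the paper's (you name the sets $B(z)$, $B^\ast$ and the error vector $\ve{e}'$, and you justify the cardinality $q^{k-l-1}$ by dropping $\vmes'$, which the paper glosses over).
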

	\begin{proof} Let $\ve{\phi}\subset [n]$ has size $u\leq 2^{\mu-1}(d_0+j)-1$. We use the notation from Algorithm~\ref{a13}. After Step 1, 
$\ve{w}$ has at most $u_0=\lfloor \frac{2u}{2^{\mu}}\rfloor \leq d_0+j-1$ 
binary entries in the positions from $\ve{\phi}$.
 After Step 2, at least $d_0-1$ of these entries in $\ve{c}$ differ from 0. By setting the at most $j$ other binary entries in the positions from
$\ve{\phi}$ equal to 1, the encoder introduces at most $j$ errors, and guarantees that the partial stuck-at conditions are satisfied.
	\end{proof}
In Lemma~\ref{lem01}, we use another approach for introducing errors in order to satisfy the stuck-at conditions.
	\begin{lem}\label{lem01}
		Given an $[n,k,d]_q$ code containing a word of weight $n$, for any $j$ with $0\leq j\leq \lfloor \frac{d-1}{2}\rfloor$, there is a $q$-ary $(q-1+qj,1,\lfloor\frac{d-1}{2}\rfloor-j)$ PSMC of length $n$ and size $q^{k-1}$.
	\end{lem}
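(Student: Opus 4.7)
The plan is to refine the approach from Corollary~\ref{col0} (and Construction~\ref{cons:matrix_construction:u<q}) by allowing the encoder to deliberately introduce up to $j$ errors, similar in spirit to Theorem~\ref{tradingu_t} but exploiting a sharper pigeonhole count. Let $\mycode{C}$ be the given $[n,k,d]_q$ code containing a codeword of weight $n$, presented as in Construction~\ref{cons:matrix_construction:u<q} with generator matrix
\[
\G = \begin{bmatrix}\G_1 \\ \G_0\end{bmatrix},
\]
where the bottom row of $\G_0$ is the all-one vector. Write $t=\lfloor (d-1)/2\rfloor$. Fix any $\ve{\phi}\subset[n]$ with $|\ve{\phi}|=u\leq q-1+qj$ and any message $\vmes\in\F_q^{k-1}$, and set $\ve{w}=\vmes\cdot\G_1$.

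The encoder proceeds as in Algorithm~\ref{tab:a1}, but with Step~2 replaced by: choose $v\in\F_q$ minimising the number of coincidences $|\{i\in\ve{\phi}:w_i=v\}|$. Since $\sum_{v\in\F_q}|\{i\in\ve{\phi}:w_i=v\}|=u$, there exists $v$ with
\[
|\{i\in\ve{\phi}:w_i=v\}|\;\leq\;\left\lfloor\frac{u}{q}\right\rfloor
\;\leq\;\left\lfloor\frac{q-1+qj}{q}\right\rfloor=j,
\]
using $(q-1+qj)/q=j+(q-1)/q\in[j,j+1)$. Put $\ve{c}=\ve{w}-v\cdot\G_0$; the set of "bad" positions $B=\{i\in\ve{\phi}:c_i=0\}$ has $|B|\leq j$. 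On $\ve{\phi}\setminus B$ the partial stuck-at-$1$ constraint is already satisfied. The encoder then forms $\ve{c}'$ by replacing each bad coordinate $c_i$ (for $i\in B$) by $1$; this produces a word with $\ve{c}'\geq \ve{s}$ on $\ve{\phi}$ and satisfies $\wt(\ve{c}'-\ve{c})\leq j$. Note $\ve{c}'$ need not lie in $\mycode{C}$, but $\ve{c}\in\mycode{C}$.

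On the decoding side, the receiver gets $\y=\ve{c}'+\ve{e}$ with $\wt(\ve{e})\leq t-j$, so
\[
\y \;=\; \ve{c} + (\ve{c}'-\ve{c}+\ve{e})
\quad\text{with}\quad
\wt(\ve{c}'-\ve{c}+\ve{e})\;\leq\;j+(t-j)=t.
\]
Hence a bounded-distance decoder for $\mycode{C}$ recovers $\hat{\c}=\ve{c}$, after which the remaining two lines of Algorithm~\ref{tab:a2} (extracting $\hat v$ from the first coordinate of $\hat{\c}$ via the structure of $\G$, then $\hat{\ve{w}}$, then $\hat{\vmes}$) succeed verbatim; the message set has size $q^{k-1}$.

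The only nontrivial point is the pigeonhole calculation that converts a budget of $j$ tolerated errors into $qj$ extra maskable positions; the rest is routine verification that the encoder output satisfies $\ve{c}'\geq\ve{s}$ on $\ve{\phi}$ and that the total channel-plus-masking perturbation stays within the decoding radius $t$. No structural hypothesis beyond the one already used in Corollary~\ref{col0} (existence of a weight-$n$ codeword) is needed.
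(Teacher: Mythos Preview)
Your proof is correct and follows essentially the same approach as the paper: a pigeonhole/averaging argument shows that some shift $v$ leaves at most $\lfloor u/q\rfloor\le j$ zero entries in $\ve{\phi}$, the encoder flips those to nonzero values (introducing $\le j$ errors), and the decoder absorbs them together with the $\le t-j$ channel errors. The paper phrases the argument with a general weight-$n$ codeword $\ve{x}$ rather than specialising to $\ve{1}$, but this is only cosmetic.
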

	\begin{proof} We use the notation from Construction~\ref{cons:matrix_construction:u<q}.
	
 Let $\ve{\phi}\subset [n]$ have size $u \leq q-1+qj$.
Let $\ve{x}$
be a codeword of weight $n$. For each $i\in \ve{\phi}$, there is exactly one $v\in\F_q$ such that
$w_i+vx_i=0$, and so
\[ \sum_{v\in\F_q} \mid \{ i\in \ve{\phi}\mid w_i+vx_i=0\}\mid\; = u. \]
As a consequence, there is $v\in \F_q$ such that $\ve{c}=\ve{w}+v\ve{x}$ has most $\lfloor\frac{u}{q}\rfloor \leq j$ entries in $\ve{\phi}$ equal to zero.
By setting these entries of $\ve{c}$ to a non-zero value, the encoder introduces at most $j$ errors.
As $\mycode{C}$ can correct up to $\lfloor\frac{d-1}{2}\rfloor$ errors, it can correct these $j$ errors and additionally up to $\lfloor\frac{d-1}{2}\rfloor-j$ substitution errors.
	\end{proof}
		\begin{examplex}\label{example_trading_initial}
	Consider a $[15,9,5]_4$ code $\mycode{C}$ containing the all-one word, e.g. the BCH code with zeroes 
	$\alpha,\alpha^2,\alpha^3$, where $\alpha$ is a primitive element in $\F_{16}$. 
	Let $u\leq 7$ and $t=1$. We use the all-one word for partial masking, ensuring that $0$ occurs in at most 
	$\lfloor \frac{u}{4}\rfloor\leq 1$ position indexed by $\ve{\phi}$. We set the codeword value in this position to $1$, introducing one error. We can correct this introduced error and 
	one additional random error as $\mycode{C}$ has minimum distance $5$.
	Hence, we have obtained a $4$-ary $(7,1)$ PSMC of length $15$ and cardinality $4^8$.
		\qquad \qquad\QEDA
\end{examplex}
We show in Example~\ref{example_binary_trading} how applying Lemma~\ref{lem01} for Construction~\ref{consbinary} outperforms Lemma~\ref{lem03}.
\begin{examplex}\label{example_binary_trading}
	Given $d_0=3$, $u=15$ and $q=2^2$ and let $\alpha$ be a primitive element in $\F_{4}$ and take $\ve{x}= \ve{1}$. Assume we have
		\begin{align*}
		 \ve{w}^{(\ve{\phi})} & =(\vmes' \cdot \H_0 + \vmes\cdot \G_1) + z \cdot \ve{1}\\
		& = (\textcolor{blue}{0},\textcolor{red}{1},\alpha,1+\alpha,\textcolor{blue}{0},\textcolor{red}{1},\alpha,1+\alpha,\textcolor{blue}{0},\textcolor{red}{1},\alpha,1+\alpha,\textcolor{blue}{0},\textcolor{red}{1},\alpha)\\
		&+z \cdot \ve{1},
	\end{align*}
	 then choosing $z=1+\alpha$ minimizes the number of binary values in $\ve{w}^{(\ve{\phi})}$, we get: 
	\[\ve{w}^{(\ve{\phi}) }= (1+\alpha,\alpha,\textcolor{red}{1},\textcolor{blue}{0},1+\alpha,\alpha,\textcolor{red}{1},\textcolor{blue}{0},1+\alpha,\alpha,\textcolor{red}{1},\textcolor{blue}{0},1+\alpha,\alpha,\textcolor{red}{1}).\] 
	Following Step~\ref{Step2} in Algorithm~\ref{a13} and since $d_0 = 3$, we can mask at most $d_0-1$ binary values highlighted in the vector $\ve{w}^{(\ve{\phi})}$ that leaves us, in this example, with at most $\lfloor \frac{2u}{2^2}\rfloor-d_0+1 = 5$ zeros that remain unmasked.
	
	However, applying Lemma~\ref{lem01} instead for Construction~\ref{consbinary} 
	gives a better result. 
	Choosing $\ve{\gamma} = \ve{0}$ in Step~\ref{Step2} of Algorithm~\ref{a13}, we obtain $\ve{c}^{(\ve{\phi})} = \ve{w}^{(\ve{\phi})}$ with $(\lfloor\frac{u}{q}\rfloor =3)$ zeros highlighted in blue above that we can directly trade.	
	\qquad \qquad\QEDA
\end{examplex}
	\begin{rem}\label{rem11}
   As the code from Construction~\ref{consbinary} has a word of weight $n$, Lemma~\ref{lem01} implies the existence of an $(u,1,t)$ PSMC of cardinality $q^{k-1}$ under the condition that
$2(t+\lfloor \frac{u}{q}\rfloor) < d$. Lemma~\ref{lem03} shows the existence of an $(u,1,t)$ PSMC of smaller cardinality, {\em viz.} $q^{k-l}$, under the condition that
$2(t+\max(0,\lfloor\frac{2u}{2^{\mu}}\rfloor-d_0+1) < d$. As a consequence, Lemma~\ref{lem03} can only improve on Lemma~\ref{lem01} if $d_0-1> \lfloor \frac{2u}{2^{\mu}}\rfloor - \lfloor \frac{u}{2^{\mu}}\rfloor$.
\end{rem}
We can generalize Lemma~\ref{lem01} as follows.
	\begin{lem}\label{lem04}
	Given an $[n,k,d]_q$ code containing a word of weight $n$.
Let $0\leq j\leq \lfloor\frac{d-1}{2}\rfloor$, and let 
\[ \Sigma = \bigg\{ \ve{s}\in \F_q^n \; \Big| \; \sum_i s_i \leq q-1+qj\bigg\}. \]
There is a $q$-ary ($\Sigma,\lfloor\frac{d-1}{2}\rfloor-j)$ PSMC of length $n$ and size $q^{k-1}$.
	\end{lem}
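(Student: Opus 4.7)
The plan is to generalize the proof of Lemma~\ref{lem01} by replacing the ``hit zero'' event with the ``violate the stuck-at level'' event, and then to run the same pigeonhole argument on the resulting weighted sum. As in Construction~\ref{cons:matrix_construction:u<q}, I would realize the code by a generator matrix of the form
\[
\ve{G} = \begin{bmatrix} \ve{G}_1 \\ \ve{x} \end{bmatrix},
\]
where $\ve{x}$ is the assumed weight-$n$ codeword and $\ve{G}_1\in\F_q^{(k-1)\times n}$ is chosen so that $\ve{G}$ has full rank. Given a message $\vmes\in\F_q^{k-1}$ and a stuck pattern $\ve{s}\in\Sigma$, the encoder computes $\ve{w}=\vmes\cdot\ve{G}_1$ and searches for a scalar $v\in\F_q$ such that the codeword $\ve{c}(v)=\ve{w}+v\,\ve{x}$ violates the partial stuck-at constraint $c_i\geq s_i$ in as few positions as possible.

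The key counting step is as follows. Because $\ve{x}$ has weight $n$, for each $i\in[n]$ the map $v\mapsto w_i+v\,x_i$ is a bijection $\F_q\to\F_q$, so exactly $s_i$ values of $v$ yield $c_i(v)<s_i$. Summing over positions and exchanging the order of summation gives
\[
\sum_{v\in\F_q}\bigl|\{i\in[n] : c_i(v)<s_i\}\bigr| \;=\; \sum_{i=0}^{n-1} s_i \;\leq\; q-1+qj.
\]
By the pigeonhole principle, there exists $v^{\star}\in\F_q$ with $|\{i : c_i(v^{\star})<s_i\}|\leq \lfloor(q-1+qj)/q\rfloor = j$, since $(q-1+qj)/q = j+(q-1)/q < j+1$. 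Setting $\ve{c}=\ve{c}(v^{\star})$, the encoder then forms $\ve{c}'$ by raising $\ve{c}$ to any admissible value $\geq s_i$ at each of the at most $j$ offending positions, leaving the other coordinates untouched; this produces $\ve{c}'\geq\ve{s}$ with $\wt(\ve{c}'-\ve{c})\leq j$.

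Decoding follows the template of Theorem~\ref{tradingu_t}: the receiver gets $\ve{y}=\ve{c}'+\ve{e}$ with $\wt(\ve{e})\leq\lfloor(d-1)/2\rfloor-j$, which we rewrite as $\ve{c}+(\ve{c}'-\ve{c})+\ve{e}$ whose combined error weight is at most $\lfloor(d-1)/2\rfloor$, so bounded-distance decoding in the $[n,k,d]_q$ code returns $\ve{c}$. Since $\ve{G}$ has full rank, the pair $(\vmes,v^{\star})$ is uniquely extracted from $\ve{c}=\vmes\,\ve{G}_1+v^{\star}\ve{x}$, yielding the message and confirming the cardinality $q^{k-1}$.

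I do not anticipate a substantial obstacle: all the ingredients (a weight-$n$ codeword, bounded-distance decoding, and the symmetric counting over $\F_q$) were already present in Lemma~\ref{lem01}, and the passage from $s_i=1$ to arbitrary $s_i$ simply reweights each position's contribution to the sum from $1$ to $s_i$, which is precisely the quantity bounded by the hypothesis $\sum_i s_i\leq q-1+qj$. The one step deserving a little care is the pigeonhole inequality, where the ``$-1$'' in ``$q-1+qj$'' is tight for forcing the floor to equal $j$ and not $j+1$.
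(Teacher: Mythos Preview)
Your proposal is correct and follows essentially the same approach as the paper: a pigeonhole/averaging argument over the $q$ choices of the scalar multiplier of the weight-$n$ codeword, observing that position $i$ is ``bad'' for exactly $s_i$ scalars, so some scalar leaves at most $\lfloor\frac{1}{q}\sum_i s_i\rfloor\leq j$ positions to be fixed by introducing errors. The paper phrases the count complementarily (counting good indices rather than bad ones) and simplifies to the all-one word, but the argument is the same.
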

\begin{proof}
We use the notation from Theorem~\ref{arbitrary_s_for_u_leq_q}. For simplicity, we assume
that the code contains the all-one word.
	We wish to choose the multiplier $v\in\F_q$ such that $\ve{c}=\ve{w}-v\cdot \ve{1}$ satisfies
	$c_i\geq s_i$ for as many indices $i$ as possible. For each index $i$, there are $q-s_i$ values for $v$ such that this 
	inequality is met. Hence, there is a $v\in\F_q$ such that $c_i\geq s_i$ for at least
	$\lceil \frac{1}{q} \sum_i(q-s_i)\rceil = n-\lfloor \frac{1}{q}\sum_i s_i\rfloor$ indices $i$.
	The encoder thus needs to introduce errors only in the at most $\lfloor \frac{1}{q} \sum_i s_i \rfloor$ positions for which the inequality is not satisfied.
\end{proof}
\begin{lem}\label{only_trading_thm:super-general0}
	Assume there exists a matrix as in Proposition~\ref{modified_version_of_Theorem3}.
	Let $0\leq j\leq \lfloor\frac{d-1}{2}\rfloor$, and let
\begin{align*}
	\Sigma=&\bigg\{ \ve{s}\in\F_q^n \; \Big| \; 
	\exists{\Psi\subset [n]: \; \big| \;\Psi\;\big|=n-d_0+2} \\
		&\Big[ \sum_{i\in \Psi} s_i \leq q-1+qj \Big]\bigg\}.
\end{align*}
	Then exists a $q$-ary $(\Sigma,\lfloor\frac{d-1}{2}\rfloor-j)$ PSMC of length $n$ and size $q^{k-l}$.
\end{lem}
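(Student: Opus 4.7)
The plan is to combine the iterative masking scheme of Proposition~\ref{modified_version_of_Theorem3} with the error-introduction (trading) idea of Lemmas~\ref{lem01} and~\ref{lem04}. First I would invoke the setup from the proof of Proposition~\ref{modified_version_of_Theorem3}: reorder the coordinates so that $s_0\ge s_1\ge\cdots\ge s_{n-1}$, choose an invertible $\ve{T}\in\F_q^{l\times l}$ so that $\Y=\ve{T}\H_0$ is in the block row-echelon form with an $I_{d_0-2}$ identity block at the top-left and a zero block beneath it, and define the sets $L_a$ as in \eqref{partial_stuck_portion_of_phi}. The first $d_0-2$ coordinates are the free ones: after computing $\ve{w}=\vmes\G_1$, setting $\zeta_a=s_a-w_a$ for $a\in[d_0-2]$ forces $c_a=s_a$ on those positions exactly as in Proposition~\ref{modified_version_of_Theorem3}, so they are already masked.

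The new ingredient replaces the deterministic iterative cover argument for $\Psi=\{d_0-2,\ldots,n-1\}$ by a counting (averaging) argument. For each $b\in\Psi$ there is a unique $a\ge d_0-2$ with $b\in L_a$, and on this pair $Y_{a,b}\ne 0$ while $Y_{m,b}=0$ for $m>a$. Hence, once $\zeta_0,\ldots,\zeta_{d_0-3}$ are fixed, the number of tuples $(\zeta_{d_0-2},\ldots,\zeta_{l-1})\in\F_q^{l-d_0+2}$ that produce $c_b<s_b$ is exactly $s_b\cdot q^{l-d_0+1}$, since for each choice of the other $l-d_0+1$ coordinates there are exactly $s_b$ bad values of $\zeta_a$. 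Summing over $b\in\Psi$ and dividing by $q^{l-d_0+2}$, the mean number of unmasked coordinates is
\[
\frac{1}{q}\sum_{b\in\Psi}s_b\;\le\;\frac{q-1+qj}{q}\;<\;j+1,
\]
so by the pigeonhole principle some deterministic tuple $(\zeta_{d_0-2},\ldots,\zeta_{l-1})$ leaves at most $j$ positions $b\in\Psi$ with $c_b<s_b$.

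For this choice the encoder then overwrites the at most $j$ offending coordinates by any value $\ge s_b$, producing a word $\ve{c}'$ that satisfies every partial-stuck constraint and differs from the true codeword $\ve{c}\in\mycode{C}$ in at most $j$ positions. The stored word is $\ve{c}'$; after the channel adds at most $\lfloor(d-1)/2\rfloor-j$ additional errors, the received word $\y=\ve{c}+(\ve{c}'-\ve{c})+\ve{e}$ is within Hamming distance $\lfloor(d-1)/2\rfloor$ of $\ve{c}$, so the $\mycode{C}$-decoder recovers $\ve{c}$, and $\vmes$ is then read off exactly as in Algorithm~\ref{a6}. The message alphabet has not changed, so the resulting scheme is a $(\Sigma,\lfloor(d-1)/2\rfloor-j)$ PSMC of size $q^{k-l}$, as required.

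The step I expect to be most delicate is the counting argument in the second paragraph: one must verify that the row-echelon structure of $\Y$ really makes the number of bad $\zeta$-tuples for coordinate $b$ equal to $s_b\cdot q^{l-d_0+1}$, cleanly separated across different $b\in\Psi$. This relies on the disjointness of the $L_a$ and on $Y_{m,b}=0$ for $m>a$ whenever $b\in L_a$, both already established inside the proof of Proposition~\ref{modified_version_of_Theorem3}. Once these facts are lifted out explicitly, the present lemma becomes a direct marriage of Proposition~\ref{modified_version_of_Theorem3}'s masking structure with the error-trading device of Lemmas~\ref{lem01} and~\ref{lem04}, and both the classical instance ($s_i\in\{0,1\}$) of Proposition~\ref{proposition3} applied to Theorem~\ref{thm-ext} and the single-vector case of Lemma~\ref{lem04} are recovered as special cases.
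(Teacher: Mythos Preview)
Your proposal is correct and follows essentially the same approach as the paper: mask the $d_0-2$ ``free'' positions exactly, use an averaging argument on the remaining block to find a masking vector that leaves at most $\lfloor\frac{1}{q}\sum_{i\in\Psi}s_i\rfloor\le j$ positions unmasked, then overwrite those positions and let the decoder absorb the extra $j$ errors. The only presentational difference is that the paper invokes the \emph{alternative} proof of Proposition~\ref{modified_version_of_Theorem3} (Appendix~\ref{alternative_Proof_of_Proposition_2}), which packages the averaging step as Lemma~\ref{lem:basic0} applied to the submatrix $\B$; that lemma needs only that every column of $\B$ is nonzero, so you do not actually need the $L_a$ partition or the row-echelon pivot structure to justify the count $s_b\cdot q^{l-d_0+1}$ --- the simpler observation that $c_b$ depends on $\ve{\eta}$ through a nonzero linear functional already suffices.
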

\begin{proof}
	Let $\ve{s}\in\Sigma$.
	 In order to simplify notation, we assume without loss of generality that 
	$\sum_{i=d_0-2}^{n-1} s_i \leq q-1+qj$.
	We use the same argument as in the alternative proof of Proposition~\ref{modified_version_of_Theorem3} (see Appendix~\ref{alternative_Proof_of_Proposition_2}).
Clearly,
	\[ n-d_0+2- \left\lfloor\frac{1}{q} \sum_{i=d_0-2}^{n-1}s_i\right\rfloor
	\geq n-d_0+2-j . \]
	So we infer that for at least $n-j$ indices $i\in[n]$,
	\[ w_i+ \Big( (\ve{\zeta},\ve{\eta}) \ve{T}\H_0\Big)_i \geq s_i. \tag*{\qedhere} \] 	
\end{proof}
\begin{rem}\label{remark9}
The proof of Lemma~\ref{only_trading_thm:super-general0} shows that the encoder output in fact can be made equal to $\ve{s}$ in the
$d_0-2$ largest entries of $\ve{s}$. In fact, it shows that the scheme allows for masking $d_0-2$ stuck-at errors, masking
partial stuck errors in the remaining cells, and correcting $\lfloor\frac{d-1}{2}\rfloor-j$ substitution errors, provided that the sum
of the stuck-at levels in the $n-d_0+2$ remaining cells is less than $(j+1)q$. 
\end{rem}
\section{Upper bounds on PSMC codes }\label{upper_bounds_psmc}
The output of an encoder has restrictions on the values in the partially stuck-at cells; in the other cells, it can attain all values.
So the set of all encoder outputs is a poly-alphabetic code \cite{VSidorenkoPolyalphabetic}. To be more precise, the following proposition holds.
\begin{prop}\label{prop:poly} Let $\mycode{C}$ be an $(n,M)_q (\Sigma,t)$ partially stuck-at-masking code with encoder $\mathcal{E}$.
For any $\ve{s}\in\Sigma$, let
\[ \mycode{C}_{\ve{s}}=\{ \mathcal{E}(\vmes,\ve{s}) \mid \vmes\in \mathcal{M}\}. \]
Then $\mycode{C}_{\ve{s}}$ is a code with minimum distance at least $2t+1$ and $|\mathcal{M}|$ words, and 
\[ \mycode{C}_{\ve{s}} \subset Q_0\times Q_1 \times \cdots \times Q_{n-1}, \mbox{ where} \]
$Q_i = \{ x\in\mathbb{F}_q\mid x\geq s_i\}$.
\end{prop}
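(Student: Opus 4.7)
The proposition decomposes into three verifications: (i) the inclusion $\mycode{C}_{\ve{s}}\subset Q_0\times \cdots \times Q_{n-1}$, (ii) the fact that $|\mycode{C}_{\ve{s}}|=|\mathcal{M}|$, and (iii) the minimum distance bound $d(\mycode{C}_{\ve{s}})\geq 2t+1$. My plan is to derive all three directly from the two defining properties of a $(\Sigma,t)$ PSMC, namely that $\mathcal{E}(\vmes,\ve{s})\geq\ve{s}$ for every message, and that $\mathcal{D}(\mathcal{E}(\vmes,\ve{s})+\ve{e})=\vmes$ whenever $\wt(\ve{e})\leq t$ and $\mathcal{E}(\vmes,\ve{s})+\ve{e}\geq\ve{s}$.

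The first claim is immediate: by the encoder definition, $\mathcal{E}(\vmes,\ve{s})_i\geq s_i$ for every $i\in[n]$, and by definition $Q_i$ is precisely the set of admissible field elements in coordinate $i$, so every encoder output lies in $Q_0\times\cdots\times Q_{n-1}$. For the second claim, suppose toward contradiction that $\vmes_1\neq \vmes_2$ but $\mathcal{E}(\vmes_1,\ve{s})=\mathcal{E}(\vmes_2,\ve{s})=:\ve{c}$. Taking $\ve{e}=\ve{0}$ we have $\wt(\ve{e})\leq t$ and $\ve{c}+\ve{e}=\ve{c}\geq \ve{s}$, so the decoder condition forces $\mathcal{D}(\ve{c})=\vmes_1$ and $\mathcal{D}(\ve{c})=\vmes_2$, a contradiction. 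Hence the encoder is injective on $\mathcal{M}\times\{\ve{s}\}$, so $|\mycode{C}_{\ve{s}}|=|\mathcal{M}|=M$.

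The heart of the proof is the minimum distance statement, for which I intend the usual ``split the difference'' argument adapted to the partial stuck-at setting. Let $\ve{c}_1=\mathcal{E}(\vmes_1,\ve{s})$ and $\ve{c}_2=\mathcal{E}(\vmes_2,\ve{s})$ be distinct codewords in $\mycode{C}_{\ve{s}}$ and set $d=d_{\mathrm{H}}(\ve{c}_1,\ve{c}_2)$. Suppose $d\leq 2t$; then I can partition the set of positions where $\ve{c}_1$ and $\ve{c}_2$ disagree into two disjoint sets $A$ and $B$ with $|A|\leq t$ and $|B|\leq t$. Define the intermediate word $\ve{y}$ coordinatewise by
\[
y_i=\begin{cases} c_{2,i}, & i\in A,\\ c_{1,i}, & i\notin A. \end{cases}
\]
Then $\ve{e}_1:=\ve{y}-\ve{c}_1$ is supported on $A$ and $\ve{e}_2:=\ve{y}-\ve{c}_2$ is supported on $B$, so $\wt(\ve{e}_j)\leq t$ for $j\in\{1,2\}$. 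The crucial point is that $\ve{y}\geq\ve{s}$: in each coordinate, $y_i$ equals either $c_{1,i}$ or $c_{2,i}$, and both are $\geq s_i$ because $\ve{c}_1,\ve{c}_2\in\mycode{C}_{\ve{s}}$. Applying the PSMC decoding guarantee to $\ve{y}=\ve{c}_1+\ve{e}_1=\ve{c}_2+\ve{e}_2$ yields $\mathcal{D}(\ve{y})=\vmes_1=\vmes_2$, which by injectivity contradicts $\ve{c}_1\neq\ve{c}_2$. Hence $d\geq 2t+1$.

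I do not expect any obstacles: the only subtlety is checking that the intermediate word $\ve{y}$ satisfies the partial stuck-at constraint $\ve{y}\geq\ve{s}$, which works precisely because the total ordering is applied coordinatewise and both candidate values at each position already dominate $s_i$. Thus the proof is short and mechanical once the splitting construction is written down.
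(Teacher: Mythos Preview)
Your proof is correct and follows essentially the same approach as the paper. The paper's own proof is extremely terse—it simply notes that the error model ensures perturbed words still satisfy the stuck-at constraints, invokes the standard equivalence between $t$-error correction and minimum distance $\geq 2t+1$, and declares the rest ``obvious''; your version spells out exactly the details (injectivity of $\mathcal{E}(\cdot,\ve{s})$ via $\ve{e}=\ve{0}$, and the split-the-difference construction with the verification that $\ve{y}\geq\ve{s}$) that the paper leaves implicit.
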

\begin{proof}
By our error model, errors in stuck-at cells result in values still satisfying the stuck-at constraints. Therefore, $t$ errors can be corrected 
if and only if $\mycode{C}_{\ve{s}}$ has minimum Hamming distance at least $2t+1$. The rest of the proposition is obvious.
\end{proof}
As a result of Proposition~\ref{prop:poly}, upper bounds on the size of poly-alphabetic codes \cite{VSidorenkoPolyalphabetic} are also upper bounds on
the size of partially-stuck-at codes.
\begin{thmmystyle} (Singleton type bound)
Let $\mycode{C}$ be a $q$-ary $(\Sigma,t)$ PSMC of length $n$ and size $M$.
Then for any $\ve{s}=(s_0,\ldots ,s_{n-1})\in \Sigma$,
\[ M \leq \min\bigg\{ \prod_{j\in J}(q-s_j) \; \Big| \; J\subset[n], |J|=n-2t\bigg\}. \]
\end{thmmystyle}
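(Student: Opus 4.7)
The plan is to deduce the bound from the poly-alphabetic structure already exposed in Proposition~\ref{prop:poly} by a standard puncturing argument, exactly mirroring the classical Singleton bound proof.

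First, fix any $\ve{s}=(s_0,\ldots,s_{n-1})\in\Sigma$ and let $\mycode{C}_{\ve{s}}$ be the code from Proposition~\ref{prop:poly}. By that proposition, $\mycode{C}_{\ve{s}}$ has $M$ codewords, minimum Hamming distance at least $2t+1$, and coordinate $i$ takes values in $Q_i$ with $|Q_i|=q-s_i$.

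Next, for an arbitrary subset $J\subset[n]$ with $|J|=n-2t$, I would consider the puncturing map $\pi_J:\prod_{i=0}^{n-1} Q_i\to\prod_{j\in J} Q_j$ obtained by deleting the $2t$ coordinates outside $J$. The key observation is that $\pi_J$ is injective on $\mycode{C}_{\ve{s}}$: if two codewords collided under $\pi_J$, they would agree on all $n-2t$ coordinates in $J$ and hence differ in at most $2t$ positions, contradicting the minimum distance being at least $2t+1$. Therefore $|\pi_J(\mycode{C}_{\ve{s}})|=M$, while trivially $|\pi_J(\mycode{C}_{\ve{s}})|\le\prod_{j\in J}|Q_j|=\prod_{j\in J}(q-s_j)$. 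This gives $M\leq\prod_{j\in J}(q-s_j)$ for every admissible $J$, and taking the minimum over all $J\subset[n]$ of size $n-2t$ yields the stated inequality.

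There is no real obstacle here: the argument is essentially the classical Singleton bound adapted to a product of possibly different alphabets. The only point that warrants care is the reduction to the fixed-$\ve{s}$ subcode $\mycode{C}_{\ve{s}}$ (so that $M$ is still a lower bound on its size) and the fact that errors within the encoder output must be Hamming-correctable, both of which are already handled by Proposition~\ref{prop:poly} through the error model in Section~\ref{ssec:definitions}. The bound is clearly tight in the trivial case $\ve{s}=\ve{0}$, $t=0$, where it reduces to $M\leq q^n$.
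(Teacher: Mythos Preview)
Your argument is correct and follows the same route as the paper: reduce via Proposition~\ref{prop:poly} to a poly-alphabetic code with alphabets $Q_i$ of size $q-s_i$ and minimum distance at least $2t+1$, then apply the Singleton bound for such codes. The only difference is cosmetic: the paper simply cites \cite[Theorem~2]{VSidorenkoPolyalphabetic} for the poly-alphabetic Singleton bound, whereas you spell out the standard puncturing proof of that bound.
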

\begin{proof}
Combination of Proposition~\ref{prop:poly} and \cite[Theorem~2]{VSidorenkoPolyalphabetic}.
\end{proof}
\begin{thmmystyle} (Sphere-packing type bound)
Let $\mycode{C}$ be a $q$-ary $(\Sigma,t)$ PSMC of length $n$ and size $M$.
Then for any $\ve{s}=(s_0,\ldots ,s_{n-1})\in\Sigma$
\[ M \leq \frac{\prod_{i=0}^{n-1}(q-s_i)}{V^{(b)}_t}, \]
where $V^{(b)}_t$, the volume of a ball of radius $t$, satisfies
\[ V^{(b)}_t=\sum_{r=0}^t V^{(s)}_r, \]
where the volume $V^{(s)}_r$ of the sphere with radius $r$ is given by 
\[ V^{(s)}_0 =1 , \]
\[ V^{(s)}_r= \sum_{1\leq i_1 < \ldots < i_r\leq n} (q-1-s_{i_1})\cdots(q-1-s_{i_r}). \]
\end{thmmystyle}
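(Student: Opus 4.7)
The plan is to fix an arbitrary $\ve{s}=(s_0,\ldots,s_{n-1})\in\Sigma$ and to apply a sphere-packing argument to the ``restricted'' code $\mycode{C}_{\ve{s}}$ from Proposition~\ref{prop:poly}. By that proposition, $\mycode{C}_{\ve{s}}$ has $M$ codewords, minimum Hamming distance at least $2t+1$, and every codeword lies in the box $Q_0\times Q_1\times\cdots\times Q_{n-1}$, where $Q_i=\{x\in\F_q\mid x\geq s_i\}$ has cardinality $q-s_i$. So the ambient space already has size $\prod_{i=0}^{n-1}(q-s_i)$, which will appear as the numerator.

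Next, for each $\ve{c}\in\mycode{C}_{\ve{s}}$, I would define the \emph{admissible} ball
\[ B_t(\ve{c})=\{\ve{c}+\ve{e}\mid \wt(\ve{e})\leq t,\;\ve{c}+\ve{e}\geq \ve{s}\}, \]
which, by the definition of a $(\Sigma,t)$ PSMC, is precisely the set of received vectors the decoder must map to the message associated with $\ve{c}$. A standard triangle-inequality argument shows that distinct codewords have disjoint admissible balls: if $\ve{c}_1+\ve{e}_1=\ve{c}_2+\ve{e}_2$ with $\ve{c}_1\neq\ve{c}_2$, then $d_H(\ve{c}_1,\ve{c}_2)\leq \wt(\ve{e}_1)+\wt(\ve{e}_2)\leq 2t$, contradicting the minimum distance of $\mycode{C}_{\ve{s}}$. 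Moreover, every element of $B_t(\ve{c})$ lies in the box $Q_0\times\cdots\times Q_{n-1}$, because the constraint $\ve{c}+\ve{e}\geq\ve{s}$ is exactly the statement that the received word is in that box.

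The crux is then counting $|B_t(\ve{c})|$ and verifying that it equals $V_t^{(b)}$ independently of $\ve{c}$. For a fixed support $\{i_1,\ldots,i_r\}$ of the error with $r\leq t$, we need $e_{i_j}\neq 0$ and $c_{i_j}+e_{i_j}\in Q_{i_j}$ at each support position. Since $c_{i_j}\in Q_{i_j}$, the admissible values of $c_{i_j}+e_{i_j}$ form $Q_{i_j}\setminus\{c_{i_j}\}$, which has exactly $q-1-s_{i_j}$ elements regardless of which $c_{i_j}\in Q_{i_j}$ was chosen. Summing over all supports of size $r$ yields $V_r^{(s)}$, and summing over $r=0,\ldots,t$ yields $V_t^{(b)}$.

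Finally, combining the disjointness of the balls with their common cardinality and their containment in the ambient box gives
\[ M\cdot V_t^{(b)}\;\leq\;\prod_{i=0}^{n-1}(q-s_i), \]
from which the claimed bound follows by rearrangement. The only step that requires any care is the counting of $|B_t(\ve{c})|$: the key point is that the count is independent of the codeword $\ve{c}$ because, for each position $i$, the set $Q_i\setminus\{c_i\}$ has cardinality $q-1-s_i$ whatever the value $c_i\in Q_i$ is. Everything else is the standard Hamming packing argument transplanted to the restricted alphabet setting guaranteed by Proposition~\ref{prop:poly}.
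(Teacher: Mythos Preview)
Your argument is correct and follows the same approach as the paper: reduce to the restricted code $\mycode{C}_{\ve{s}}$ via Proposition~\ref{prop:poly} and then apply the sphere-packing bound for poly-alphabetic codes. The only difference is that the paper simply cites \cite[Theorem~3]{VSidorenkoPolyalphabetic} for that second step, whereas you unpack and prove it directly by the ball-packing count in the product alphabet $Q_0\times\cdots\times Q_{n-1}$.
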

\begin{proof}
Combination of Proposition~\ref{prop:poly} and \cite[Theorem~3]{VSidorenkoPolyalphabetic}.
\end{proof}
\begin{rem}
 The difference between poly-alphabetic codes and partially-stuck-at-masking codes is that in the former,
the positions of stuck-at cells and the corresponding levels are known to both encoder and decoder, whereas in the latter,
this information is known to the encoder only.	
\end{rem}
\begin{figure}[h]
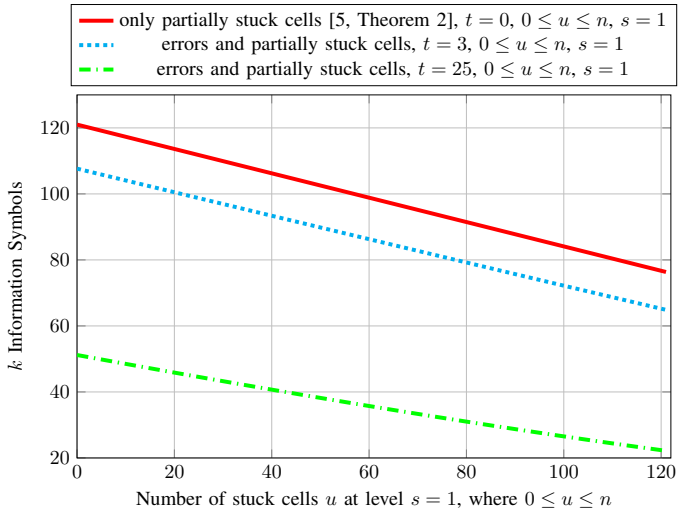

		\begin{center}
	\scalebox{0.75}{
		\begin{tikzpicture} 
			\begin{axis}[
				filter discard warning=false,
				height=8cm,
				width=12cm,
				ylabel= {$k$ Information Symbols},
				xmin = 0,
				xmax = 122,
				ymin = 20,
				ymax = 130,
				xlabel= {Number of stuck cells $u$ at level $s=1$, where $0\leq u\leq n$},
				grid=major,
				legend style={at={(0.50,1.25)},anchor=north},
				cycle list name=solidlinesbyhaider7
				]
				\input{figures/SPq3n121t3/k_cases_Only_PSMC}
				\addlegendentry{only partially stuck cells \cite[Theorem~2]{wachterzeh2016codes}, $t=0$, $0\leq u \leq n$, $s=1$};
				\addlegendentry{errors and partially stuck cells, $t=3$, $0\leq u \leq n$, $s=1$};
				\input{figures/SPq3n121t3/k_cases_overlapping}
				\addlegendentry{errors and partially stuck cells, $t=25$, $0\leq u \leq n$, $s=1$};
				\input{figures/SPq3n121t25/k_cases_overlapping}
			\end{axis}
			
	\end{tikzpicture} }
	\caption{\textbf{Sphere-packing bounds}: Comparison for $k$ information symbols for ("only partially stuck cells \cite[Theorem 2]{wachterzeh2016codes}") and our sphere-packing-like ("errors and partially stuck cells") bounds. The classical sphere-packing bound ("only errors") can read at $u=0$ in our sphere-packing-like bounds curves. 
	The chosen parameters are $\pow =5$ and $ q =3$, and $n = ((q^\pow -1)/(q-1))$.}
	\label{fig2}
	\end{center}
\end{figure}
Figure~\ref{fig2} compares our derived sphere-packing-like bound to the amount of storable information symbols for a completely reliable memory (i.e., no stuck cells, no errors that can be seen at $u=0$ in the solid line) and the upper bound on the cardinality of an only-masking PSMC (only stuck cells, no errors) derived in \cite{wachterzeh2016codes} as depicted in the solid curve.
At $u=0$, the derived sphere-packing-type bound (dotted and dashed-dotted plots) matches the usual sphere-packing bound ("only errors") case.
The more $u$ partially stuck at cells, the less amount of storable information, i.e. only $q-1$ levels can be utilized. Hence, the dotted and dashed-dotted lines are declining while $u$ is growing. 
On the other hand, the more errors (e.g., $t=25$ in the dashed-dotted plot), the higher overall required redundancy and the lower storable information for the aforementioned curve.
\section{Gilbert--Varshamov (GV) Bound}\label{lower_like_bounds}
\subsection{ Gilbert--Varshamov (GV) Bound: finite length}
We have provided various constructions of $(u,1,t)$ PSMCs based on $q$-ary $t$-error correcting codes with additional properties.
In this section, we first employ GV-like techniques to show the existence of
$(u,1,t)$ PSMCs. Next, we study the asymptotic of the resulting GV bounds.

We start with a somewhat refined version of the Gilbert bound, that should be well-known, but for which we did not 
find an explicit reference.
\begin{lem}\label{lem:GVrefined}
Let $q$ be a prime power, and assume there is an $[n,s]_q$ code $\mycode{C}_s$ with minimum distance at least $d$.
If $k\geq s$ is such that
\[ \sum_{i=0}^{d-1} {n\choose i} (q-1)^i < q^{n-k+1}, \]
then there is an $[n,k]_q$ code $\mycode{C}_k$ with minimum distance at least $d$ that has $\mycode{C}_s$ as a subcode.
\end{lem}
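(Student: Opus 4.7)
My plan is to prove this by greedy extension: construct a chain of codes $\mycode{C}_s \subset \mycode{C}_{s+1} \subset \cdots \subset \mycode{C}_k$, where at each step $\mycode{C}_{j+1}$ is obtained from $\mycode{C}_j$ by adjoining a single vector, and each $\mycode{C}_j$ has minimum distance at least $d$. The base case is $\mycode{C}_s$, which exists by hypothesis. Given an $[n,j]_q$ code $\mycode{C}_j$ with $j<k$ and minimum distance at least $d$ containing $\mycode{C}_s$, I need to exhibit a vector $\ve{v}\in\Fq^n$ such that $\mycode{C}_{j+1}:=\mycode{C}_j + \langle \ve{v}\rangle$ has dimension $j+1$ and minimum distance at least $d$.

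The nonzero codewords of $\mycode{C}_{j+1}$ that are not already in $\mycode{C}_j$ are precisely the vectors of the form $\ve{c}+\alpha\ve{v}$ with $\ve{c}\in\mycode{C}_j$ and $\alpha\in\Fq\setminus\{0\}$. Requiring $\wt(\ve{c}+\alpha\ve{v})\ge d$ for all such pairs is equivalent to requiring
\[ \ve{v}\notin \bigcup_{\alpha\in\Fq\setminus\{0\}}\ \bigcup_{\ve{c}\in \mycode{C}_j}\ \alpha^{-1}\bigl(-\ve{c}+B(\ve{0},d-1)\bigr), \]
where $B(\ve{0},d-1)=\{\ve{x}\in\Fq^n\mid \wt(\ve{x})\le d-1\}$. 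The key observation is that the Hamming ball $B(\ve{0},d-1)$ is invariant under scalar multiplication by nonzero elements of $\Fq$, and $\mycode{C}_j$ is closed under such multiplication and negation. Hence $\alpha^{-1}(-\ve{c}+B(\ve{0},d-1))=(-\alpha^{-1}\ve{c})+B(\ve{0},d-1)$, and as $\ve{c}$ ranges over $\mycode{C}_j$ and $\alpha$ ranges over $\Fq\setminus\{0\}$, the vector $-\alpha^{-1}\ve{c}$ just ranges over $\mycode{C}_j$ again. So the forbidden set collapses to $\mycode{C}_j+B(\ve{0},d-1)$, of size at most
\[ |\mycode{C}_j|\cdot |B(\ve{0},d-1)| = q^j\sum_{i=0}^{d-1}\binom{n}{i}(q-1)^i. \]

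An admissible $\ve{v}$ therefore exists whenever $q^j\sum_{i=0}^{d-1}\binom{n}{i}(q-1)^i<q^n$, i.e.\ whenever $\sum_{i=0}^{d-1}\binom{n}{i}(q-1)^i<q^{n-j}$. The most restrictive instance in the chain is $j=k-1$, which yields exactly the hypothesis of the lemma; for $s\le j<k-1$ the bound is strictly weaker and thus automatically holds. Finally, any such $\ve{v}$ lies outside $\mycode{C}_j$ (since $\mycode{C}_j\subseteq\mycode{C}_j+B(\ve{0},d-1)$), so $\dim \mycode{C}_{j+1}=j+1$, and by construction $\mycode{C}_s\subseteq \mycode{C}_j\subseteq \mycode{C}_{j+1}$. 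Iterating until $j=k$ produces the desired code.

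The argument is essentially the classical Gilbert counting argument, and there is no serious obstacle; the only point that needs a moment of care is the collapse of the union over $\alpha\in\Fq\setminus\{0\}$ to a single copy of $\mycode{C}_j+B(\ve{0},d-1)$, which is what makes the bound tight enough to match the stated hypothesis with the $q^{n-k+1}$ (rather than the weaker $q^{n-k+1}/(q-1)$ one might naively obtain).
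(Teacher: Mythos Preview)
Your proof is correct and follows essentially the same inductive/greedy extension argument as the paper. The paper's version is terser: it observes that if the balls of radius $d-1$ around $\mycode{C}_\kappa$ do not cover $\Fq^n$, one can pick $\ve{x}$ at distance at least $d$ from every codeword, and then cites \cite[Theorem~5.1.8]{Lint} for the fact that the span of $\mycode{C}_\kappa$ and $\ve{x}$ still has minimum distance at least $d$. Your ``collapse'' argument (using scalar-invariance of $B(\ve{0},d-1)$ and linearity of $\mycode{C}_j$) is exactly the content of that cited result, so your write-up is simply a more self-contained version of the same proof.
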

\begin{proof} 
By induction on $k$.
For $k=s$, the statement is obvious. Now let $\kappa \geq s$ and let $\mycode{C}_{\kappa}$ be an $[n,\kappa]_q$ code with minimum
distance at least $d$ that has $\mycode{C}_s$ as a subcode. 
If $q^{\kappa}\sum_{i=0}^{d-1} {n\choose i}(q-1)^i < q^n$, then the balls with radius $d-1$ centered at the words of $\mycode{C}_{\kappa}$
do not cover $\F_q^n$, so there is a word $\ve{x}$ at distance at least $d$ from all words in $\mycode{C}_{\kappa}$.
As shown in the proof of \cite[Theorem. 5.1.8]{Lint}, the $[n,\kappa+1]_q$ code $\mycode{C}_{\kappa+1}$ spanned by $\mycode{C}_{\kappa}$ and $\ve{x}$ has 
minimum distance at least $d$.
\end{proof}
\subsubsection{Finite GV bound based on Lemma~\ref{lem01}}\label{Finite_GV_Bin}
\begin{thmmystyle}\label{th:upartial}
Let $q$ be a prime power. Let $n,k,t,u$ be non-negative integers such that
\[ \sum_{i=0}^{2(t+\lfloor \frac{u}{q}\rfloor)} {n\choose i}(q-1)^i < q^{n-k+1}. \]
There exists a $q$-ary $(u,1,t)$ PSMC of length $n$ and size $q^{k-1}$.
\end{thmmystyle}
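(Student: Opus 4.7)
The plan is to combine Lemma~\ref{lem01} (the ``trading'' construction that builds a PSMC from any linear code containing a word of full weight) with the refined Gilbert bound in Lemma~\ref{lem:GVrefined}, applied to the repetition code as the prescribed subcode.

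First I would set $j = \lfloor u/q \rfloor$ and $d = 2(t+j)+1$. With this choice, Lemma~\ref{lem01}, applied to any $[n,k,d']_q$ code with $d' \geq d$ that contains a word of Hamming weight $n$, delivers a $q$-ary $(q-1+qj,\,1,\,\lfloor (d'-1)/2 \rfloor - j)$ PSMC of cardinality $q^{k-1}$. Since $u \leq q-1+q\lfloor u/q \rfloor$ and $\lfloor (d-1)/2 \rfloor - j = t$, any such code immediately yields a $(u,1,t)$ PSMC of size $q^{k-1}$, because a coding scheme that masks $u' \geq u$ partially stuck cells automatically masks $u$ as well.

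So the task reduces to exhibiting an $[n,k,d]_q$ code that contains a weight-$n$ word. Here I would take the subcode $\mycode{C}_1$ to be the $[n,1,n]_q$ repetition code $\{\alpha \ve{1} : \alpha \in \Fq\}$, whose every nonzero word has weight $n$. Applying Lemma~\ref{lem:GVrefined} with $s=1$ and this $\mycode{C}_1$: whenever
\[ \sum_{i=0}^{d-1} \binom{n}{i}(q-1)^i \;<\; q^{n-k+1}, \]
there exists an $[n,k]_q$ code $\mycode{C}_k$ of minimum distance at least $d$ that contains $\mycode{C}_1$, and in particular contains the all-one word. Noting that $d-1 = 2(t+\lfloor u/q \rfloor)$, this is precisely the hypothesis of the theorem. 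Feeding $\mycode{C}_k$ into Lemma~\ref{lem01} with $j=\lfloor u/q \rfloor$ completes the argument.

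There is no genuine obstacle here, only bookkeeping: the two lemmas dovetail once one picks $d$ and $j$ in the manner above. The only thing to be mildly careful about is the edge case where $d > n$ (the GV hypothesis then forces $k=1$ and the statement is witnessed by the repetition code itself) and the observation that a $(u',1,t)$ PSMC is automatically a $(u,1,t)$ PSMC for every $u \leq u'$, which follows directly from the definition since the admissible stuck-pattern set $\Sigma$ shrinks when $u$ decreases.
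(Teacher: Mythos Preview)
Your proposal is correct and follows essentially the same route as the paper: apply Lemma~\ref{lem:GVrefined} with the repetition code $\mycode{C}_1$ as the prescribed subcode to obtain an $[n,k]_q$ code of the required minimum distance containing the all-one word, then invoke Lemma~\ref{lem01} with $j=\lfloor u/q\rfloor$. Your write-up is in fact more careful than the paper's, which states the target distance as ``$2t+1$'' where ``$2(t+\lfloor u/q\rfloor)+1$'' is meant, and you explicitly check the parameter matching $u\leq q-1+q\lfloor u/q\rfloor$ and the monotonicity in $u$.
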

\begin{proof}
	Let $\mycode{C}_1$ be the $[n,1,n]_q$ code generated by the all-one word.
	Lemma~\ref{lem:GVrefined} implies that
	there is an $[n,k]_q$ code with minimum distance at least $2t+1$ that contains the all-one word.
	Lemma~\ref{lem01}
shows that $\mycode{C}_k$ can be used to construct a PSMC with the claimed parameters.
\end{proof}
\begin{rem}\label{special_case_of_th:upartial}
 GV bound from Theorem~\ref{thm1} is a special case of Theorem~\ref{th:upartial} for $u \leq q-1$.
\end{rem}
\subsubsection{Finite GV bound based on Construction~\ref{cons-ext}}\label{qgequ}
\begin{lem}\label{Lem:Averaging}
Let $q$ be a prime power, and let $1\leq k <n$. 
Let $E\subset \F_q^n\setminus\{0\}$. The fraction of $[n,k]_q$ codes with non-empty intersection with $E$ is less than
$|E|/q^{n-k}$.
\end{lem}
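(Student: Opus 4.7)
The plan is to prove the claim by a standard double-counting (averaging) argument over the set of all $k$-dimensional subspaces of $\F_q^n$. First, I would recall the well-known fact that the total number of $[n,k]_q$ codes is the Gaussian binomial coefficient $\binom{n}{k}_q$, and that the number of $[n,k]_q$ codes containing a fixed nonzero vector $\ve{e}\in\F_q^n$ does not depend on $\ve{e}$ (by the transitive action of $\mathrm{GL}_n(\F_q)$ on nonzero vectors). Call this common number $N$.

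Next, I would compute $N$ by double-counting pairs $(\mycode{C},\ve{e})$ where $\mycode{C}$ is an $[n,k]_q$ code and $\ve{e}\in\mycode{C}\setminus\{\ve{0}\}$. Counting first over codes gives $\binom{n}{k}_q(q^k-1)$, while counting first over nonzero vectors gives $(q^n-1)N$, yielding the identity
\[
\frac{N}{\binom{n}{k}_q} = \frac{q^k-1}{q^n-1}.
\]
Thus the fraction of $[n,k]_q$ codes containing any one fixed nonzero vector is exactly $(q^k-1)/(q^n-1)$.

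Then I would apply a union bound over $\ve{e}\in E$: the fraction of $[n,k]_q$ codes meeting $E$ non-trivially is at most $|E|(q^k-1)/(q^n-1)$. To finish, it suffices to show $(q^k-1)/(q^n-1) < 1/q^{n-k}$, equivalently $q^{n-k}(q^k-1) < q^n-1$, equivalently $q^n-q^{n-k} < q^n-1$, i.e.\ $q^{n-k}>1$, which holds since $k<n$.

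I do not expect a real obstacle: the only subtlety is using the strict inequality $q^{n-k}>1$ (which requires $k<n$, as assumed) to turn the union bound into a strict inequality, and being careful that the union bound is over $E\subset\F_q^n\setminus\{\ve{0}\}$ so that the "fixed nonzero vector" count applies uniformly. The resulting strict bound is then exactly what is used in subsequent Gilbert--Varshamov arguments in the paper to guarantee the existence of an $[n,k]_q$ code avoiding a chosen set $E$ of forbidden codeword differences.
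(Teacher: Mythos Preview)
Your proposal is correct and follows essentially the same averaging/union-bound argument as the paper: bound the number of codes meeting $E$ by $\sum_{C}|C\cap E|$, evaluate this sum via the identity that each fixed nonzero vector lies in a $(q^k-1)/(q^n-1)$ fraction of all $[n,k]_q$ codes, and use $k<n$ for the strict inequality. The only cosmetic difference is that the paper cites this identity from Loeliger's averaging lemma, whereas you derive it directly by double-counting.
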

\begin{proof}
Let $\mycode{C}$ be the set of all $[n,k]_q$ codes. Obviously,
\[ \Big|\big\{C\in \mycode{C}\mid C\cap E \neq \emptyset \big\}\Big| \;\; \hspace{-1ex}\leq \hspace{-2ex}\sum_{C\in \mycode{C}: C\cap E\neq \emptyset}\big| \; C\cap E \; \big| =
\sum_{C\in \mycode{C}} \big| \; C\cap E\;\big| . \]
It follows from \cite[Lemma 3]{LoeIT} that 
\[ \frac{1}{\mid \mycode{C}\mid} \sum_{C\in \mycode{C}} \big| \; C\cap E \;\big| = \frac{q^k-1}{q^n-1}|E| < \frac{|E|}{q^{n-k}} . \tag*{\qedhere}\]
\end{proof}
\begin{rem}\label{rem:divide-by-(q-1)}
If $E$ has the additional property that $\lambda \ve{e}\in E$ for all $\ve{e}\in E$ and $\lambda\in\F_q\setminus\{0\}$,
then the upper bound in Lemma~\ref{Lem:Averaging} can be reduced to 
$|E|/(q-1)q^{n-k}$. 
\end{rem}
\begin{lem}\label{Th:GV-q}
	Let $k,n,d$ and $d^{\perp}$ be integers such that
	\begin{align*}
	& \sum_{i=0}^{d-1} {n\choose i} (q-1)^i < \frac{1}{2} q^{n-k} \mbox{ and }\\
	& \sum_{i=0}^{d^{\perp}-1} {n \choose i} (q-1)^i < \frac{1}{2} q^{k}.
\end{align*}
	There exists a $q$-ary $[n,k]$ code $\mycode{C}$ with minimum distance at least $d$
	 such that $\mycode{C}^{\perp}$ has minimum distance at least $d^{\perp}$.
\end{lem}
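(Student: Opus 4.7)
The plan is to use a double-counting (averaging) argument based on Lemma~\ref{Lem:Averaging}, handling separately the requirement that $\mycode{C}$ has large minimum distance and the requirement that $\mycode{C}^{\perp}$ has large minimum distance, then combining via a union bound over the set of $[n,k]_q$ codes.

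First, I would introduce the two ``bad'' sets
\[ E_1 = \{\ve{x}\in\F_q^n\setminus\{\ve{0}\} : \wt(\ve{x}) < d\}, \quad E_2 = \{\ve{y}\in\F_q^n\setminus\{\ve{0}\} : \wt(\ve{y}) < d^{\perp}\}, \]
so that $|E_1|=\sum_{i=1}^{d-1}\binom{n}{i}(q-1)^i$ and $|E_2|=\sum_{i=1}^{d^{\perp}-1}\binom{n}{i}(q-1)^i$. An $[n,k]_q$ code $\mycode{C}$ has minimum distance at least $d$ if and only if $\mycode{C}\cap E_1 = \emptyset$, and $\mycode{C}^{\perp}$ has minimum distance at least $d^{\perp}$ if and only if $\mycode{C}^{\perp}\cap E_2 = \emptyset$. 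By Lemma~\ref{Lem:Averaging}, the fraction of $[n,k]_q$ codes that meet $E_1$ is strictly less than $|E_1|/q^{n-k}$, and the first hypothesis of the lemma gives that this quantity is strictly less than $1/2$.

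Next, I would exploit the fact that $\mycode{C}\mapsto\mycode{C}^{\perp}$ is a bijection between the set of $[n,k]_q$ codes and the set of $[n,n-k]_q$ codes (since $(\mycode{C}^{\perp})^{\perp}=\mycode{C}$ and $\dim\mycode{C}^{\perp}=n-k$). Hence the fraction of $[n,k]_q$ codes $\mycode{C}$ with $\mycode{C}^{\perp}\cap E_2 \neq \emptyset$ equals the fraction of $[n,n-k]_q$ codes $\mycode{D}$ with $\mycode{D}\cap E_2 \neq \emptyset$. Applying Lemma~\ref{Lem:Averaging} a second time with $k$ replaced by $n-k$, this fraction is strictly less than $|E_2|/q^{n-(n-k)} = |E_2|/q^k$, which the second hypothesis bounds by $1/2$.

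Finally, a union bound over the two failure events shows that the fraction of $[n,k]_q$ codes failing at least one of the two required properties is strictly less than $1/2+1/2 = 1$. Therefore at least one $[n,k]_q$ code $\mycode{C}$ simultaneously avoids $E_1$ and has its dual avoid $E_2$, which yields the claim. There is no serious obstacle here; the only step requiring care is the duality bijection used to transfer the averaging lemma to the dual constraint, and to ensure that all inequalities remain strict so that the union bound leaves a nonempty complement.
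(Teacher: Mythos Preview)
Your proof is correct and follows essentially the same approach as the paper: apply Lemma~\ref{Lem:Averaging} to bound the fraction of $[n,k]_q$ codes failing the primal distance condition, use the duality bijection $\mycode{C}\mapsto\mycode{C}^{\perp}$ to transfer the lemma to the dual distance condition, and conclude by a union bound (the paper phrases this equivalently as ``more than half satisfy each property, so some code satisfies both'').
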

\begin{proof}
Let $\mycode{C}$ denote the set of all $[n,k]_q$ codes. By applying Lemma~\ref{Lem:Averaging} with $E=\{\ve{e}\in \F_q^n \mid 1\leq \wt(\ve{e}) \leq d-1\}$ and using
the first condition of the lemma, we see that more than half of the codes in $\mycode{C}$ have empty intersection with $E$, that is, have minimum distance at least $d$.
Similarly, more than half of all $q$-ary $[n,n-k]$ codes have minimum distance at least $d^{\perp}$, and so more than half of the codes in $\mycode{C}$ have a dual with minimum distance at least $d^{\perp}$. We conclude that $\mycode{C}$ contains a code with both desired properties.
\end{proof}
\begin{thmmystyle} [Gilbert-Varshamov-like bound by Construction~\ref{cons-ext}]
	\label{th12_new} 
	Let $q$ be a prime power. Suppose the positive integers $u,t,n,k,l$ with $u,t\leq n$ and $l<k$ satisfy
	\begin{equation}\label{eq:GV-like-bound_dual_before_extenstion}
		\sum_{i=0}^{2t} \binom{n}{i} (q-1)^i
		< \frac{1}{2}q^{n-l} ,
	\end{equation}	
	\begin{equation}\label{eq:GV-like-bound_dual_new}
		 \sum_{i=0}^{u-q+2} \binom{n}{i} (q-1)^i
		< \frac{1}{2}q^{l} ,
	\end{equation}
	\begin{equation}\label{eq:GV-like-bound_k_new}
		 \sum_{i=0}^{2t} \binom{n}{i} (q-1)^i < q^{n-k+1} .
	\end{equation}
	Then there is a $q$-ary $(u,1,t)$ PSMC of length $n$ and cardinality $q^{k-l}$.
\end{thmmystyle}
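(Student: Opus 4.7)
The plan is to combine the two lemmas preceding the theorem (Lemma~\ref{Th:GV-q} and Lemma~\ref{lem:GVrefined}) with the construction-and-masking result Theorem~\ref{thm-ext}. The statement of Theorem~\ref{th12_new} asks for a $(u,1,t)$ PSMC of size $q^{k-l}$, and Theorem~\ref{thm-ext} produces exactly such an object from an $[n,k,\geq 2t+1]_q$ code $\mycode{C}$ that contains a subcode (spanned by the rows of $\H_0$) isomorphic to $\mycode{C}_0^{\perp}$, where $\mycode{C}_0$ has dual distance $d_0 \geq u-q+3$. So what I have to do is build, purely by counting, such a nested pair of codes.

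First I would invoke Lemma~\ref{Th:GV-q} with the substitutions ``$k$''$\,\leftarrow l$, ``$d$''$\,\leftarrow 2t+1$, ``$d^{\perp}$''$\,\leftarrow u-q+3$. The two hypotheses of that lemma become exactly inequalities~\eqref{eq:GV-like-bound_dual_before_extenstion} and~\eqref{eq:GV-like-bound_dual_new} of the theorem, so the lemma furnishes an $[n,l]_q$ code $\mycode{D}$ with minimum distance at least $2t+1$ whose dual $\mycode{D}^{\perp}$ has minimum distance at least $u-q+3$. Note that I need the distance lower bound on $\mycode{D}$ itself (not only on $\mycode{D}^{\perp}$) because $\mycode{D}$ will later sit inside the larger code $\mycode{C}$ that must have distance $\geq 2t+1$.

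Next, I would apply Lemma~\ref{lem:GVrefined} with ``$s$''$\,\leftarrow l$, ``$\mycode{C}_s$''$\,\leftarrow \mycode{D}$ and ``$d$''$\,\leftarrow 2t+1$. Its hypothesis is precisely inequality~\eqref{eq:GV-like-bound_k_new}, and the assumption $l<k$ from the theorem ensures $k\geq s$. The lemma therefore produces an $[n,k]_q$ code $\mycode{C}$ with minimum distance at least $2t+1$ that contains $\mycode{D}$ as a subcode. Choosing any basis of $\mycode{D}$ as the $l$ rows of $\H_0$ in Construction~\ref{cons-ext} (so that $\mycode{D}=\mycode{C}_0^{\perp}$ in the notation of that construction), the remaining $k-l$ rows of a basis of $\mycode{C}$ serve as $\ve{G}_1$, giving a generator matrix of the form required.

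Finally, I would feed this pair $(\mycode{C},\mycode{D})$ into Theorem~\ref{thm-ext}: the guaranteed dual distance $d_0\geq u-q+3$ gives $d_0+q-3\geq u$, and $\lfloor (d-1)/2 \rfloor \geq t$, so Construction~\ref{cons-ext} outputs a $(u,1,t)$ PSMC of length $n$ and cardinality $q^{k-l}$, which is exactly what is claimed. I do not foresee a real obstacle here since each of the three inequalities in the statement is designed to match one of the hypotheses of the underlying lemmas; the only thing requiring mild care is making sure the subcode used for $\H_0$ indeed has the correct dual-distance role and that the dimensions line up so that $\ve{G}_1$ has $k-l$ rows, which follows from $\H_0$ having full row rank $l$ (a consequence of $\mycode{D}$ having dimension $l$).
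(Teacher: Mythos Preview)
Your proposal is correct and follows essentially the same route as the paper: apply Lemma~\ref{Th:GV-q} to get the $[n,l]_q$ subcode with the right distance and dual distance, extend it via Lemma~\ref{lem:GVrefined} to an $[n,k]_q$ code of distance $\geq 2t+1$, and then invoke Theorem~\ref{thm-ext}. Your identification of the roles (in particular that the $[n,l]$ code you build is $\mycode{C}_0^{\perp}$ in the notation of Construction~\ref{cons-ext}, so its dual distance is the $d_0$ needed there) is in fact stated more carefully than in the paper's own proof.
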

\begin{proof} 
	According to Lemma~\ref{Th:GV-q}, (\ref{eq:GV-like-bound_dual_before_extenstion}) and (\ref{eq:GV-like-bound_dual_new})
imply the existence of an $[n,l]_q$ code $\mycode{C}_0$ with minimum distance at least $2t+1$
for which the dual code has minimum distance at least $u-q+3$. 
Lemma~\ref{lem:GVrefined} shows that $\mycode{C}_0$ can be extended to an $[n,k]_q$ code $\mycode{C}$ with minimum distance at least $d$.
 As $\mycode{C}$ has a generator matrix of the form required by Construction~\ref{cons-ext}, the theorem follows.
\end{proof}
\subsubsection{Finite GV bound based on Proposition~\ref{proposition_binary}}
In this section, we give sufficient conditions for the existence of matrices satisfying the conditions of Proposition~\ref{proposition_binary}.
We start with Lemma~\ref{lm:bind} and Lemma~\ref{Th:GV-s}, then we prove the main theorem (Theorem~\ref{Th:GV-s2}).
\begin{lem}\label{lm:bind} Let $\G$ be a $k\times n$ matrix over $\mathbb{F}_q$. For $s\geq 1$, let 
	\[ d_s = \min \{ \wt(\vmes\G) \mid \vmes\in \mathbb{F}_{q^s}^k \setminus \{\ve{0}\}\}. \]
	Then $d_s=d_1$. 
\end{lem}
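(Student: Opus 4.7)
The plan is to prove both inequalities $d_s\le d_1$ and $d_s\ge d_1$ separately. The first direction is essentially immediate: since $\mathbb{F}_q\subset\mathbb{F}_{q^s}$, every vector $\vmes\in\mathbb{F}_q^k\setminus\{\ve{0}\}$ also lies in $\mathbb{F}_{q^s}^k\setminus\{\ve{0}\}$, so the minimum over the larger set cannot exceed the minimum over the smaller one. This gives $d_s\le d_1$ without any effort.

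The substantive direction is $d_s\ge d_1$. Here the idea is to use a basis expansion. Fix a basis $\beta_0,\ldots,\beta_{s-1}$ of $\mathbb{F}_{q^s}$ over $\mathbb{F}_q$. Any $\vmes\in\mathbb{F}_{q^s}^k\setminus\{\ve{0}\}$ decomposes uniquely as $\vmes=\sum_{i=0}^{s-1}\beta_i\vmes_i$ with $\vmes_i\in\mathbb{F}_q^k$, and at least one $\vmes_i$ is nonzero. By $\mathbb{F}_q$-linearity of the map $\vmes\mapsto\vmes\G$, we obtain $\vmes\G=\sum_{i=0}^{s-1}\beta_i(\vmes_i\G)$.

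The key observation is now that since $\beta_0,\ldots,\beta_{s-1}$ are $\mathbb{F}_q$-linearly independent, the $j$-th coordinate of $\vmes\G$ vanishes if and only if all $(\vmes_i\G)_j$ vanish. Consequently
\[
\mathrm{supp}(\vmes\G)=\bigcup_{i=0}^{s-1}\mathrm{supp}(\vmes_i\G),
\]
so $\wt(\vmes\G)\ge \wt(\vmes_{i^\ast}\G)$ for any index $i^\ast$ with $\vmes_{i^\ast}\ne\ve{0}$. Since $\vmes_{i^\ast}\in\mathbb{F}_q^k\setminus\{\ve{0}\}$, the right-hand side is at least $d_1$, proving $d_s\ge d_1$.

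I do not anticipate a real obstacle here; the statement is a standard fact whose only subtlety is recognizing that the support of a sum of basis-weighted vectors equals the union of the individual supports, which is a direct consequence of the $\mathbb{F}_q$-linear independence of the basis. The argument extends without modification to any finite-dimensional field extension.
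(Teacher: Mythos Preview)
Your proof is correct and complete. The approach, however, differs from the paper's. The paper establishes $d_s\ge d_1$ via the trace map $T:\mathbb{F}_{q^s}\to\mathbb{F}_q$: given $\vmes\in\mathbb{F}_{q^s}^k\setminus\{\ve{0}\}$, one picks $\lambda\in\mathbb{F}_{q^s}$ with $T(\lambda\vmes)\neq\ve{0}$ and observes that $\wt(\vmes\G)=\wt(\lambda\vmes\G)\ge\wt(T(\lambda\vmes\G))=\wt(T(\lambda\vmes)\G)\ge d_1$, using that $\G$ has entries in $\mathbb{F}_q$ so the trace commutes with multiplication by $\G$. Your argument instead fixes an $\mathbb{F}_q$-basis of $\mathbb{F}_{q^s}$ and expands $\vmes$ coordinatewise, obtaining the stronger statement that $\mathrm{supp}(\vmes\G)$ equals the union of the supports of the $\vmes_i\G$. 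Both methods exploit the same underlying fact (an $\mathbb{F}_q$-linear projection of a zero coordinate remains zero), but yours uses all $s$ basis projections at once and is arguably more elementary, while the trace argument needs only a single well-chosen $\mathbb{F}_q$-linear functional and is a classical device in algebraic coding theory.
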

\begin{proof}
	The proof of Lemma~\ref{lm:bind} can be found in the appendix. 
\end{proof}
Now we introduce Lemma~\ref{Th:GV-s} which is the binary version of Lemma~\ref{Th:GV-q} but with an extra restriction on the weight of the words.
\begin{lem}\label{Th:GV-s}
	Let $k,n,d$ and $d^{\perp}$ be integers such that
	\[ \sum_{i=0}^{d-1} {n\choose i} < \frac{1}{4} \cdot 2^{n-k} \mbox{ and }
	\sum_{i=0}^{d^{\perp}-1} {n \choose i} <\frac{1}{2} \cdot 2^k. \]
	There exists a binary $[n,k]$ code $\mycode{C}$ with minimum distance at least $d$ without a word of weight more than $n-d+1$ such that $\mycode{C}^{\perp}$ has minimum distance at least $d^{\perp}$.
\end{lem}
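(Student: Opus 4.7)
The plan is to adapt the averaging argument of Lemma~\ref{Th:GV-q} by applying Lemma~\ref{Lem:Averaging} (with $q=2$) three times rather than twice: once each to bound the fraction of $[n,k]_2$ codes that contain a low-weight nonzero codeword, that contain a codeword of weight exceeding $n-d+1$, or whose dual contains a low-weight nonzero codeword. The hypotheses of the lemma will then force the union of these three ``bad'' families to cover strictly less than the whole space of $[n,k]_2$ codes, so a simultaneously good code must exist.

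More precisely, let $\mycode{C}$ denote the set of all $[n,k]_2$ codes and introduce
\begin{align*}
E_1 &= \{\ve{e}\in\F_2^n \mid 1\leq \wt(\ve{e})\leq d-1\}, \\
E_2 &= \{\ve{e}\in\F_2^n \mid \wt(\ve{e})\geq n-d+2\}, \\
E_3 &= \{\ve{e}\in\F_2^n \mid 1\leq \wt(\ve{e})\leq d^{\perp}-1\}.
\end{align*}
A code $C\in\mycode{C}$ violates one of the three desired properties precisely when $C\cap E_1\neq\emptyset$, $C\cap E_2\neq\emptyset$, or $C^{\perp}\cap E_3\neq\emptyset$. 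The symmetry $\binom{n}{i}=\binom{n}{n-i}$ gives $|E_2|=\sum_{i=0}^{d-2}\binom{n}{i}$, whence $|E_1|+|E_2|<2\sum_{i=0}^{d-1}\binom{n}{i}$. Since $C\mapsto C^{\perp}$ is a bijection between $[n,k]_2$ codes and $[n,n-k]_2$ codes, two applications of Lemma~\ref{Lem:Averaging} bound the fraction of codes that are bad in one of the first two senses by $(|E_1|+|E_2|)/2^{n-k}$, and the fraction that are bad in the third sense by $|E_3|/2^k$.

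Adding the three upper bounds and invoking the two hypotheses of the lemma yields a total bad fraction strictly less than $2\cdot\tfrac{1}{4}+\tfrac{1}{2}=1$, establishing existence. The only genuine subtlety is the treatment of the second condition: the forbidden heavy weights $n-d+2,\dots,n$ contribute essentially the same count as the forbidden low weights, which is exactly what forces the strengthened factor $\tfrac{1}{4}$ (instead of the $\tfrac{1}{2}$ that suffices in Lemma~\ref{Th:GV-q}) in the hypothesis on $\sum_{i=0}^{d-1}\binom{n}{i}$.
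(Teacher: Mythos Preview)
Your proposal is correct and follows essentially the same argument as the paper. The only cosmetic difference is that the paper merges your $E_1$ and $E_2$ into a single forbidden set $E=\{\ve{e}\in\F_2^n\mid 1\leq\wt(\ve{e})\leq d-1\text{ or }\wt(\ve{e})\geq n-d+1\}$ and applies Lemma~\ref{Lem:Averaging} once to it (still obtaining $|E|<2\sum_{i=0}^{d-1}\binom{n}{i}<\tfrac12\cdot 2^{n-k}$), rather than handling the low-weight and high-weight words separately; the bound, the use of the duality bijection for $E_3$, and the final ``bad fraction $<1$'' conclusion are identical.
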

\begin{proof}
Similar to the proof of Lemma~\ref{Th:GV-q}.
Let $\mycode{C}$ denote the set of all binary $[n,k]$ codes.
By applying Lemma~\ref{Lem:Averaging} with $E=\{ \ve{e}\in F_2^n\mid 1\leq \wt(\ve{e}) \leq d-1 \text{ or }\wt(\ve{e})\geq n-d+1\}$,
we infer that the first inequality implies that more than half of the codes in $\mycode{C}$ contain no element from $E$. 
	Similarly, the second inequality implies that more than half of the binary $[n,n-k]$ codes have minimum weight at least $d^{\perp}$, and so more than half of the codes in $\mycode{C}$ have a dual with minimum distance at least $d^{\perp}$. 
We conclude that there is a code 
in $\mycode{C}$ enjoying both desired properties.
\end{proof}
\begin{thmmystyle} [Gilbert-Varshamov-like bound by Construction~\ref{consbinary}]\label{Th:GV-s2}
	Let $n,k,l,u,t,\pow$ be positive integers with $u\leq n, 2t< n$ and $l<k$ be such that
	\begin{equation}\label{eq:GV-like-bound_bin}
	 \sum_{i=0}^{2t} {n\choose i} < \frac{1}{4}\cdot 2^{n-l},
\end{equation}
\begin{equation}\label{eq:GV-like-bound_dual_bin}
	\sum_{i=0}^{\lfloor \frac{u}{2^{\pow-1}}\rfloor} {n\choose i}<\frac{1}{2}\cdot 2^{l} ,
\end{equation}
\begin{equation}\label{eq:GV-like-bound_k_bin} 
	\sum_{i=0}^{2t} {n\choose i} (2^{\pow}-1)^i < 2^{\pow(n-k+1)},
\end{equation}
Then there exists a $(u,1,t)$ PSMC of length $n$ over $\F_{2^{\pow}}$ with cardinality
$2\cdot 2^{\pow(k-l-1)}2^{l(\pow-1)}$.
\end{thmmystyle}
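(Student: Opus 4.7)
The plan is to build a $k\times n$ generator matrix of the form stipulated by Construction~\ref{consbinary} with $\ve{x}=\ve{1}$, and then invoke Proposition~\ref{proposition_binary}. Set $d_0:=\lfloor u/2^{\pow-1}\rfloor+1$, so that $2^{\pow-1}d_0-1\geq u$; any PSMC produced this way will therefore mask at least $u$ partially stuck cells.

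The first step is to apply Lemma~\ref{Th:GV-s} with parameters $k_{\mathrm{lem}}=l$, $d=2t+1$, and $d^{\perp}=d_0$; the hypotheses then reduce to exactly (\ref{eq:GV-like-bound_bin}) and (\ref{eq:GV-like-bound_dual_bin}). Inspection of its proof (whose bad set forbids all weights in $[1,d-1]\cup[n-d+1,n]$) yields a binary $[n,l]$ code $\mycode{D}$ whose nonzero codewords all have weight in $[2t+1,\,n-2t-1]$ and whose dual $\mycode{C}_0:=\mycode{D}^{\perp}$ has minimum distance at least $d_0$. Take $\H_0$ to be a generator matrix of $\mycode{D}$; then $\H_0$ is a binary parity-check matrix of $\mycode{C}_0$ with the distance required by the construction.

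The key step is to show that the $\F_{2^{\pow}}$-linear code $\mycode{C}_{l+1}$ spanned by the rows of $\H_0$ together with $\ve{1}$ has dimension $l+1$ and minimum distance at least $2t+1$. For the dimension, since $\wt(\ve{1})=n$ exceeds the weight upper bound on $\mycode{D}$, a short basis expansion shows $\ve{1}$ does not lie in the $\F_{2^{\pow}}$-extension of $\mycode{D}$, so $\dim\mycode{C}_{l+1}=l+1$. For the distance, by scaling every nonzero codeword is equivalent to one of the form $\ve{c}$ or $\ve{c}+\ve{1}$, where $\ve{c}$ lies in the $\F_{2^{\pow}}$-extension of $\mycode{D}$. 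Codewords of the first type have weight at least $2t+1$ by Lemma~\ref{lm:bind}. For the second type, decompose $\ve{c}=\sum_{j=0}^{\pow-1}\beta_j\ve{d}^{(j)}$ with $\ve{d}^{(j)}\in\mycode{D}$; a coordinate $i$ of $\ve{c}+\ve{1}$ vanishes exactly when $d^{(0)}_i=1$ and $d^{(j)}_i=0$ for all $j>0$. If some $\ve{d}^{(j)}$ with $j>0$ is nonzero, the number of such $i$ is at most $n-\wt(\ve{d}^{(j)})\leq n-2t-1$; otherwise it equals $\wt(\ve{d}^{(0)})\leq n-2t-1$ by the weight upper bound on $\mycode{D}$. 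Either way $\wt(\ve{c}+\ve{1})\geq 2t+1$.

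Once $\mycode{C}_{l+1}$ has been secured, Lemma~\ref{lem:GVrefined} over $\F_{2^{\pow}}$ extends it to an $[n,k]_{2^{\pow}}$ code $\mycode{C}$ of minimum distance at least $2t+1$; its hypothesis is exactly (\ref{eq:GV-like-bound_k_bin}). Since $\mycode{C}$ contains both the rows of $\H_0$ and the vector $\ve{1}$, it admits a generator matrix of the form stipulated by Construction~\ref{consbinary} with $\ve{x}=\ve{1}$, and Proposition~\ref{proposition_binary} then yields a $2^{\pow}$-ary $(2^{\pow-1}d_0-1,\,1,\,t)$ PSMC of length $n$ and cardinality $2\cdot 2^{\pow(k-l-1)}2^{l(\pow-1)}$, which in particular is a $(u,1,t)$ PSMC of the claimed size. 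The main obstacle lies in the third paragraph: after lifting $\mycode{D}$ to $\F_{2^{\pow}}$ and adjoining $\ve{1}$, the minimum distance could collapse if $\mycode{D}$ contained words of weight close to $n$. This is precisely why Lemma~\ref{Th:GV-s} must forbid high-weight codewords as well, which explains the tighter factor $\tfrac14$ in (\ref{eq:GV-like-bound_bin}) as opposed to the $\tfrac12$ of the one-sided GV averaging used in (\ref{eq:GV-like-bound_dual_bin}).
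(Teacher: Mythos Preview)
Your proof is correct and follows essentially the same route as the paper: obtain $\H_0$ via Lemma~\ref{Th:GV-s}, adjoin $\ve{1}$, verify the resulting $[n,l+1]$ code over $\F_{2^\pow}$ has minimum distance at least $2t+1$, extend via Lemma~\ref{lem:GVrefined}, and finish with Proposition~\ref{proposition_binary}.

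The only difference is in how the distance of $\mycode{C}_{l+1}$ is established. The paper first observes that the \emph{binary} code generated by $\begin{bmatrix}\H_0\\\ve{1}\end{bmatrix}$ already has minimum distance at least $2t+1$ (an immediate consequence of the two-sided weight restriction on $\mycode{D}$), and then applies Lemma~\ref{lm:bind} once to this $(l+1)\times n$ binary matrix to lift the conclusion to $\F_{2^\pow}$. You instead split into the two coset types, invoke Lemma~\ref{lm:bind} for the $\H_0$-span, and argue directly via a basis decomposition for the $\ve{1}$-coset. Your case analysis is correct (including the implicit $\ve{c}=\ve{0}$ subcase, where $\ve{c}+\ve{1}=\ve{1}$ has weight $n$), but it effectively reproves by hand the instance of Lemma~\ref{lm:bind} that the paper simply cites. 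The paper's version is a bit more economical; yours makes the role of the high-weight exclusion more explicit.
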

\begin{proof} 
	By Lemma~\ref{Th:GV-s}, 
	there exists a binary $[n,l]$ code $\mycode{C}_0$ with minimum distance at least $2t+1$ for which
$\mycode{C}_0^{\perp}$ has minimum distance at least $\lfloor \frac{u}{2^{\pow-1}}\rfloor+1$ with the following additional property: if $\H_0\in \F_2^{l\times n}$ is a generator matrix for $\mycode{C}_0$, then the binary code $\mycode{C}_{\pow}$ with generator matrix 
	\scalebox{0.8}{$\begin{bmatrix} \ve{H}_0 \\ \ve{1} \end{bmatrix}$}
	has minimum distance at least $2t+1$.
	According to Lemma~\ref{lm:bind}, the code $\mycode{C}_{\pow}$ over $\mathbb{F}_{2^{\pow}}$ with this generator matrix has
	minimum distance at least $2t+1$ as well.
 Lemma~\ref{lem:GVrefined} implies that $\mycode{C}_{\pow}$ can be extended to an $[n,k]_{2^{\pow}}$ code
with minimum distance at least $2t+1$. The $[n,k]$ code has a generator matrix of the form
\scalebox{0.8}{$\ve{G} = \begin{bmatrix} \ve{H}_0 \\\ve{G}_1\\ \ve{1} \end{bmatrix}$}.
 Application of Proposition~\ref{proposition_binary} yields the claim.
\end{proof}
\subsubsection{Finite GV bound from trivial construction}
Clearly, a $(q-1)$-ary code of length $n$ with minimum distance at least $2t+1$ is a $q$-ary $(u,1,t)$ PSMC of length $n$.
Combining this observation with the Gilbert bound for a $q-1$-ary alphabet, we obtain the following.
\begin{thmmystyle}\label{th:GVtrivial}
Let $q\geq 3$, and let 
\begin{equation*}
	M= \left\lceil \dfrac{(q-1)^n}{\sum_{i=0}^{2t} {n\choose i} (q-2)^i} \right\rceil.
\end{equation*}
There is a $q$-ary $(n,1,t)$ PSMC of length $n$ and cardinality~$M$.
\end{thmmystyle}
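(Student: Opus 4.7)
The plan is to exhibit the $q$-ary $(n,1,t)$ PSMC by constructing a suitable $(q-1)$-ary code and then reinterpreting its symbols under the ordering of $\Fq$ fixed in Section~\ref{ssec:notation}. Recall that under this ordering, $0$ is the smallest element and every other element $a\in\Fq$ satisfies $a\geq 1$. Hence, a word whose entries all lie in $\Fq\setminus\{0\}$ automatically satisfies the constraint $\ve{c}\geq \ve{s}$ for any $\ve{s}\in\{0,1\}^n$, which is exactly the partial-stuck-at-$1$ condition for up to $u=n$ cells.

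First, I would invoke the classical Gilbert (sphere-covering) bound over the alphabet of size $q-1$: a greedy argument guarantees the existence of a $(q-1)$-ary code $\mycode{C}'\subseteq (\Fq\setminus\{0\})^n$ of length $n$ and minimum Hamming distance at least $2t+1$, whose cardinality is at least
\[
\left\lceil \frac{(q-1)^n}{\sum_{i=0}^{2t}\binom{n}{i}(q-2)^i}\right\rceil = M,
\]
since the ball of radius $2t$ around a word in $(\Fq\setminus\{0\})^n$ (with respect to the $(q-1)$-ary Hamming metric) has volume exactly $\sum_{i=0}^{2t}\binom{n}{i}(q-2)^i$.

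Next, I would specify the encoder $\mathcal{E}$ and decoder $\mathcal{D}$ required by the PSMC definition in Section~\ref{ssec:definitions}. Fix any injective mapping of the message set $\mathcal{M}$ of size $M$ onto (a subset of) $\mycode{C}'$; the encoder $\mathcal{E}(\vmes,\ve{s})$ ignores $\ve{s}$ and simply outputs the codeword $\ve{c}\in \mycode{C}'$ associated with $\vmes$. Because every entry of $\ve{c}$ is in $\Fq\setminus\{0\}$, the constraint $\ve{c}\geq \ve{s}$ is met for any $\ve{s}\in\{0,1\}^n$ with $\wt(\ve{s})\leq n$. The decoder $\mathcal{D}$ performs nearest-codeword decoding in $\mycode{C}'$ viewed as a code in $\Fq^n$: given $\ve{y}=\ve{c}+\ve{e}$ with $\wt(\ve{e})\leq t$, since $\mycode{C}'$ has minimum Hamming distance at least $2t+1$, nearest-codeword decoding recovers $\ve{c}$ uniquely and hence $\vmes$.

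There is essentially no hard step here; the only subtlety worth addressing explicitly is that the $q$-ary errors under consideration in the PSMC model are measured in the $q$-ary Hamming metric, while the minimum distance of $\mycode{C}'$ was constructed in the $(q-1)$-ary Hamming metric. These two metrics coincide on $(\Fq\setminus\{0\})^n$ because the Hamming distance depends only on whether entries agree, not on the size of the surrounding alphabet; so minimum distance $2t+1$ transfers directly and unique decoding from any $q$-ary error of weight at most $t$ follows.
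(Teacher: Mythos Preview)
Your proof is correct and follows exactly the approach the paper uses: the paper simply observes, just before the theorem, that a $(q-1)$-ary code of length $n$ with minimum distance at least $2t+1$ is automatically a $q$-ary $(n,1,t)$ PSMC, and then applies the Gilbert bound over the $(q-1)$-ary alphabet. Your write-up makes the encoder/decoder and the alphabet-size subtlety explicit, but the underlying argument is identical.
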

So far we have covered the GV-like bounds for our code constructions for finite length $n$. 
\subsection{Asymptotic GV Bound on PSMCs}
In this section, we present the asymptotic version of the GV bounds from the previous section. That is,
we provide lower bounds on the achievable rates of a $q$-ary $(u,1,t)$ PSMCs in the regime that the code length $n$ tends to infinity, and the number $u$ of partial stuck-at cells and the number $t$ of random errors both grow linearly in $n$.

We recall the well-known following lemma that estimates the volume of a Hamming ball using the $q$-ary entropy function.
\begin{lem}\label{es:Hamming_ball} For positive integers $n$, $q\geq 2$ and real $\delta$, $0\leq \delta \leq 1-\frac{1}{q}$,
	\[
	\Vol_q(n,\delta n) \leq q^{h_q(\delta)n},
	\]
where $\Vol_q(n,r)= \sum_{j=0}^r {n\choose j} (q-1)^j$	denotes the volume of a Hamming ball with radius $r$.
\end{lem}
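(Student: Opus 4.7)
The plan is to use the standard weighting trick based on the binomial theorem. The starting point is the identity
\[
(1 + (q-1)x)^n \;=\; \sum_{j=0}^{n} \binom{n}{j}(q-1)^j x^j,
\]
which is valid for every real $x$. I would fix some $x \in (0,1]$; then for $j \leq \delta n$ we have $x^{j} \geq x^{\delta n}$, so only keeping the first $\delta n + 1$ terms on the right-hand side and factoring out $x^{\delta n}$ gives
\[
\Vol_q(n,\delta n)\, x^{\delta n} \;\leq\; (1+(q-1)x)^n,
\quad\text{i.e.}\quad
\Vol_q(n,\delta n) \;\leq\; x^{-\delta n}(1+(q-1)x)^n.
\]

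Next I would optimise the free parameter. The natural choice is $x=\frac{\delta}{(1-\delta)(q-1)}$, which is the minimiser of the right-hand side. Before plugging it in I would check admissibility: the hypothesis $\delta \leq 1-\tfrac{1}{q}$ gives $1-\delta \geq 1/q$, hence $\delta/(1-\delta) \leq q-1$, showing $x \in (0,1]$ so the earlier inequality $x^j \geq x^{\delta n}$ was legal. With this $x$ a short calculation yields
\[
1 + (q-1)x \;=\; \frac{1}{1-\delta},
\qquad
x^{-\delta n}(1+(q-1)x)^n \;=\; \frac{(q-1)^{\delta n}}{\delta^{\delta n}(1-\delta)^{(1-\delta)n}},
\]
and taking $\log_q$ of the last expression recovers exactly $n\,h_q(\delta)$ by the very definition of $h_q$ given in Section~\ref{ssec:notation}.

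Finally, I would dispose of the two boundary values of $\delta$ separately, since the chosen $x$ is degenerate there: at $\delta=0$ the claim reduces to $1 \leq q^0 = 1$ (as $\Vol_q(n,0)=1$ and $h_q(0)=0$), and at $\delta = 1-1/q$ one can either take $x=1$ directly to get $\Vol_q(n,\delta n) \leq q^n = q^{h_q(1-1/q)\,n}$ or interpret the formula by continuity. There is no real obstacle in the argument; the only delicate point is verifying that the optimising $x$ stays in the interval $(0,1]$ on the full range $\delta \in [0,1-1/q]$, which is the place where the hypothesis on $\delta$ is actually used. Everything else is a one-line substitution.
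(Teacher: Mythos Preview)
Your proof is correct and is precisely the standard argument: the paper does not give its own proof but cites \cite[p.~105]{rotmroth} and \cite[Proposition~3.3.1]{GuRuSuBook}, both of which use exactly this weighting trick with the optimising choice $x=\delta/((1-\delta)(q-1))$. Nothing is missing; your check that $x\in(0,1]$ under the hypothesis $\delta\le 1-1/q$ and the handling of the endpoints are exactly what is needed.
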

\begin{proof}
	The proof of Lemma~\ref{es:Hamming_ball} has been stated in many references including \cite[p.105]{rotmroth} and
\cite[Proposition 3.3.1]{GuRuSuBook}.
\end{proof}
 \subsubsection{Asymptotic bound for Theorem~\ref{th:upartial}}
\begin{thmmystyle}\label{upartial_thm}
Let $q$ be a prime power. Let $0\leq \tau,\upsilon<1$ be such that
\[ 2(\tau+\frac{\upsilon}{q}) < 1-\frac{1}{q}. \]
For sufficiently large $n$, there exists an $(\lfloor \upsilon n\rfloor, 1, \lfloor \tau n\rfloor)$ PSMC of length $n$ and rate
at least 
$$ 1-h_q(2(\tau+\frac{\upsilon}{q}))- \frac{2}{n}. $$
\end{thmmystyle}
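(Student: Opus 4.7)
The plan is to specialize the finite-length Gilbert--Varshamov bound of Theorem~\ref{th:upartial} to $u = \lfloor \upsilon n \rfloor$ and $t = \lfloor \tau n \rfloor$, and then replace the combinatorial sufficient condition by the entropy estimate of Lemma~\ref{es:Hamming_ball} in order to isolate the achievable rate.

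Set $\rho := 2(\tau + \upsilon/q)$. First I would note the straightforward bound
\[
2\bigl(t + \lfloor u/q \rfloor\bigr) \;\leq\; 2\bigl(\tau n + \upsilon n / q\bigr) \;=\; \rho n.
\]
Since the summands $\binom{n}{i}(q-1)^i$ are positive, the left-hand side of the hypothesis of Theorem~\ref{th:upartial} is monotone in its upper limit, so it is at most $\Vol_q(n, \lfloor \rho n\rfloor)$. The hypothesis $\rho < 1 - 1/q$ places $\rho$ in the range where Lemma~\ref{es:Hamming_ball} applies, giving
\[
\sum_{i=0}^{2(t+\lfloor u/q\rfloor)} \binom{n}{i}(q-1)^i \;\leq\; \Vol_q(n,\lfloor \rho n\rfloor) \;\leq\; q^{h_q(\rho)\, n}.
\]

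Next I would choose $k := \lfloor n(1 - h_q(\rho))\rfloor$. With this choice, $h_q(\rho)\, n \leq n - k$, hence $q^{h_q(\rho) n} \leq q^{n-k} < q^{n-k+1}$, so the condition of Theorem~\ref{th:upartial} is met. Consequently a $q$-ary $(u,1,t)$ PSMC of length $n$ and cardinality $q^{k-1}$ exists. Its rate is
\[
\frac{k-1}{n} \;\geq\; \frac{n(1 - h_q(\rho)) - 2}{n} \;=\; 1 - h_q\!\bigl(2(\tau + \upsilon/q)\bigr) - \frac{2}{n},
\]
which is the claim.

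The argument is essentially a routine combination of two previously established results, so the only real subtleties are bookkeeping: checking that the floor functions on $\upsilon n$ and $\tau n$ do not perturb the leading-order rate, and verifying that $n$ is large enough so that $k = \lfloor n(1 - h_q(\rho))\rfloor$ is a positive integer and that $\lfloor \rho n \rfloor \leq (1-1/q)n$, the regime in which Lemma~\ref{es:Hamming_ball} is valid. Both follow from the strict inequality $\rho < 1 - 1/q$, since $h_q$ is continuous and strictly less than $1$ on $[0,1-1/q)$, so no delicate estimate is required.
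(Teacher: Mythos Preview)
Your proof is correct and follows essentially the same approach as the paper: apply Theorem~\ref{th:upartial} with $u=\lfloor \upsilon n\rfloor$, $t=\lfloor \tau n\rfloor$, bound the volume via Lemma~\ref{es:Hamming_ball}, and choose $k$ so that the rate comes out to $1-h_q(\rho)-\tfrac{2}{n}$. Your choice $k=\lfloor n(1-h_q(\rho))\rfloor$ is in fact identical to the paper's $k=n-\lceil nh_q(\rho)\rceil$, and your discussion of the floor-function and ``sufficiently large $n$'' bookkeeping is a bit more explicit than the paper's.
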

\begin{proof}
Let $n$ be a positive integer such that $\lceil n h_q(2(\tau + \frac{\upsilon}{q})\rceil < n$.
 Let $t=\lfloor \tau n\rfloor$ and $u=\lfloor \upsilon n \rfloor$.
Take $k=n-\lceil nh_q(2\tau + 2\frac{\upsilon}{q})\rceil $.
Lemma~\ref{es:Hamming_ball} implies that Vol$_q(n,2t+2\lfloor \frac{u}{q}\rfloor) \leq q^{n-k}$, and so, 
according to Theorem~\ref{th:upartial}, there is a $q$-ary $(u,1,t)$ PSMC of length $n$ with rate $\frac{k-1}{n}\geq
1-h_q(2(\tau + \frac{\upsilon}{q})) - \frac{2}{n}$.
\end{proof}
\subsubsection{Asymptotic GV bound from Construction~\ref{cons-ext}}
\begin{thmmystyle} [Asymptotic Gilbert-Varshamov-like bound from Theorem~\ref{th12_new}]\label{Th:aGVnew}
   Let $q$ be a prime power. Let $\upsilon,\tau$ be such that
 \[ 0< \upsilon,2\tau < 1- \frac{1}{q} \mbox{ and } h_q(\upsilon)+h_q(2\tau) < 1. \]
For sufficiently large $n$, there exists a $q$-ary $(\lfloor \upsilon n\rfloor,1, \lfloor\tau n \rfloor)$ PSMC of length $n$ and rate
at least 
$$1 - h_q(2\tau) -h_q(\upsilon)-\frac{4\log_q(2)+2}{n}. $$
\end{thmmystyle}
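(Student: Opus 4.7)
The plan is to apply the finite-length Theorem~\ref{th12_new} and then convert its three hypotheses into asymptotic statements using the entropy bound from Lemma~\ref{es:Hamming_ball}, exactly mirroring the argument used for Theorem~\ref{upartial_thm}. Set $u=\lfloor \upsilon n\rfloor$ and $t=\lfloor \tau n\rfloor$, and I will construct a PSMC by choosing the two integer parameters $l$ and $k$ from Theorem~\ref{th12_new} so that (i) all three inequalities $(\ref{eq:GV-like-bound_dual_before_extenstion})$, $(\ref{eq:GV-like-bound_dual_new})$, $(\ref{eq:GV-like-bound_k_new})$ are satisfied for $n$ large, and (ii) the dimension difference $k-l$ is as close as possible to $n\bigl(1-h_q(2\tau)-h_q(\upsilon)\bigr)$.

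Concretely, I will take
\[ l=\lfloor h_q(\upsilon)n+\log_q 2\rfloor+1 \qquad\text{and}\qquad k=n-\lceil h_q(2\tau)n\rceil. \]
The hypothesis $2\tau,\upsilon<1-\tfrac{1}{q}$ lets us invoke Lemma~\ref{es:Hamming_ball} to get $\sum_{i=0}^{2t}\binom{n}{i}(q-1)^i\leq q^{h_q(2\tau)n}$ and $\sum_{i=0}^{u-q+2}\binom{n}{i}(q-1)^i\leq q^{h_q(\upsilon)n}$ (here $h_q$ is monotone on $[0,1-1/q]$, so replacing $2t/n$ and $(u-q+2)/n$ by $2\tau$ and $\upsilon$ only loosens the inequality). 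With the above choices, $(\ref{eq:GV-like-bound_k_new})$ follows from $n-k+1=\lceil h_q(2\tau)n\rceil+1>h_q(2\tau)n$, and $(\ref{eq:GV-like-bound_dual_new})$ from $l>h_q(\upsilon)n+\log_q 2$. The remaining condition $(\ref{eq:GV-like-bound_dual_before_extenstion})$ reduces to $n-l>h_q(2\tau)n+\log_q 2$, which after substituting the upper bound $l\leq h_q(\upsilon)n+\log_q 2+1$ becomes $n\bigl(1-h_q(\upsilon)-h_q(2\tau)\bigr)>2\log_q 2+1$; this holds for all sufficiently large $n$ precisely because the asymptotic hypothesis $h_q(\upsilon)+h_q(2\tau)<1$ is strict.

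Once the three finite-length hypotheses are verified, Theorem~\ref{th12_new} produces a $q$-ary $(u,1,t)$ PSMC of length $n$ and cardinality $q^{k-l}$, so it remains to lower-bound the rate. Using the upper bound on $l$ and the lower bound $k\geq n-h_q(2\tau)n-1$, I get
\[ k-l\;\geq\;n\bigl(1-h_q(2\tau)-h_q(\upsilon)\bigr)-\log_q 2-2, \]
which yields rate at least $1-h_q(2\tau)-h_q(\upsilon)-\tfrac{\log_q 2+2}{n}$. Since $\log_q 2+2\leq 4\log_q 2+2$ for every prime power $q\geq 2$, this implies the claimed bound. There is no real mathematical obstacle; the only delicate point is the bookkeeping on ceilings and floors together with the additional $\log_q 2$ coming from the factor $\tfrac{1}{2}$ in inequalities $(\ref{eq:GV-like-bound_dual_before_extenstion})$ and $(\ref{eq:GV-like-bound_dual_new})$, and making sure the strict inequality $h_q(\upsilon)+h_q(2\tau)<1$ is what absorbs all the $O(1)$ slack so that the argument only breaks at small $n$.
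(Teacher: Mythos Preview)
Your proposal is correct and follows essentially the same approach as the paper: choose $l$ and $k$ so that the three inequalities of Theorem~\ref{th12_new} hold via Lemma~\ref{es:Hamming_ball}, then read off the rate $(k-l)/n$. The paper takes $l=\lceil n h_q(\upsilon)+2\log_q 2\rceil$ and $k=n-\lceil n h_q(2\tau)+2\log_q 2\rceil$, whereas you use slightly leaner choices; your version is in fact a bit more careful in that you explicitly verify condition~(\ref{eq:GV-like-bound_dual_before_extenstion}) (the paper leaves this implicit under ``sufficiently large $n$''), and your resulting constant $\log_q 2+2$ is tighter than the stated $4\log_q 2+2$.
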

\begin{proof}
Let $n$ be a positive integer. Write $u=\lfloor \upsilon n\rfloor$ and $t=\lfloor \tau n\rfloor$.
Then Vol$_q(n,u-q+2)\leq \mbox{Vol}_q(n,u)$. Hence, by setting
\[ l = \lceil nh_q(\upsilon) +2 \log_q(2)\rceil, \]
 Lemma~\ref{es:Hamming_ball} implies that~(\ref{eq:GV-like-bound_dual_new}) is satisfied. 
 
Similarly, by setting 
\[ k = n- \lceil nh_q(2\tau) + 2\log_q(2)\rceil, \]
it is ensured that~(\ref{eq:GV-like-bound_k_new}) is satisfied.
\\ According to Theorem~\ref{th12_new}, there is a $q$-ary $(u,1,t)$ PSMC of length $n$ and size $q^{k-l}$,
so with rate $k-l$. The choices for $k$ and $l$ show that the theorem is true.
\end{proof}
\begin{rem}
Theorem~\ref{Th:aGVnew} in fact holds for classical stuck-at cells instead of stuck-at-$1$ errors, as follows from considering
the generalization of Theorem~\ref{th12_new} in Proposition~\ref{modified_version_of_Theorem3}, i.e., Heegard's construction~\cite{heegard1983partitioned}.
\end{rem}
\subsubsection{Asymptotic GV bound from Construction~\ref{consbinary}}
\begin{thmmystyle}[Asymptotic Gilbert-Varshamov-like bound from Theorem~\ref{Th:GV-s2}]\label{th:bGV-s}
Let $\pow$ be a positive integer, and let $\upsilon$ and $\tau$ be such that
\[ 0\leq \frac{\upsilon}{2^{\pow-1}} <\frac{1}{2}, 0 < 2 \tau <\frac{1}{2}, \mbox{ and }h_2(\frac{\upsilon}{2^{\pow-1}}) + h_2(2\tau) < 1. \]
For sufficiently large $n$ there is a $2^{\pow}$-ary
 $(\lfloor \upsilon n\rfloor,1,\lfloor \tau n\rfloor)$ PSMC of length $n$ and
rate at least 
$$1-h_{2^{\pow}}(2\tau) -\frac{1}{\mu}h_2(\frac{\upsilon}{2^{\pow-1}}) -\frac{2}{n} - \frac{3}{\mu n}. $$
\end{thmmystyle}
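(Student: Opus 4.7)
The plan is to apply the finite-length Theorem~\ref{Th:GV-s2} with $l$ and $k$ chosen as explicit ceiling functions of $n$, then convert the resulting cardinality into a rate via Lemma~\ref{es:Hamming_ball}. Concretely, setting $u=\lfloor \upsilon n\rfloor$ and $t=\lfloor \tau n\rfloor$, I would pick
\[ l=\lceil n\,h_2(\upsilon/2^{\pow-1})\rceil+2 \qquad\text{and}\qquad k=n-\lceil n\,h_{2^{\pow}}(2\tau)\rceil. \]
These are tailored so that two of the three conditions of Theorem~\ref{Th:GV-s2}, namely (\ref{eq:GV-like-bound_dual_bin}) and (\ref{eq:GV-like-bound_k_bin}), reduce directly to Lemma~\ref{es:Hamming_ball}: the binary entropy bound yields $\sum_{i=0}^{\lfloor u/2^{\pow-1}\rfloor}\binom{n}{i}\le 2^{n h_2(\upsilon/2^{\pow-1})}<2^{l-1}$, and the $2^{\pow}$-ary version yields $\sum_{i=0}^{2t}\binom{n}{i}(2^{\pow}-1)^i\le 2^{\pow n h_{2^{\pow}}(2\tau)}<2^{\pow(n-k+1)}$. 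The hypotheses $\upsilon/2^{\pow-1}<1/2$ and $2\tau<1/2\le 1-1/2^{\pow}$ guarantee that each application of Lemma~\ref{es:Hamming_ball} is legitimate.

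The third inequality, condition (\ref{eq:GV-like-bound_bin}), demands $n\,h_2(2\tau)<n-l-2$ after bounding; inserting the chosen $l$ gives $n(h_2(\upsilon/2^{\pow-1})+h_2(2\tau))<n-O(1)$, which is precisely what the standing hypothesis $h_2(\upsilon/2^{\pow-1})+h_2(2\tau)<1$ provides for large $n$.

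The main non-routine step is verifying the remaining premise of Theorem~\ref{Th:GV-s2}, namely $l<k$, which asymptotically reads $h_2(\upsilon/2^{\pow-1})+h_{2^{\pow}}(2\tau)<1$. Since the theorem's hypothesis is phrased with $h_2(2\tau)$ rather than $h_{2^{\pow}}(2\tau)$, I would close this gap by proving the entropy comparison
\[ h_{2^{\pow}}(x)\le h_2(x) \quad\text{for all }x\in[0,1/2]\text{ and }\pow\ge 1. \]
This follows from the identity $\pow\,h_{2^{\pow}}(x)=h_2(x)+x\log_2(2^{\pow}-1)$: rearrangement shows $h_{2^{\pow}}(x)\le h_2(x)$ is equivalent to $x\log_2(2^{\pow}-1)\le (\pow-1)\,h_2(x)$. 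Using $\log_2(2^{\pow}-1)\le \pow$ on the left and the concavity bound $h_2(x)\ge 2x$ (valid on $[0,1/2]$ since the chord joining $(0,0)$ and $(1/2,1)$ lies below $h_2$) on the right, the inequality reduces to $x\pow\le 2x(\pow-1)$, which holds for $\pow\ge 2$; the case $\pow=1$ is trivial.

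Finally I would translate cardinality into rate. The size $2\cdot 2^{\pow(k-l-1)}2^{l(\pow-1)}$ satisfies $\tfrac{1}{n}\log_{2^{\pow}}(\cdot)=\tfrac{k}{n}-\tfrac{1}{n}+\tfrac{1}{\pow n}-\tfrac{l}{\pow n}$, and substituting $\tfrac{k}{n}\ge 1-h_{2^{\pow}}(2\tau)-\tfrac{1}{n}$ together with $\tfrac{l}{\pow n}\le \tfrac{h_2(\upsilon/2^{\pow-1})}{\pow}+\tfrac{3}{\pow n}$ yields the stated lower bound on the rate (in fact with a slightly tighter $O(1/n)$ remainder than claimed). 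The only real obstacle throughout is the entropy comparison used to guarantee $l<k$; the rest is careful bookkeeping with ceilings and Lemma~\ref{es:Hamming_ball}.
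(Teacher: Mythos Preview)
Your proposal is correct and follows essentially the same route as the paper: choose $l=\lceil n\,h_2(\upsilon/2^{\pow-1})\rceil+c$ and $k=n-\lceil n\,h_{2^{\pow}}(2\tau)\rceil$, verify the three inequalities of Theorem~\ref{Th:GV-s2} via Lemma~\ref{es:Hamming_ball}, use $h_{2^{\pow}}(2\tau)\le h_2(2\tau)$ to secure $l<k$, and read off the rate. The only cosmetic differences are that the paper takes $c=3$ (you take $c=2$, which also works) and invokes \cite[Corollary~3.3.4]{GuRuSuBook} for the entropy comparison, whereas you supply a direct self-contained argument.
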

\begin{proof}
For notational convenience, we set $\upsilon_0=\frac{\upsilon}{2^{\pow-1}}$ 
and $\eta=1-h_2(2\tau) - h_2(\upsilon_0)$. Note that $\eta >0$. \\
Let $n$ be a positive integer satisfying $n\geq \frac{7}{\eta}$,
and let $u=\lfloor \upsilon n\rfloor$, $u_0=\lfloor \frac{u}{2^{\pow-1}}\rfloor$ and $t=\lfloor \tau n\rfloor$.
We set 
\[ l = \lceil nh_2(\upsilon_0) \rceil + 3 . \]
 Lemma~\ref{es:Hamming_ball} implies that (\ref{eq:GV-like-bound_dual_bin}) is satisfied.
Moreover, as
\[ n-l-3 \geq n-nh_2(\upsilon_0) - 7 = nh_2(2\tau) + n\eta -7 \geq nh_2(2\tau), \]
Lemma~\ref{es:Hamming_ball} implies that
 (\ref{eq:GV-like-bound_bin}) is satisfied.

We set 
\[ k= n - \lceil n h_{2^{\pow}}(2\tau) \rceil. \]
Lemma~\ref{es:Hamming_ball} implies that (\ref{eq:GV-like-bound_k_bin}) is satisfied.

According to \cite[Corollary 3.3.4]{GuRuSuBook}, we have that $h_{2^{\pow}}(2\tau)\leq h_2(2\tau)$, and so
\[ k-l \geq n- nh_2(2\tau) -1 - nh_2(\upsilon_0) - 4 = n\eta - 5 \geq 2. \]
Theorem~\ref{Th:GV-s2} implies the existence of a $2^{\pow}$-ary $(u,1,t)$ PSMC of length $n$ 
with size $2\cdot 2^{\pow(k-1)}2^{-l}$, i.e., its rate is
\[ \frac{k-1}{n} - \frac{l-1}{\mu n} \geq 1-h_{2^{\pow}}(2\tau)-\frac{1}{\mu} h_2(\upsilon_0) - \frac{2}{n}-\frac{3}{\mu n}.\tag*{\qedhere} \]
\end{proof}
\begin{thmmystyle}[Asymptotic Gilbert-Varshamov bound from Theorem~\ref{th:GVtrivial}]
Let $q\geq 3$. For each positive integer $n$ and each $\tau$ with $0\leq 2\tau < 1-\frac{1}{q-1}$,
there exists a $q$-ary $(n,1,\lfloor \tau n\rfloor)$ PSMC of length $n$ and rate at least
\[ (1-h_{q-1}(2\tau))\cdot \log_q(q-1). \]
\end{thmmystyle}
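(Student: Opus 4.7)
The plan is to invoke Theorem~\ref{th:GVtrivial} directly with $t=\lfloor \tau n\rfloor$ and then use the asymptotic estimate of Lemma~\ref{es:Hamming_ball} with alphabet size $q-1$ rather than $q$. The first step is simply to observe that the denominator in the expression for $M$, namely $\sum_{i=0}^{2t}\binom{n}{i}(q-2)^i$, is exactly $\Vol_{q-1}(n,2t)$, the volume of a Hamming ball of radius $2t$ in $\mathbb{F}_{q-1}^n$. This identification is what makes the $(q-1)$-ary entropy function appear in the final bound.

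Next, I would verify that the hypothesis of Lemma~\ref{es:Hamming_ball} is satisfied with $q$ replaced by $q-1$ and $\delta=2\tau$: the assumption $2\tau < 1-\frac{1}{q-1}$ is precisely the condition $0\le \delta \le 1-\frac{1}{q-1}$ needed to apply the entropy bound. Applying the lemma then gives
\[
\sum_{i=0}^{2t}\binom{n}{i}(q-2)^i \;=\; \Vol_{q-1}(n,2t) \;\leq\; (q-1)^{h_{q-1}(2\tau)\,n}.
\]
Combining this with Theorem~\ref{th:GVtrivial}, the cardinality of the resulting $(n,1,t)$ PSMC satisfies
\[
M \;\geq\; \frac{(q-1)^n}{(q-1)^{h_{q-1}(2\tau)\,n}} \;=\; (q-1)^{(1-h_{q-1}(2\tau))\,n}.
\]

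Finally, to convert this cardinality into a $q$-ary rate, I take $\log_q$ on both sides and divide by $n$, obtaining
\[
\frac{\log_q M}{n} \;\geq\; (1-h_{q-1}(2\tau))\cdot \log_q(q-1),
\]
which is precisely the claimed bound. The argument is essentially routine once the denominator is recognized as a $(q-1)$-ary ball volume; the only subtle point is the change of alphabet in the entropy function, which forces the multiplicative factor $\log_q(q-1)$ that distinguishes this bound from the earlier ones in Theorems~\ref{upartial_thm} and~\ref{Th:aGVnew}. I expect no real obstacle beyond tracking this alphabet change carefully.
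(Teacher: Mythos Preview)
Your proposal is correct and follows essentially the same route as the paper: apply Theorem~\ref{th:GVtrivial} with $t=\lfloor\tau n\rfloor$, recognize the denominator as $\Vol_{q-1}(n,2t)$, bound it via Lemma~\ref{es:Hamming_ball} with alphabet size $q-1$, and then convert the resulting cardinality to a $q$-ary rate. The paper's proof is terser but identical in substance; your explicit verification that $2\tau<1-\tfrac{1}{q-1}$ is exactly the hypothesis needed for the $(q-1)$-ary entropy bound is a welcome clarification.
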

\begin{proof}
Let $t=\lfloor \tau n\rfloor$. Theorem~\ref{th:GVtrivial} implies the existence of a $q$-ary $(n,1,t)$ PSMC of length $n$
and cardinality $M$ satisfying 
\begin{align*}
	M \geq \dfrac{(q-1)^n}{V_{q-1}(n,2t)} \geq (q-1)^{n(1-h_{q-1}(2\tau))},
\end{align*}
where the last inequality holds by Lemma~\ref{es:Hamming_ball}.
\end{proof}
\section{Comparisons}\label{comparisons}
We provide different comparisons between our code constructions and the existence of the code based on Theorem~\ref{th:upartial}, Theorem~\ref{th12_new} and Theorem~\ref{Th:GV-s2}. 
Next, we also compare to the known limits and investigate the trade-off between 
masking and error-correction as described in Section~\ref{section_trading}. 
\subsection{Comparison of Theorem~\ref{th:upartial} for $u \leq q-1$ to other Bounds
}
Figure~\ref{const1_with_other_bounds} illustrates the rates of a $(q-1,1,t)$ PSMC obtained from Theorem~\ref{cons:matrix_construction:u<q} (applying Theorem~\ref{th:upartial} for the special case $u\leq q-1$) for $n=114, q=7$ and $0\leq t\leq 56$.
We show how close explicit BCH codes that contains the all-one word of certain rates $R$ and that can correct designed distances $d \geq 2t+1$ to the achieved rates from Theorem~\ref{cons:matrix_construction:u<q}. We note that the solid red graph matches the dashed-dotted green plot for a few code parameters and overpasses it for $t=39$. We also compare to the classical $q$-ary GV bound (in dashed black) as well as to reduced alphabet $(q-1)$-ary GV bound (in dashed-dotted blue).
To this end, we show upper bounds on the rates that can be obtained from Theorem~\ref{cons:matrix_construction:u<q} using the Griesmer bound \cite{griesmer1960bound}, and the Ball--Blokhuis bound \cite{ball2013bound} on the size of codes containing the all-one word.
\begin{figure} [htp]
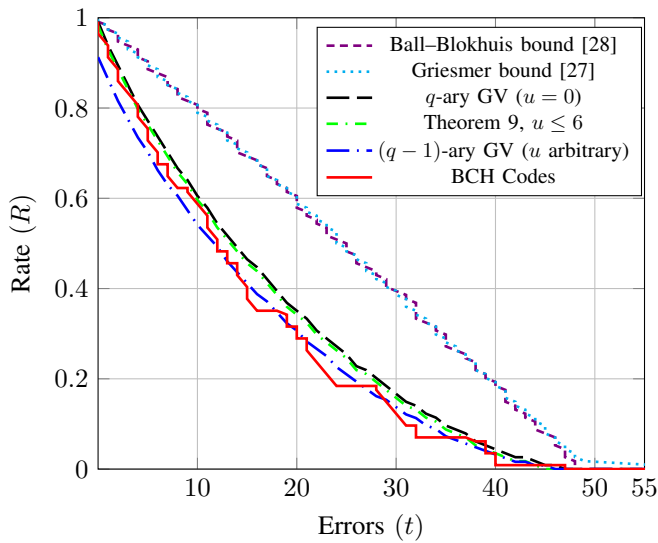
 
	\begin{center}	
		\scalebox{1}{	
			\begin{tikzpicture}
				\begin{axis}[
					width = \columnwidth,
					xlabel = {Errors $(t)$},
					ylabel = {Rate $(R)$},
					xmin = 0,
					xmax = 55.0,
					xtick={10,20,30,40,50,55},
					ytick={0,0.2,0.4,0.6,0.8,1},
					ymin = 0,
					ymax = 1.0,
					grid=major,
					legend style={ at={(0.40,0.98)},nodes={scale=0.8, transform shape}, anchor= north west},
					cycle list name=solidlinesbyhaider72
					]		
					\def\mymark{x}
						
						\input{figures/comparisons/BCH_match_const1_GV_bound/plot_list_ball_blokhuis_bound}
						\input{figures/comparisons/BCH_match_const1_GV_bound/plot_list_griesmer_bound}
						\input{figures/comparisons/BCH_match_const1_GV_bound/q_ary_usual_GV_n114}
						\input{figures/comparisons/BCH_match_const1_GV_bound/plot_list_construction_1_n114}
						\input{figures/comparisons/BCH_match_const1_GV_bound/ordGVq7_n114}
						\input{figures/comparisons/BCH_match_const1_GV_bound/plot_different_k_for_BCH_Codes}
					\end{axis}
			\end{tikzpicture}}
			\caption{Comparison of other upper and lower limits to our derived GV-like bound in Theorem~\ref{th:upartial} taking $n=114$, $q=7$, $0\leq t\leq 56$ and $u \leq q-1$. The dashed-dotted green curve shows the rates for Theorem~\ref{cons:matrix_construction:u<q} by Theorem~\ref{th:upartial} for $u \leq q-1$ in which codes that have the all-one words are considered. This curve for several code parameters matches the red line that shows how the rates of BCH codes that contain all-one word with regard to the designed distances $d \geq 2t+1$. 
			} \label{const1_with_other_bounds}
		\end{center}
	\end{figure}
\begin{figure} [htp]
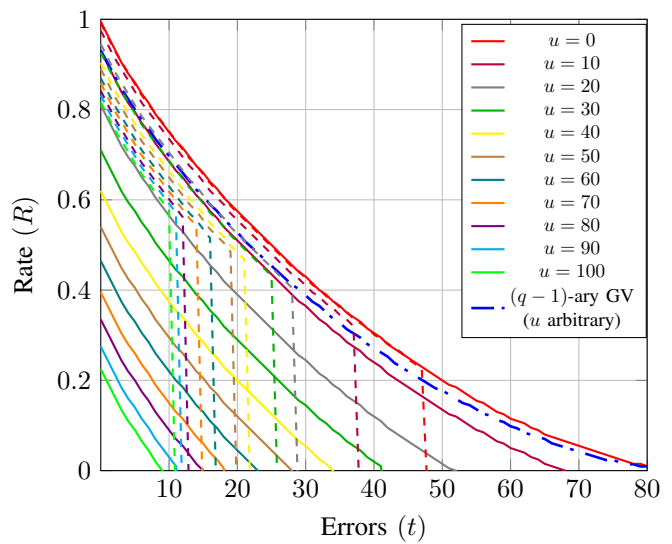
 
	\begin{center}	
		\scalebox{1}{	
			\begin{tikzpicture}
				\begin{axis}[
					width = \columnwidth,
					xlabel = {Errors $(t)$},
					ylabel = {Rate $(R)$},
					xmin = 0,
					xmax = 80.0,
					xtick={10,20,30,40,50,60,70,80},
					ytick={0,0.2,0.4,0.6,0.8,1},
					ymin = 0,
					ymax = 1.0,
					grid=major,
					legend style={cells={align=center, scale=0.73, transform shape}, anchor= north west,  at={(0.66,0.99)}},
					cycle list name=solidlinesbyhaider21
					]		
					\def\mymark{x}
						\input{figures/comparisons/GV_const_2_3_q8/plot_list_const_2_q8}
						\input{figures/comparisons/Const1_to_ord_GV/ordGvq7}
						\input{figures/comparisons/GV_const_2_3_q8/plot_list_constbin_q8}
				\end{axis}
		\end{tikzpicture}}
		\caption{The achievable rates $R=\frac{1}{n}\log_{2^3} \mathcal{M}$ 
			of GV bounds for different $u,t$ for $n=200$ and $q=2^3$ in Theorem~\ref{th12_new} and Theorem~\ref{Th:GV-s2}, where $\mathcal{M}$ is the code cardinality. They are also compared to the rates from an ordinary $7$-ary GV bound for different $t$ as illustrated in the dashed-dotted blue plot. The solid and the dashed lines represent the derived GV like bounds from Theorem~\ref{th12_new} and Theorem~\ref{Th:GV-s2}, respectively. 
		}\label{fig4}
	\end{center}	
\end{figure}
\begin{figure} [htp]
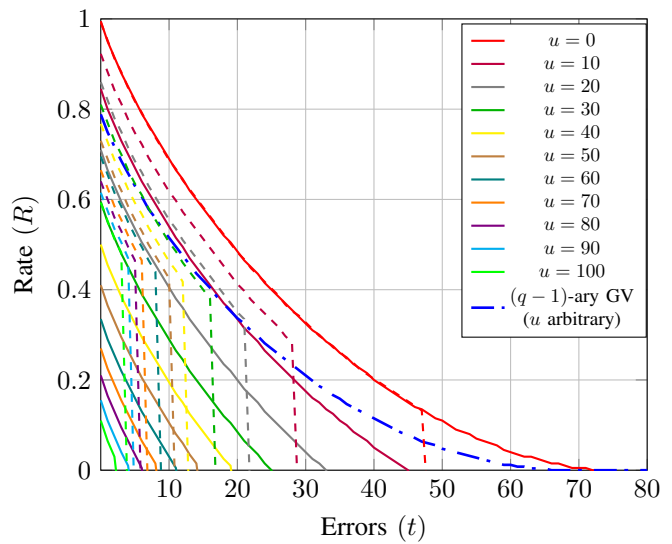
 
	\begin{center}	
		\scalebox{1}{
			\begin{tikzpicture}
				\begin{axis}[
					width = \columnwidth,
					xlabel = {Errors $(t)$},
					ylabel = {Rate $(R)$},
					xmin = 0,
					xmax = 80.0,
					xtick={10,20,30,40,50,60,70,80},
					ytick={0,0.2,0.4,0.6,0.8,1},
					ymin = 0,
					ymax = 1.0,
					grid=major,
					legend style={cells={align=center, scale=0.73, transform shape}, anchor= north west,  at={(0.66,0.99)}},
					cycle list name=solidlinesbyhaider21
					]	
					\def\mymark{x}
						\input{figures/comparisons/GV_const2_q4/plot_list_const_2_q4}
						\input{figures/comparisons/GV_const2_q4/ordGVq3}
						\input{figures/comparisons/GV_const2_q4/plot_list_constbin_q4}		
				\end{axis}
				q		\end{tikzpicture}}
		\caption{
			The achievable rates $R=\frac{1}{n}\log_{2^2} \mathcal{M}$ of GV bounds for different $u,t$ for $n=200$ and $q=2^2$ in Theorem~\ref{th12_new} and Theorem~\ref{Th:GV-s2}. They are also compared to the rates from an ordinary $3$-ary GV bound for different $t$ as illustrated in the dashed-dotted blue plot. The solid and the dashed lines correspond to the derived GV like bounds by Theorem~\ref{th12_new} and by Theorem~\ref{Th:GV-s2}, respectively.} \label{figGVcomparewithq4}. 
	\end{center}
\end{figure}
\begin{figure} [htp]
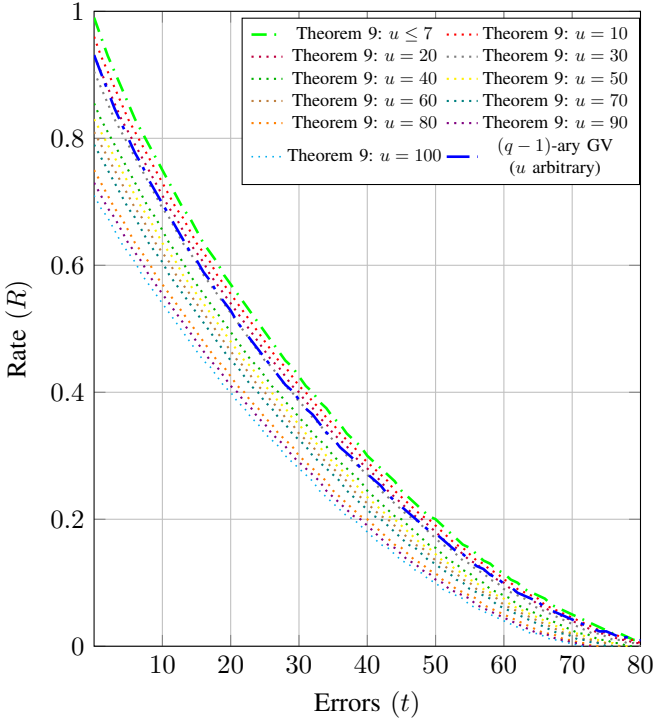
 
	\begin{center}	
		\scalebox{1}{
			\begin{tikzpicture}
				\begin{axis}[
					width = \columnwidth,
					height=10cm,
					xlabel = {Errors $(t)$},
					ylabel = {Rate $(R)$},
					xmin = 0,
					xmax = 80.0,
					xtick={10,20,30,40,50,60,70,80},
					ytick={0,0.2,0.4,0.6,0.8,1},
					ymin = 0,
					ymax = 1.0,
					grid=major,
					legend style={cells={align=center, scale=0.68, transform shape}, anchor= north west ,  at={(0.27,0.99)}, legend columns=2},
					cycle list name=solidlinesbyhaider11
					]	
					\def\mymark{x}
						
						\input{figures/comparisons/Const1_to_ord_GV/plot_list_construction_1}
						\input{figures/comparisons/Const1_to_ord_GV/GV_treat_u_as_t}	
						\input{figures/comparisons/Const1_to_ord_GV/ordGVq7}
				\end{axis}
		\end{tikzpicture}}
		\caption{The achievable rates $R= \frac{1}{n} \log_{2^3} \mathcal{M}$ of GV bounds for different $u,t$ for $n=200$ and $q=2^3$ in Theorem~\ref{th:upartial} that are compared to the reduced alphabet conventional $(q-1)$-ary GV bound for different $t$. The dashed-dotted green curve represents the rates from Theorem~\ref{th:upartial} when $u \leq q-1$.
		}\label{compare_with_ordinary_GV}
	\end{center}
\end{figure}
\subsection{ Comparison among Theorem~\ref{th12_new}, Theorem~\ref{Th:GV-s2} and $(q-1)$-ary Gilbert-Varshamov bound}
We plot the achievable rates $(R=log_q(M)/n)$ as a function of $t$ for different fixed values of $u$. Figure~\ref{fig4} is the resulting plot for $n=200$ and $q=2^3$. 
It can be seen that the GV-like bound in different ranges of $u$ and $t$ based on Construction~\ref{cons-ext} improves upon the ($q-1$)-ary GV bound for $u\leq 5$ as depicted in the solid red curve, and improves further (up to $u\leq 20$) based on Construction~\ref{consbinary} as shown in the dashed gray line (3rd one from above). The dashed dotted blue curve is used to see what if we map our $2^3$ levels such that we avoid the subscript $0$ to compare with $7$ levels. 
It is obvious that for $\pow=3$, the rate loss\footnote{For $t=0$, the loss is $1 - \log_8(7)= 0.0642$.} resulting from using $q-1$ instead of $q$ symbols is already quite small. 
Note that for $u=0$ the solid red curve mostly achieves the exact rates obtained from the standard $2^3$-ary GV bound for $0 \leq t \leq 80$, and so as for the dashed red curve but for $0 \leq t \leq 47$.

For $\pow=2$, the improvements from Construction~\ref{cons-ext} ($u \leq 10$) and Construction~\ref{consbinary} ($u \leq 30$) upon a usual ($q-1$)-ary GV bound are more significant as shown in Figure~\ref{figGVcomparewithq4}. 
\subsection{Comparisons between Theorem~\ref{th:upartial} and $(q-1)$-ary Gilbert-Varshamov bound}
In Figure~\ref{compare_with_ordinary_GV}, we compare the GV like bound from Theorem~\ref{th:upartial} for $q=2^3$ with the conventional GV bound for $q-1=7$ shown in dashed black curve.
We see the dashed-dotted green curve by Theorem~\ref{th:upartial} for ($u\leq 7$ as stated in Remark~\ref{special_case_of_th:upartial}).
For $q=8$, we observe that the conventional $q-1$-ary GV bound is superior to the derived GV-like bound from Theorem~\ref{th:upartial} and many larger values of $u$. However, applying Theorem~\ref{th:upartial} where $u \leq 7$, the traditional $q-1$-ary GV bound is a bad choice. 
\subsection{Comparisons between Theorem~\ref{Th:GV-s2} and Theorem~\ref{th:upartial}}
In Figure~\ref{compare_trading_const1_and_const3}, we compare Theorem~\ref{Th:GV-s2} and Theorem~\ref{th:upartial}.
	Theorem~\ref{Th:GV-s2} is showing higher rates for larger $u$ values,
for example taking $u= 40$ and $t=1$, the rate is $R = 0.87$ from Theorem~\ref{Th:GV-s2} while $R = 0.83$ from Theorem~\ref{th:upartial}.
 It is interesting to note that for $u=30$ and $t> 10$ Theorem~\ref{th:upartial} is better, and for $u =10$ and $t > 18$ Theorem~\ref{th:upartial} is as good as Theorem~\ref{Th:GV-s2}.
	\begin{figure} [htp]
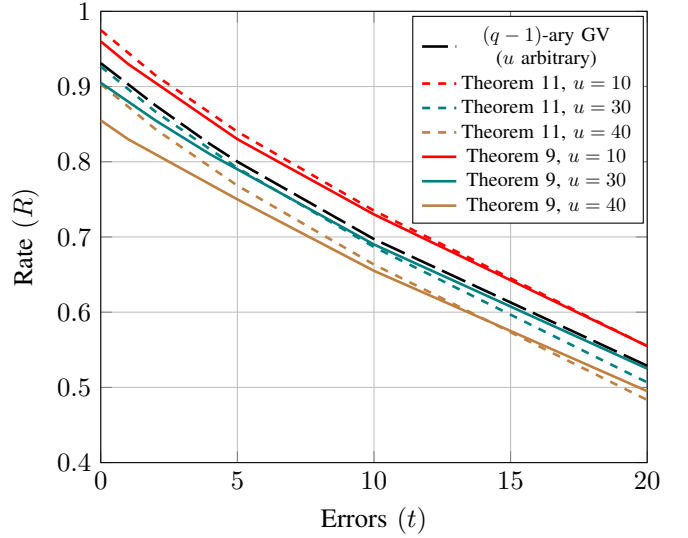
 
		\begin{center}	
			\scalebox{1}{	
				\begin{tikzpicture}
					\begin{axis}[
						width = \columnwidth,
						xlabel = {Errors $(t)$},
						ylabel = {Rate $(R)$},
						xmin = 0,
						xmax = 20.0,
						xtick={0,5,10,15,20,25},
						ytick={0.4,0.5,0.6,0.7,0.8,0.9,1},
						ymin = 0.4,
						ymax = 1.0,
						grid=major,
						legend style={cells={align=center, scale=0.75, transform shape}, anchor= north west, at={(0.57, 0.99)}},
						cycle list name=solidlinesbyhaider9
						]		
						\def\mymark{x}
							
							\input{figures/comparisons/Trading_Com3_Th11/q_minus_1_ord_GVq7_few}
							\input{figures/comparisons/Trading_Com3_Th11/plot_list_constbin_few}
							\input{figures/comparisons/Trading_Com3_Th11/GV_treat_u_as_t}	
					\end{axis}
			\end{tikzpicture}}
			\caption{
				The achievable rates $R=\frac{1}{n}\log_{q} \mathcal{M}$ of GV bounds for different $u =\{10,30,40\},\mbox{ and } t =\{0,1,2,4, 5, 10,20\}$ for $n=200$ and $q=2^3$ in Theorem~\ref{Th:GV-s2} and Theorem~\ref{th:upartial}. They are also compared to the rates for $(q-1)$-ary GV bound as shown in dashed black graph.}\label{compare_trading_const1_and_const3}
		\end{center}
	\end{figure}
\subsection{Comparisons of application of Theorem~\ref{tradingu_t} vs direct application of Theorem~\ref{th12_new}}
For given $(u,t)$, we illustrate the trading $(u+1, t-1)$ in Figure~\ref{figtrade_const2}. For some of $t$ and a few of $u$ values, it is advantageous if the encoder introduces an error in a partially stuck at position in order to mask this position. The orange solid curve, for instance, represents the rates that have been determined by Theorem~\ref{th12_new} for $u=17$ and $0\leq t\leq 50$, while the orange dotted sketch highlights the rates for $u =16$ while $1\leq t\leq 51$. Due to the exchange such that $u+1 =17$ and $0 \leq t-1 \leq 50$, the orange dotted line slightly fluctuates up and down the rates shown in the orange solid curve for most $t$ values. 
	
Let us describe some points of Figure~\ref{figtrade_const2} in Table~\ref{table4}. Let $\mycode{C}_{u,t}$ be a code by Theorem~\ref{th12_new} whose rate is $R$ given in Table~\ref{table4} at $u$ row and $t$ column. Take $\mycode{C}_{21,15}$ so that its rate $R = \textcolor{red}{\textbf{0.470}}$. 
By applying Theorem~\ref{tradingu_t} on $\mycode{C}_{21,15}$, we obtain a code $\mycode{C}_{22,14}$ of $R= 0.470$. Direct application of Theorem~\ref{th12_new} yields a $\mycode{C}_{22,14}$
of rate $R= \textcolor{red}{\textbf{0.475}}$ as highlighted in Table~\ref{table4}. We conclude that in this case, the trade by Theorem~\ref{tradingu_t} gives lower rates than taking the same code directly by Theorem~\ref{th12_new} for given $(u=22,t=14)$. 
	
On contrary, for larger $t$ values, Table~\ref{table4} shows that the exchange is beneficial giving higher rates. For example, we start with $\mycode{C}_{21,41}$ whose $R=\textcolor{green!50!black}{\textbf{0.105}}$, then applying Theorem~\ref{tradingu_t} gives $\mycode{C}_{22,40}$ of $R=0.105$ which is greater than $R=\textcolor{green!50!black}{\textbf{0.100}}$ that has been obtained directly by Theorem~\ref{th12_new} as stated in Table~\ref{table4}.
	\begin{table} 
	\caption{ Table of Selected Points from Figure~\ref{figtrade_const2} with slightly lower and higher rates due to trading. All Points are from Theorem~\ref{th12_new}.
	}
	\label{table4}
	\begin{center}
		\scalebox{1}{
			\begin{tabular}{ | l ||l |l| l | l |l |l |l |}
				\hline
				\backslashbox{$u$}{$t$}&$13$&$14$&$15$&$\dots$ &$40$&$41$ & $42$
				\\ \hline 
				$16$ & 0.560 & 0.545& \textcolor{red}{\textbf{0.525}}& $\dots$ &0.170 & \textcolor{green!50!black}{\textbf{0.160}} & 0.150
				\\ \hline
				$17$ & 0.545 & \textcolor{red}{\textbf{0.530}} & 0.510 & $\dots$ &\textcolor{green!40!black}{\textbf{0.155}} &0.145 & 0.135 
				\\ \hline
				$\vdots$ & $\vdots$ & $\vdots$& $\vdots$ & $\vdots$ &$\vdots$ & $\vdots$& $\vdots$ 
				\\ \hline
				$21$ &0.505&0.490&\textcolor{red}{\textbf{0.470}}& $\dots$ &	0.115&\textcolor{green!50!black}{\textbf{0.105} } & 0.095
				\\ \hline
				$22$ & 0.490 & \textcolor{red}{\textbf{0.475}} & 0.455 & $\dots$ &\textcolor{green!50!black}{\textbf{0.100}} & 0.090 & 0.080
				\\ \hline
				$23$ & 0.480& 0.465& 0.445 & $\dots$ &0.090 &0.080 & 0.070 
				\\ \hline
			\end{tabular}
		}
	\end{center}
\end{table}
	\begin{figure} [htp]
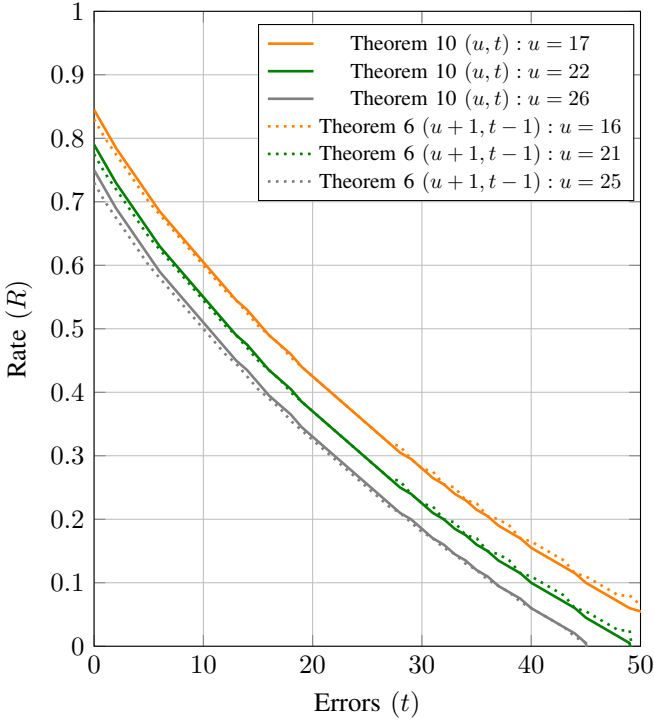
 
	\begin{center}	
		\scalebox{1}{
			\begin{tikzpicture}
				\begin{axis}[
					width = \columnwidth,
					height = 10 cm,
					xlabel = {Errors $(t)$},
					ylabel = {Rate $(R)$},
					xmin = 0,
					xmax = 50.0,
					xtick={0,10,20,30,40,50},
					ytick={0, 0.1,0.2,0.3,0.4,0.5,0.6,0.7,0.8,0.9,1},
					ymin = 0,
					ymax = 1.0,
					grid=major,
					legend style={ at={(0.30,0.98)},nodes={scale=0.8, transform shape}, anchor= north west},
					cycle list name=solidlinesbyhaider12
					]		
					\def\mymark{x}
						
						\input{figures/comparisons/Trading_Theorem_12/plot_list_const_2}
						\input{figures/comparisons/Trading_Theorem_12/plot_list_after_const2_trading}	
				\end{axis}
		\end{tikzpicture}}
		\caption{
			The achievable rates $R=\frac{1}{n}\log_{q} \mathcal{M}$ of GV bounds for different $u,\mbox{ and } t$ for $n=200$ and $q=2^3$ in Theorem~\ref{th12_new}. The solid plots are the rates from the derived GV-like bound and the dotted graphs are the rates after trading $u+1, t-1$ by Theorem~\ref{tradingu_t}.
		}\label{figtrade_const2}
	\end{center}
\end{figure}
	\subsection{Comparisons of applications of Theorem~\ref{tradingu_t}, Lemma~\ref{lem03}, Lemma~\ref{lem01} vs direct application of Theorem~\ref{Th:GV-s2}}
		For the derived GV bound based on Construction~\ref{consbinary} obtained by Theorem~\ref{Th:GV-s2}, 
	we demonstrate the exchange of a one error correction ability with a single masking capability of a partially stuck cell following Theorem~\ref{tradingu_t} in Figure~\ref{figtradebinary}. 
	The solid and dotted lines represent the rates before and after trading, respectively. We also show the exchange by Lemma~\ref{lem03} and Lemma~\ref{lem01} in which the reduction of the correctable errors by one increases $u$ by $2^{\pow-1}$ and $2^\pow$, respectively.
	Let us discuss the following curves. For $u=19$, the orange solid curve shows the rates by Theorem~\ref{Th:GV-s2}. Exchanging $u+1$ and $t-1$ throughout Theorem~\ref{tradingu_t} obtains the orange dotted line for $u+1 = 20$ which lies slightly bellow the orange solid plot. 
	Hence, the exchange gives lower rates but 
	provides rate $R =0.380$ for $t=30$ while direct application of Theorem~\ref{Th:GV-s2} (compared to its corresponding graph which is the solid green curve at $u = 20,21,22,23$) does not.	
Now, we apply Lemma~\ref{lem03} rather Theorem~\ref{tradingu_t}. We observe the dashed red graph for $u+2^{3-1} = 23$ shows the exact rates from the orange dotted curve for $u+1 = 20$. Therefore, it is clear that Lemma~\ref{lem03} provides a gain of masking exactly $3$ more cells with regard to Theorem~\ref{tradingu_t}. 
	
However, if we take $u=23$ directly by Theorem~\ref{Th:GV-s2}, we achieve slightly higher rates. 
We conclude that Theorem~\ref{Th:GV-s2} can directly estimate the maximum possible masked $u$ cells that can also be achieved applying 
Lemma~\ref{lem03}, and can achieve slightly higher rates. However, Theorem~\ref{Th:GV-s2} does not give rates for larger $t$ values while Lemma~\ref{lem03} and Theorem~\ref{tradingu_t} do that. 

On the other hand, as Theorem~\ref{Th:GV-s2} is based on Construction~\ref{consbinary} that contains a word of weight $n$, Lemma~\ref{lem01} is applicable under the condition that $2(t+ \lfloor\frac{u}{q}\rfloor) <d$ (cf. Remark~\ref{rem11}). Hence, we can achieve higher rates while masking up to the same number of $u$ cells than employing Lemma~\ref{lem03} or Theorem~\ref{tradingu_t} as shown in the dashed-dotted curve.

For that we describe some points of Figure~\ref{figtradebinary} by Table~\ref{table5}. Let $\mycode{C}_{u,t}$ be 
a code by Theorem~\ref{Th:GV-s2} whose rate is $R$ given in Table~\ref{table5} at $u$ row and $t$ column.
Taking $\mycode{C}_{19, 27}$ gives 
$\mycode{C}_{20, 26}\mbox{ and } \mycode{C}_{23, 26}$ with $R = 0.435$ applying Theorem~\ref{tradingu_t} and Lemma~\ref{lem03}, respectively. In contrary, taking $\mycode{C}_{19, 31}$ is advantageous as there are codes ($\mycode{C}_{20, 30}$ by Theorem~\ref{tradingu_t} and $\mycode{C}_{23, 30}$ by Lemma~\ref{lem03}) with $R =0.380$ while direct application of Theorem~\ref{Th:GV-s2} cannot provide these codes 
as highlighted in green with "\textcolor{green!50!black}{\textbf{None}}". Now, we apply Lemma~\ref{lem01} on a code obtained by Theorem~\ref{th:upartial} for $(u=7, t=27)$ to obtain the code $\mycode{C}_{15, 26}$ of rate $R = 0.465$ that satisfies $2(26+\lfloor \frac{15}{8} \rfloor) < 55$. The achieved rate is higher compared to $\mycode{C}_{15, 26}$ of $R = 0.460$ that is directly obtained by Theorem~\ref{Th:GV-s2}, or applying Theorem~\ref{tradingu_t} on $\mycode{C}_{14, 27}$ to obtain $\mycode{C}_{15, 26}$ of $R = 0.445$, or using Lemma~\ref{lem03} on $\mycode{C}_{11, 27}$ to obtain $\mycode{C}_{15, 26}$ of $R = 0.456$. This result does not mean that application Lemma~\ref{lem01} on a code obtained by Theorem~\ref{th:upartial} always provides higher code rates for the same parameters $u,t$ (see Figure~\ref{compare_trading_const1_and_const3}).
	\begin{table}
	\caption{ Table of Selected Points from Figure~\ref{figtradebinary}. All Points are from Theorem~\ref{Th:GV-s2}.
	}
	\label{table5}
	\begin{center}
		\scalebox{1}{
			\begin{tabular}{ | l ||l |l| l | l |l |l |l |}
				\hline
				\backslashbox{$u$}{$t$}&$26$&$27$&$28$&$29$&$30$ &$31$&$32$
				\\ \hline 
				$11$ & 0.471& 0.456 & 0.441&0.431 &0.416 & 0.401 & 0.391
				\\ \hline
				$12$ & 0.460 & 0.445& 0.430& 0.420 & 0.405 & 0.390 & 0.380
				\\ \hline
				$\vdots$ & $\vdots$ & $\vdots$& $\vdots$ & $\vdots$ &$\vdots$ & $\vdots$& $\vdots$ 
				\\ \hline
				$14$ & 0.460 & 0.445& 0.430& 0.420 & 0.405 & 0.390 & 0.380
				\\ \hline
				$15$ & 0.460 & 0.445& 0.430& 0.420 & 0.405 & 0.390 & 0.380
				\\ \hline
				$\vdots$ & $\vdots$ & $\vdots$& $\vdots$ & $\vdots$ &$\vdots$ & $\vdots$& $\vdots$ 
				\\ \hline
				$19$ & 0.450 & \textcolor{red}{\textbf{0.435}}& 0.420& 0.410 & 0.395 & \textcolor{green!50!black}{\textbf{0.380}} & None
				\\ \hline
				$20$ &\textcolor{red}{\textbf{0.441}}& 0.426 &0.411 &0.401 & \textcolor{green!50!black}{\textbf{None}} & None & None
				\\ \hline
				$\vdots$ & $\vdots$ & $\vdots$& $\vdots$ & $\vdots$ &$\vdots$ & $\vdots$& $\vdots$ 
				\\ \hline
				
				$23$ & \textcolor{red}{\textbf{0.441}}& 0.426 &0.411 &0.401 & \textcolor{green!50!black}{\textbf{None}} & None & None
				\\ \hline
			\end{tabular}
		}
	\end{center}
\end{table}
	\begin{figure} [htp]
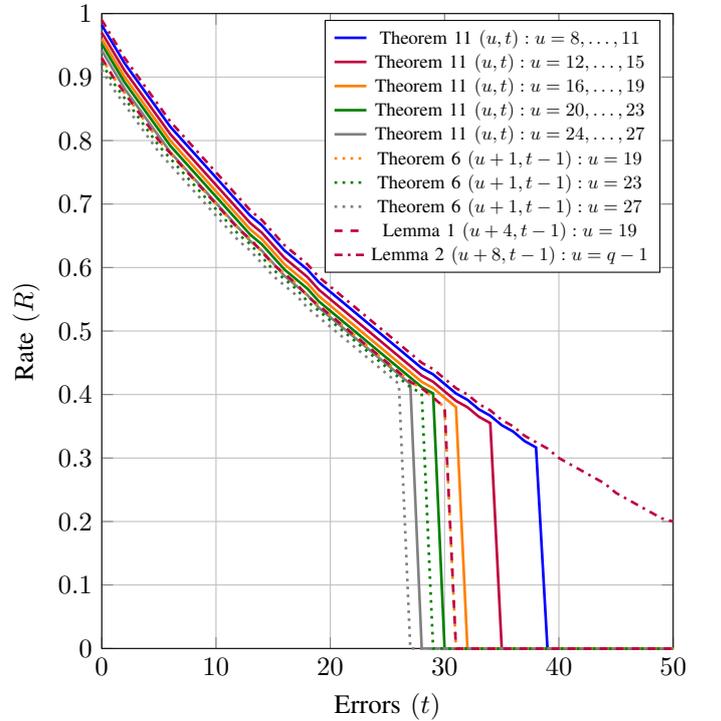
 
	\begin{center}	
		\scalebox{1}{	
			\begin{tikzpicture}
				\begin{axis}[
					clip=false,
					width = 9.1 cm,
					height = 10 cm,
					xlabel = {Errors $(t)$},
					ylabel = {Rate $(R)$},
					xmin = 0,
					xmax = 50.0,
					xtick={0,10,20,30,40,50},
					ytick={0, 0.1,0.2,0.3,0.4,0.5,0.6,0.7,0.8,0.9,1},
					ymin = 0,
					ymax = 1.0,	
					grid=major,
					legend style={cells={align=center, scale=0.70, transform shape}, anchor= north west,  at={(0.39,0.99)}},
					cycle list name=solidlinesbyhaider14
					]	
					\def\mymark{x}	
						\input{figures/comparisons/Trading_Theorem_13/plot_list_constbin_rates}
						\input{figures/comparisons/Trading_Theorem_13/plot_list_after_constbin_trading_rates}	
						\input{figures/comparisons/Trading_Theorem_13/plot_list_after_constbin_trading_lemma_rates}
						\input{figures/comparisons/Trading_Theorem_13/plot_list_after_constbin_trading_lemma2_rates} 
				\end{axis}
		\end{tikzpicture}}
		\caption{
			The achievable rates $R=\frac{1}{n}\log_{q} \mathcal{M}$ of GV bounds for different $u,\mbox{ and } t$ for $n=200$ and $q=2^3$ in Theorem~\ref{Th:GV-s2}. The solid plots are the rates from the derived GV like bound and the dotted graphs are the rates after trading $u+1, t-1$ by Theorem~\ref{tradingu_t}. We also show the exchange by Lemma~\ref{lem03} and Lemma~\ref{lem01} in which the reduction of the correctable errors by one increases $u$ by $2^{\pow-1}$ and $2^\pow$, respectively.
		}\label{figtradebinary}
	\end{center}
\end{figure}
\subsection{Analytical comparison of asymptotic GV-like bounds}\label{subsec:ascomp}
In this section, we state the results of the analytical comparisons of the asymptotic GV bounds from Theorems \ref{upartial_thm}, \ref{Th:aGVnew} and \ref{th:bGV-s}, ignoring the 
terms that tend to zero for increasing $n$. The proofs can be found in Appendix~\ref{app:asGVcomp}.
\begin{prop}\label{upartialvsaGVnew}
	If $\upsilon,\tau$ and $q$ are such that the conditions of Theorem~\ref{upartial_thm} and Theorem~\ref{Th:aGVnew} are met,
	then the rate guaranteed by Theorem~\ref{upartial_thm} is at least equal to the code rate guaranteed by Theorem~\ref{Th:aGVnew}.
\end{prop}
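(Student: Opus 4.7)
Strip the $O(1/n)$ corrections from the two bounds. The rate guaranteed by Theorem~\ref{upartial_thm} is, up to vanishing terms, $1-h_q(2(\tau+\upsilon/q))$, while the one from Theorem~\ref{Th:aGVnew} is $1-h_q(2\tau)-h_q(\upsilon)$. Moreover, the vanishing correction in Theorem~\ref{upartial_thm} ($-2/n$) is larger than the one in Theorem~\ref{Th:aGVnew} ($-(4\log_q 2+2)/n$), so those pieces also favour Theorem~\ref{upartial_thm}. The statement therefore reduces to the purely analytic inequality
\[
	h_q\!\left(2\tau+\tfrac{2\upsilon}{q}\right) \;\leq\; h_q(2\tau)+h_q(\upsilon).
\]

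The crux is a combinatorial subadditivity lemma,
\[
	\Vol_q(n,a+b) \;\leq\; \Vol_q(n,a)\cdot \Vol_q(n,b), \qquad a+b\leq n,
\]
which I would prove by exhibiting a surjection $\Phi:B_a\times B_b\to B_{a+b}$, $(\ve{y},\ve{z})\mapsto\ve{y}+\ve{z}$, where $B_r=\{\ve{x}\in\mathbb{F}_q^n:\wt(\ve{x})\leq r\}$. The image lies in $B_{a+b}$ by the triangle inequality for Hamming weight; for surjectivity, partition the support of any $\ve{x}\in B_{a+b}$ into a piece of size at most $a$ and a piece of size at most $b$, and let $\ve{y}$ and $\ve{z}$ be the restrictions of $\ve{x}$ to these two pieces. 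Counting gives the claim.

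Applying the lemma with $a=2\lceil\tau n\rceil$ and $b=2\lfloor \upsilon n/q\rfloor$, taking $\log_q$, dividing by $n$, and invoking the standard two-sided estimate $\log_q\Vol_q(n,\delta n)=h_q(\delta)\,n+o(n)$ (the upper bound is Lemma~\ref{es:Hamming_ball}; the matching lower bound follows from Stirling) yields
\[
	h_q\!\left(2\tau+\tfrac{2\upsilon}{q}\right) \;\leq\; h_q(2\tau)+h_q\!\left(\tfrac{2\upsilon}{q}\right).
\]
Since $q\geq 2$ forces $2\upsilon/q\leq\upsilon$, and both $2\upsilon/q$ and $\upsilon$ lie in $[0,1-\tfrac{1}{q}]$ where $h_q$ is non-decreasing, we conclude $h_q(2\upsilon/q)\leq h_q(\upsilon)$, which finishes the reduction.

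\textbf{Anticipated obstacle.} The argument is conceptually clean; the only technicality is the rigorous asymptotic passage from the combinatorial inequality to the entropy inequality, which requires the two-sided estimate for $\log_q\Vol_q(n,\delta n)/n$ rather than only the one-sided bound stated in Lemma~\ref{es:Hamming_ball}. This is a routine Stirling computation that I would cite from \cite[Proposition 3.3.1]{GuRuSuBook} rather than redo in place.
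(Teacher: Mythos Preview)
Your reduction to the inequality $h_q(2\tau+2\upsilon/q)\leq h_q(2\tau)+h_q(\upsilon)$ is exactly the same as the paper's, and your final step---bounding $h_q(2\upsilon/q)\leq h_q(\upsilon)$ via monotonicity of $h_q$ on $[0,1-1/q]$ together with $2/q\leq 1$---matches the paper as well. The divergence is in how you establish the subadditivity step $h_q(x+y)\leq h_q(x)+h_q(y)$.

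The paper proves this directly and analytically as a standalone lemma (Lemma~\ref{lem:entrop}): for fixed $y$, the function $f_y(x)=h_q(x+y)-h_q(x)-h_q(y)$ satisfies $f_y(0)=0$ and $f_y'(x)=h_q'(x+y)-h_q'(x)\leq 0$ by concavity of $h_q$, hence $f_y\leq 0$. This is a two-line calculus argument with no asymptotics involved.

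Your route is combinatorial: you prove the finite inequality $\Vol_q(n,a+b)\leq\Vol_q(n,a)\,\Vol_q(n,b)$ via the surjection $B_a\times B_b\to B_{a+b}$, then pass to the limit using the two-sided asymptotic $\log_q\Vol_q(n,\delta n)=h_q(\delta)\,n+o(n)$. This is correct, and the surjectivity argument is clean. The trade-off is that you incur the technical overhead of the limiting passage (in particular, as you note, you need the lower bound on the ball volume, not just Lemma~\ref{es:Hamming_ball}), whereas the paper never leaves the analytic world. On the other hand, your intermediate combinatorial inequality is a useful fact in its own right and would feed directly into a finite-$n$ comparison if one wanted that. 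For the purpose at hand---an asymptotic statement where the paper explicitly discards $O(1/n)$ terms---the paper's concavity argument is the shorter path.
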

\begin{prop}\label{aGVnewvsbGV-s}
	If $\upsilon,\tau$ and $q=2^{\pow}$ are such that the conditions of Theorem~\ref{Th:aGVnew} and Theorem~\ref{th:bGV-s} are met,
	then the code rate guaranteed by Theorem~\ref{th:bGV-s} is at least equal to the code rate guaranteed by Theorem~\ref{Th:aGVnew}.
\end{prop}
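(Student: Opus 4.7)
The plan is to strip the asymptotic rate expressions down to a single analytic inequality and then prove that inequality by a Hamming-ball volume comparison.

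First I would substitute $q = 2^{\pow}$ in the rate bound of Theorem~\ref{Th:aGVnew} and compare it, term by term, with the rate bound of Theorem~\ref{th:bGV-s}. The summands $1$ and $-h_{2^{\pow}}(2\tau)$ coincide, and the $O(1/n)$ corrections are discarded per the preamble to this subsection, so the claim reduces to the single entropy inequality
\[
\pow \, h_{2^{\pow}}(\upsilon) \;\geq\; h_{2}\!\left(\tfrac{\upsilon}{2^{\pow-1}}\right),
\]
which must be shown under the combined hypotheses $\upsilon < 1 - 2^{-\pow}$ (from Theorem~\ref{Th:aGVnew}) and $\upsilon_{0} := \upsilon/2^{\pow-1} < 1/2$ (from Theorem~\ref{th:bGV-s}).

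To prove this, I would compare two Hamming-ball volumes. For any $n$ such that $\upsilon n$ and $\upsilon_{0} n$ are integers, the bound $\upsilon_{0} \leq \upsilon$ together with the fact that a binary vector of Hamming weight $\upsilon_{0}n$ is in particular a $2^{\pow}$-ary vector of the same weight gives
\[
\binom{n}{\upsilon_{0} n} \;\leq\; \mathrm{Vol}_{2^{\pow}}(n,\upsilon_{0} n) \;\leq\; \mathrm{Vol}_{2^{\pow}}(n,\upsilon n).
\]
Because $\upsilon \leq 1 - 1/2^{\pow}$, Lemma~\ref{es:Hamming_ball} yields $\mathrm{Vol}_{2^{\pow}}(n,\upsilon n) \leq 2^{\pow n\, h_{2^{\pow}}(\upsilon)}$, and the standard Stirling-type lower bound $\binom{n}{\upsilon_{0} n} \geq 2^{n\, h_{2}(\upsilon_{0})}/(n+1)$ then delivers
\[
\frac{1}{n+1}\, 2^{n\, h_{2}(\upsilon_{0})} \;\leq\; 2^{\pow n\, h_{2^{\pow}}(\upsilon)}.
\]
Taking $\log_{2}$, dividing by $n$, and letting $n \to \infty$ along a subsequence for which $\upsilon n$ and $\upsilon_{0} n$ are integers removes the $\log_{2}(n+1)/n$ correction and yields the required inequality.

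I do not anticipate a substantive obstacle; the work is almost entirely in spotting the right chain of Hamming-ball inclusions. The only minor technicality is arranging the integrality of $\upsilon n$ and $\upsilon_{0} n$, which is handled by restricting $n$ to a suitable arithmetic progression (or, equivalently, by perturbing $\upsilon$ slightly and invoking the continuity of $h_{2}$ and $h_{2^{\pow}}$). Once the key entropy inequality is established, the comparison of the two rate bounds is immediate.
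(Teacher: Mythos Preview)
Your reduction is correct and the Hamming-ball argument works, but the paper proceeds much more directly. After writing the rate difference as $h_{2^{\pow}}(\upsilon)-\tfrac{1}{\pow}h_{2}(\upsilon_{0})$, the paper simply invokes the identity
\[
h_{2^{\pow}}(x)=\tfrac{1}{\pow}h_{2}(x)+x\log_{2^{\pow}}(2^{\pow}-1),
\]
together with the monotonicity of $h_{2^{\pow}}$ on $[0,1-2^{-\pow}]$: since $\upsilon_{0}=\upsilon/2^{\pow-1}\leq \upsilon\leq 1-2^{-\pow}$, one gets $h_{2^{\pow}}(\upsilon)\geq h_{2^{\pow}}(\upsilon_{0})=\tfrac{1}{\pow}h_{2}(\upsilon_{0})+\upsilon_{0}\log_{2^{\pow}}(2^{\pow}-1)\geq \tfrac{1}{\pow}h_{2}(\upsilon_{0})$. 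This is a two-line analytic inequality with no limits, no integrality bookkeeping, and no Stirling bound.

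Your route reaches the same conclusion, but it passes through finite-$n$ volumes and an asymptotic extraction that are unnecessary here. The chain $\binom{n}{\upsilon_{0}n}\leq \mathrm{Vol}_{2^{\pow}}(n,\upsilon n)\leq 2^{\pow n\,h_{2^{\pow}}(\upsilon)}$ is sound, and the integrality issue you flag is a genuine but minor technicality you handle correctly. The payoff of your approach is that it avoids needing to know or manipulate the closed-form entropy identity; the cost is extra machinery (Stirling, limits along subsequences) that the paper sidesteps entirely.
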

We note that the requirement $2\tau < \frac{1}{2}$ from Theorem~\ref{th:bGV-s} is stricter
	than the requirement $2\tau < 1-\frac{1}{2^{\pow}}$ from Theorem~\ref{Th:aGVnew}.
	That is, there are pairs $(\tau,\upsilon)$ for which Theorem~\ref{Th:aGVnew} is applicable,
	but Theorem~\ref{th:bGV-s} is not.
	
	As explained in Appendix~\ref{app:asGVcomp}, comparison of Theorem~\ref{upartial_thm} and Theorem~\ref{th:bGV-s} is more complicated.
	We have the following partial result.
	\begin{prop}\label{upartialvsbGV-s}
		Let $\upsilon,\tau>0$ and $q=2^{\pow}$ be such that the conditions of Theorem~\ref{upartial_thm} and Theorem~\ref{th:bGV-s} are met.
		If $\upsilon$ is sufficiently small, then the rate guaranteed by Theorem~\ref{upartial_thm} is larger than the rate guaranteed by Theorem~\ref{th:bGV-s}.
	\end{prop}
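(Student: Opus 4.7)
The plan is to reduce the claim to an asymptotic comparison as $\upsilon \to 0^+$ with $\tau$ fixed. First, I would discard the $O(1/n)$ correction terms from both bounds, since for each fixed pair $(\tau,\upsilon)$ these vanish as $n\to\infty$ and can be absorbed into the clause ``for sufficiently large $n$''. After this simplification the inequality to be proved is
\[
1 - h_q\!\left(2\tau + \tfrac{2\upsilon}{q}\right) \; > \; 1 - h_q(2\tau) - \tfrac{1}{\mu}\, h_2\!\left(\tfrac{\upsilon}{2^{\mu-1}}\right),
\]
which rearranges to
\[
\tfrac{1}{\mu}\, h_2\!\left(\tfrac{\upsilon}{2^{\mu-1}}\right) \; > \; h_q\!\left(2\tau + \tfrac{2\upsilon}{q}\right) - h_q(2\tau).
\]

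The next step is to control each side of this inequality asymptotically in $\upsilon$. For the right-hand side, the hypothesis $2(\tau + \upsilon/q) < 1 - 1/q$ of Theorem~\ref{upartial_thm} together with $\tau > 0$ places $2\tau$ strictly inside the open interval on which $h_q$ is $C^{\infty}$, so Taylor's theorem yields
\[
h_q\!\left(2\tau + \tfrac{2\upsilon}{q}\right) - h_q(2\tau) = \tfrac{2}{q}\, h_q'(2\tau)\, \upsilon + O(\upsilon^2) = O(\upsilon).
\]
For the left-hand side, I would expand
\[
h_2(x) = -x\log_2 x - (1-x)\log_2(1-x),
\]
noting that the second summand is $O(x)$ while the first is $-x\log_2 x$, which tends to $0$ strictly slower than any linear function as $x\to 0^+$. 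Substituting $x = \upsilon/2^{\mu-1}$ gives
\[
\tfrac{1}{\mu}\, h_2\!\left(\tfrac{\upsilon}{2^{\mu-1}}\right) = -\tfrac{\upsilon}{\mu\, 2^{\mu-1}}\, \log_2(\upsilon) + O(\upsilon),
\]
so the left-hand side behaves like $\Theta(-\upsilon \log_2 \upsilon)$.

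To conclude, I would observe that $(-\upsilon \log_2 \upsilon)/\upsilon = -\log_2 \upsilon \to +\infty$ as $\upsilon \to 0^+$, hence the left-hand side dominates the $O(\upsilon)$ right-hand side for every sufficiently small $\upsilon > 0$. One should also check that in this small-$\upsilon$ regime the auxiliary hypotheses of Theorem~\ref{th:bGV-s} (in particular $h_2(\upsilon/2^{\mu-1}) + h_2(2\tau) < 1$) are satisfied; since $h_2(\upsilon/2^{\mu-1}) \to 0$ and $h_2(2\tau) < 1$ whenever $2\tau < 1/2$, this is automatic.

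The main obstacle, and the reason the statement is only qualitative, is that the threshold on $\upsilon$ depends sensitively on $h_q'(2\tau)$, which can be large when $2\tau$ is close to either endpoint of $(0,1-1/q)$. A fully quantitative version of the proposition would require an explicit threshold $\upsilon_0(\tau,q,\mu)$ obtained by combining an upper bound on $h_q'(2\tau)$ with a lower bound on $-\log_2 \upsilon$, and this bookkeeping, rather than the asymptotic comparison itself, would be the delicate part of a polished proof.
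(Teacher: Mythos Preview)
Your proposal is correct and follows essentially the same approach as the paper. Both arguments hinge on the same key observation: the right-hand side $h_q(2\tau + 2\upsilon/q) - h_q(2\tau)$ is $O(\upsilon)$ because $h_q$ is smooth at the interior point $2\tau > 0$, whereas the left-hand side $\tfrac{1}{\mu}h_2(\upsilon/2^{\mu-1})$ is of order $-\upsilon\log_2\upsilon$ because $h_2'(x)\to\infty$ as $x\downarrow 0$. The only cosmetic difference is that the paper first rewrites the difference via the identity $h_{2^{\mu}}(x)=\tfrac{1}{\mu}\bigl(h_2(x)+x\log_2(2^{\mu}-1)\bigr)$ to express everything in terms of $h_2$, and then invokes the Taylor expansion, while you keep $h_q$ on the right-hand side; the singularity argument is identical.
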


\section{Conclusion}\label{Conclusion}
In this paper, code constructions and bounds for non-volatile memories with partial defects have been proposed. 
Our constructions can handle both: partial defects (also called partially stuck cells) and random substitution errors, and require less redundancy symbols for $u> 1$ and $q>2$ than the known constructions for stuck cells. 
Compared to error-free masking of partially stuck cells, our achieved code sizes coincide with those in \cite{wachterzeh2016codes}, or are even larger as shown in Proposition~\ref{proposition_binary}. We summarize our constructions and the previous works on partially/fully stuck cells in Table~\ref{table1}. 

Further, we have shown that it can be advantageous to introduce errors in some partially stuck cells in order to satisfy the stuck-at constraints.
 For the general case that is applicable for all of our constructions, we have shown in Theorem~\ref{tradingu_t} how to replace any $0 \leq j \leq t$ errors by $j$ masked partially stuck cells. This theorem has been improved for Construction~\ref{consbinary} by Lemma~\ref{lem03}, and further enhanced by another method for introducing errors in the partially stuck locations through Lemma~\ref{lem01} (cf. Example~\ref{example_binary_trading}). We gain (e.g., for $j=1$) exactly $2^{\pow-1}$ and $2^{\pow}$ (under the condition that $2(t+ \lfloor\frac{u}{q}\rfloor) <d$) additional masked partially stuck cells applying Lemma~\ref{lem03} and Lemma~\ref{lem01}, respectively.
So far, determining if introducing errors in partially stuck cells is advantageous or not can only be done numerically.

We also derived upper and lower limits on the size of our constructions. Our sphere-packing-like bound for the size of $(\Sigma,t)$ PSMCs has been compared to the usual sphere-packing upper bound, and for the case of no errors ($t=0$) to \cite[Theorem 2]{wachterzeh2016codes}.

We have numerically compared our Gilbert--Varshamov-type bounds, for given $(u,t)$, to each other and to $(q-1)$-ary codes. For $u \leq q-1$, Theorem~\ref{th:upartial} states the existence of $(u,1,t)$ PSMCs with rates that almost match the ones from the usual $q$-ary GV bound (shown in Figure~\ref{const1_with_other_bounds}). Moreover, up to $u = 20$ for $q=8$, Figure~\ref{compare_with_ordinary_GV} shows that application of Theorem~\ref{th:upartial} is better than using $(q-1)$-ary code as mentioned in \cite[Section III]{wachterzeh2016codes}. 
On the other hand, for $q=4$ and $u =10$, Theorem~\ref{th12_new} and Theorem~\ref{Th:GV-s2} obviously require less redundancies than $(q-1)$-ary code as shown in Figure~\ref{figGVcomparewithq4}.

Figures~\ref{figtrade_const2} and~\ref{figtradebinary} demonstrate the application of Theorem~\ref{tradingu_t}, Lemma~\ref{lem03} and Lemma~\ref{lem01} on $(u,1,t)$ PSMCs of rates that have been obtained based on Theorem~\ref{th12_new} and Theorem~\ref{Th:GV-s2}. For some parameters (i.e. $u=16$, $t=41$ as shown in Table~\ref{table4} and $u=19$, $t=31$ as shown in Table~\ref{table5}), application of Theorem~\ref{tradingu_t} and Lemma~\ref{lem03} achieve higher code rates and more masked cells. Application Lemma~\ref{lem01} on a code obtained by Theorem~\ref{th:upartial} (i.e. $u=7$, $t=27$) provides higher code rate compared to the direct employment of Theorem~\ref{Th:GV-s2}, Theorem~\ref{tradingu_t} and Lemma~\ref{lem03}.

In the asymptotic regime of our GV-like bounds, Theorem~\ref{upartial_thm} and Theorem~\ref{th:bGV-s} are remarkable competitors to Theorem~\ref{Th:aGVnew}. However, the analytical comparison between Theorem~\ref{upartial_thm} and Theorem~\ref{th:bGV-s} is more complicated to decide which one is the better choice. This was also confirmed numerically via Figure~\ref{compare_trading_const1_and_const3}.

\appendix
\subsection{Variant of Construction~\ref{cons:matrix_construction:u<q} ($u<q$) for Cyclic Codes}\label{app:variant_u<q_construction_cyclic_codes}
This section provides an alternative of Construction~\ref{cons:matrix_construction:u<q} by generalizing the construction of \cite[Theorem~2]{heegard1983partitioned}. We use 
the so-called \emph{partitioned cyclic codes} from \cite{heegard1983partitioned} as basic idea, but we require only a single redundancy symbol $l=1$ for the masking operation similar to \cite[Theorem 4 and Algorithm 3]{wachterzeh2016codes}. 
Compared to Construction~\ref{cons:matrix_construction:u<q}, Construction~\ref{cons1} directly implies a constructive strategy on how to choose a cyclic code of a certain minimum distance. 
	In this appendix, we will use the following notation.
If $\mathcal{C}$ is a cyclic code, it has a generator polynomial $g(x)$ of degree $n-k$ with roots in $\mathbb{F}_{q^m}$, where $n$ divides $q^m - 1$. The defining set $D_c$ of $\mathcal{C}$ 
is the set containing the indices $b$ of the zeros $\alpha^b$ of the generator polynomial $g(x)$, i.e., 
\begin{equation}\label{defining_set}
D_c := \{ b: g(\alpha^b)= 0\}.
\end{equation} 
The minimum distance $d$ of $\mycode{C}$ can be bounded from below by the BCH bound $\delta$ or more involved bounds such as the Hartmann-Tzeng bound \cite{HartmannTzengBound1972} or the Roos bound \cite{Roos1979}.
	\begin{const}\label{cons1}
		Let $u\leq \min \{n,q-1\}$. Assume there is an $[n, k, \delta\geq 2t+1]_q$ cyclic code $\mycode{C}$ with a generator polynomial $g(x)$ of degree $< n-k$ that divides $g_0(x) :=1+x+x^2+ \dots +x^{n-1}$.
			Encoder and decoder are given in Algorithms~\ref{a3} and~\ref{a4}. 
	\end{const}
\begin{thmmystyle}\label{thm:cyclic_construction}
	If $u\leq \min \{n,q-1\}$, Construction~\ref{cons1} provides an $(n, M = q^{k-1})_q$ ($u,1,t$)-PSMC with redundancy of $n-k+1$ symbols.
	\label{thm2}
\end{thmmystyle}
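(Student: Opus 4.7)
The plan is to mirror the proof of Theorem~\ref{thm1}, exploiting the fact that the hypothesis $g(x)\mid g_0(x)$ forces the all-one vector into the code. Writing $g_0(x) = g(x)\,h(x)$, we have $\deg h = (n-1)-\deg g$. Since by hypothesis $\deg g < n-k$, the polynomial $h(x)$ has degree at least $k-1$; in the tightest case $\deg g = n-k$ we get $\deg h = k-1$. The key structural observation is that any codeword $c(x) \in \mycode{C}$ can be written in the form $c(x)=\tilde m(x)\,g(x)$ with $\tilde m(x)\in\Fq[x]$, and by splitting $\tilde m(x) = m(x) + v\cdot h(x)$ with $\deg m < k-1$ and $v\in\Fq$, we obtain a decomposition $c(x) = m(x)\,g(x) + v\,g_0(x)$ that separates the $k-1$ information coordinates from the single masking coordinate $v$.

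The encoder (Algorithm~\ref{a3}, assumed) then proceeds as in Construction~\ref{cons:matrix_construction:u<q}: first form $w(x) = m(x)\,g(x)$ from the $k-1$-symbol message; next pick $v\in \Fq$ such that $w_i + v \neq 0$ for every $i\in\ve{\phi}$; and finally output $c(x) = w(x) + v\,g_0(x)$, whose coefficient vector has $c_i = w_i + v \neq 0$ at every partially-stuck-at-$1$ position. The existence of a valid $v$ follows from the masking counting argument already used in the proof of Theorem~\ref{thm1}: the forbidden set $\{-w_i : i\in\ve{\phi}\}$ has cardinality at most $u\leq q-1 < q=|\Fq|$, so at least one admissible $v\in\Fq$ remains.

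For decoding (Algorithm~\ref{a4}, assumed), because the minimum distance of $\mycode{C}$ satisfies $\delta \geq 2t+1$, any standard cyclic decoder applied to $y(x)=c(x)+e(x)$ with $\wt(\ve{e})\leq t$ recovers $\hat c(x)=c(x)$. Polynomial division by $g(x)$ then yields $\tilde m(x) = m(x) + v\,h(x)$. Since $\deg m < k-1$ and $h(x)$ has degree $k-1$ with known nonzero leading coefficient, one extracts $v$ from the coefficient of $x^{k-1}$ in $\tilde m(x)$, and then recovers $m(x) = \tilde m(x) - v\,h(x)$, hence the message $\vmes$. The cardinality is $q^{k-1}$ (one symbol less than the full code dimension because $v$ is spent on masking), so the redundancy is $n-(k-1)=n-k+1$ symbols.

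The only point requiring mild care is the splitting $\tilde m(x) = m(x) + v\,h(x)$: one must check that this representation is unique over $\Fq$, which holds because the map $(m,v)\mapsto m(x)+v\,h(x)$ from $\Fq^{k-1}\times\Fq$ to the space of polynomials of degree $\leq k-1$ is a bijection (the leading coefficient of $v\,h(x)$ pins down $v$, after which $m(x)$ is determined). I do not expect a real obstacle; the argument is a direct polynomial-language reformulation of Construction~\ref{cons:matrix_construction:u<q}, the role of the last row of $\G$ being played by $g_0(x)$ and the role of the identity block by multiplication by $g(x)$.
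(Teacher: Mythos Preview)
Your proposal is correct and follows essentially the same route as the paper: both exploit that $g(x)\mid g_0(x)$ puts the all-one word in $\mycode{C}$, encode $w(x)=m(x)g(x)$ and adjust by a scalar multiple of $g_0(x)$ to avoid zeros at the stuck positions, and then use the distance $\delta\geq 2t+1$ to recover $c(x)$. The only cosmetic difference is in the recovery of $m(x)$: the paper reduces $\hat c(x)$ modulo $g_0(x)$ (which annihilates the $v\,g_0(x)$ term and leaves $m(x)g(x)$, since $\deg(m\cdot g)<n-1=\deg g_0$), whereas you divide $\hat c(x)$ by $g(x)$ to obtain $m(x)+v\,h(x)$ and then peel off $v$ from the top coefficient; these are equivalent and your explicit bijection argument for the splitting $(m,v)\mapsto m+v\,h$ is a nice clarification the paper leaves implicit.
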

\begin{algorithm} \label{a3}
	\caption{Encoding}
	\KwIn{
		\begin{itemize}
			\item Message:
			$\mesp \in \mathbb{F}_q[x]$ of degree $ < k-1$
			\item Positions of partially stuck cells:
			$\ve{\phi}$
	\end{itemize}}
	
	$w(x) = {w}_0 +\dots+ {w}_{n-2} x^{n-2} \leftarrow \mesp \cdot g(x)$
	
	Select $v \in \mathbb{F}_q\setminus \{ w_i \mid i \in \ve{\phi}\}$.
	
	$ c(x) = w(x) -v \cdot g_0(x) \mod (x^n-1)$
	
	\KwOut{
		Codeword $c(x) \in \mathbb{F}_q[x]$ of degree $\leq n-1$}	
\end{algorithm}
\begin{algorithm}\label{a4}
	\caption{Decoding}	
	\KwIn{Retrieve $y(x) = c(x) + e (x)$, where $e (x)\in \mathbb{F}_q[x]$ of degree $\leq n-1$ is the error polynomial }
	
	$\hat{c}(x) \gets$ Decode $y(x)$ in the code generated by $g(x)$
	
	$\hat{m}(x) \gets \hat{c}(x) \mod g_0(x)$
	
	\KwOut{Message $\hat{\mes}(x) \in \mathbb{F}_q[x]$ of degree $< k-1$ }
\end{algorithm}	
\begin{proof}
	A cyclic code of length $n$ contains the all-one word if and only if its generator polynomial $g(x)$ divides $g_0(x) = 1+x+\dots+x^{n-1}$. Thus, Construction~\ref{cons1} follows directly from Theorem~\ref{thm1}, but with different encoding and decoding algorithms.
	Algorithm~\ref{a3} shows the encoding process for the cyclic code construction. 
	Step~1 in Algorithm~\ref{a3} calculates $w(x)$ of degree $< n-1$.
	Since $u < q$, there is at least one $v \in \mathbb{F}_q$ such that 
	all coefficients of $w(x)$, $w_i \in \mathbb{F}_q$, are unequal to $v$. 
	Therefore, after Step~3, $c_{n-1} = -v$.
	The requirement for masking, see~\eqref{equ} is satisfied for $c(x)$ since ${c}_i = (w_i - v) \in \mathbb{F}_q \neq 0$. 
	
	Algorithm~\ref{a4} decodes the retrieved polynomial $y(x)$.
	First, decode $y(x)$ 
		in the code generated by $g(x)$. 
	Second, the algorithm performs the unmasking process to find $\hat{\mes}(x)$. 
	We obtain:
	\begin{align*}
		\hat c(x) &= \hat{\mes}(x)\cdot g(x) + z_0 \cdot g_0(x) \\
		\hat{\mes}(x) &= \dfrac{\hat{w}(x) \mod g_0(x)}{g(x)} = m(x).
		\tag*{\qedhere}
	\end{align*}
\end{proof}
Construction~\ref{cons1} provides an explicit cyclic construction that can mask $u < q$ cells and correct $t$ errors. 
If we use a BCH code in Construction~\ref{cons1}, we can bound the minimum distance of the code $\mycode{C}$ by the BCH bound. 
This is done in Tables~\ref{table2} and ~\ref{table3}.
\subsection{An alternative Proof of Proposition~\ref{modified_version_of_Theorem3}}\label{alternative_Proof_of_Proposition_2}
We start with Lemma~\ref{lem:basic0}.
\begin{lem}\label{lem:basic0}
	Let $\ve{M}\in\Fq^{m\times n}$ be such that each column of $\ve{M}$ has at least one non-zero entry. 
	Let $\ve{s}\in\Fq^n$.
	For each $\ve{w}\in \F_q^n$, there is a $\ve{v}\in\Fq^m$ such that 
	\[ \Big| \big\{i\in [n]\mid w_i+ (\ve{v}\ve{M})_i \geq s_i\big\}\Big| \;\; \geq n - \left\lfloor \frac{1}{q}\sum_{i=0}^{n-1} s_i \right\rfloor. \]
\end{lem}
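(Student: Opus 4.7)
The plan is to use a simple averaging (expectation) argument over $\ve{v}$ chosen uniformly at random from $\F_q^m$. For each fixed coordinate $i\in[n]$, I will count the number of $\ve{v}\in\F_q^m$ for which the partial-stuck condition at $i$ is violated, and then sum over $i$ to get an upper bound on the expected number of violations.

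The key observation is that since column $i$ of $\ve{M}$ has at least one non-zero entry, the $\F_q$-linear map $\ve{v}\mapsto (\ve{v}\ve{M})_i$ from $\F_q^m$ to $\F_q$ is surjective, so each value in $\F_q$ is attained exactly $q^{m-1}$ times. Therefore the set $\{\ve{v}\in\F_q^m\mid w_i+(\ve{v}\ve{M})_i < s_i\}$ has size exactly $s_i\cdot q^{m-1}$, because there are exactly $s_i$ forbidden values for the sum $w_i+(\ve{v}\ve{M})_i$ (recalling the identification of $s_i\in\F_q$ with the integer in $\{0,1,\ldots,q-1\}$ counting the number of elements strictly smaller than $s_i$).

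Summing this count over $i$ and dividing by $q^m$, the expected number of violated coordinates (over a uniformly random $\ve{v}$) is exactly $\tfrac{1}{q}\sum_{i=0}^{n-1}s_i$. Since this is an average of integers, at least one $\ve{v}\in\F_q^m$ achieves a number of violations that is at most $\lfloor \tfrac{1}{q}\sum_i s_i\rfloor$. Fixing such a $\ve{v}$ yields at least $n-\lfloor \tfrac{1}{q}\sum_i s_i\rfloor$ coordinates satisfying $w_i+(\ve{v}\ve{M})_i \geq s_i$, which is the claim.

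I expect no real obstacle: the only subtlety is the bookkeeping between the ordering on $\F_q$ and the integer interpretation of $s_i$ introduced in Section~\ref{ssec:notation}, which must be used to justify that exactly $s_i$ values are forbidden per coordinate. Once that identification is in place, the averaging step and the floor function manipulation are immediate.
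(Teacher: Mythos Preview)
Your proposal is correct and is essentially the same double-counting/averaging argument as in the paper: the paper counts pairs $(i,\ve{v})$ for which the constraint is \emph{satisfied} and applies a ceiling, while you count the \emph{violated} coordinates and apply a floor, which is the complementary but equivalent formulation. The key step in both is that surjectivity of $\ve{v}\mapsto(\ve{v}\ve{M})_i$ (guaranteed by the non-zero column) makes each value of $(\ve{v}\ve{M})_i$ equally likely, so the per-coordinate counts are exactly $(q-s_i)q^{m-1}$ (paper) or $s_i q^{m-1}$ (yours).
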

\begin{proof}
		We define the set $S$ as
	\[ S = \big\{(i,\ve{v})\in [n]\times \Fq^m\mid w_i+(\ve{v}\ve{M})_i\geq s_i\big\} .\]
	Clearly, there is $\ve{v}\in\Fq^m$ such that
	\begin{equation}\label{eq:avg0}
		\Big| \big\{ i\in[n] \; \big| \; w_i+ (\ve{v}\ve{M})_i\geq s_i\big\}\Big| \;\; \geq \left\lceil\frac{|S|}{q^m}\right\rceil. 
	\end{equation}
	Let $i\in [n]$. As the $i$-th column of $\ve{M}$ has a non-zero entry, for each $y\in\Fq$ there
	are exactly $q^{m-1}$ vectors $\ve{x}\in\Fq^m$ such that $(\ve{x}\ve{M})_i=y$. As a consequence, 
	\[ \Big|\big\{ \ve{v}\in \Fq^m \mid w_i+(\ve{v}\ve{M})_i \geq s_i\big\} \Big| \;\; = (q - s_i)q^{m-1} , \]
	and so
	\begin{equation}\label{eq:sum_exact0}
		|S|= \sum_{i=0}^{n-1} (q-s_i)q^{m-1} = nq^m - q^{m-1}\sum_{i=0}^{n-1} s_i.
	\end{equation}
	The lemma follows from combining (\ref{eq:avg0}) and (\ref{eq:sum_exact0}).
\end{proof}
We are now in a position to introduce an alternative \emph{non-constructive} proof for Proposition~\ref{modified_version_of_Theorem3}.

	Let $\ve{s}\in\Sigma$. In order to simplify notation, we assume without loss of generality that 
	$\sum_{i=d_0-2}^{n-1} s_i \leq q-1$. 
	Let $\ve{w}\in\Fq^n$. We wish to find $\ve{z}\in\Fq^l$ such that $w_i + (\ve{z}\H_0)_i\geq s_i$ for many indices $i$.\\ 
	As the $d_0-2$ leftmost columns of $\H_0$ are independent, there exists an invertible matrix $\ve{T}\in\Fq^{l\times l}$ such that
	\[ \ve{T}\H_0 = \begin{bmatrix} I_{d_0-2} & \A \\ \0 & \B \end{bmatrix}, \]
	where $I_{d_0-2}$ denotes the identity matrix of size $d_0-2$. \\
	For $i\in [d_0-2]$, we choose $\zeta_i=s_i-w_i$ and write
	\[ \ve{v}=\ve{w} + \ve{\zeta}\cdot (I_{d_0-2} \mid \A). \]
	By definition, $v_i=s_i$ for all $i\in[d_0-2]$. \\
	As any $d_0-1$ columns of $\ve{T}\H_0$ are independent, no column of $\B$ consists of only zeroes.
	Lemma~\ref{lem:basic0} implies that there is an $\ve{\eta}\in\Fq^{l-d_0+2}$ such that
	\begin{align*}
		&\Big| \; \big\{ i\in[d_0-2,n-1]\; \big| \; w_i + (\ve{\eta} \B)_{i+d_0-2} \geq s_i\big\} \Big| \;\;\geq\\
		& n-d_0+2- \left\lfloor\frac{1}{q} \sum_{i=d_0-2}^{n-1}s_i\right\rfloor.
	\end{align*}
	Combining this with the fact that $v_i=s_i$ for all $i\in [d_0-2]$, we infer that for all indices $i\in[n]$, 
	\begin{align*}
		w_i+ \left( (\ve{\zeta},\ve{\eta}) \ve{T}\H_0\right)_i \geq s_i. \hspace{4 cm} \qed
	\end{align*}
\subsection{Proof of Lemma~\ref{lm:bind}}
	Let $s\geq 1$. As $\mathbb{F}_q\subseteq \mathbb{F}_{q^s}$, it is clear that $d_1\geq d_s$. 

To show the converse, we use the trace function $T$ defined as $T(x) = \sum_{i=0}^{s-1} x^{q^i}$.
As is well-known, $T$ is a non-trivial mapping from $\mathbb{F}_{q^s}$ to $\mathbb{F}_q$, and 
\begin{equation} \label{eq:tracelinear}
	T(ax+by) = aT(x)+bT(y),
\end{equation}
for all $x,y\in \mathbb{F}_{q^s}$ and $a,b\in \mathbb{F}_q$.
We extend the trace function to vectors by applying it coordinate-wise. 

Let $\vmes\in\mathbb{F}_{q^s}^k\setminus \{\ve{0}\}$. We choose $\lambda\in\mathbb{F}_{q^s}$ such that 
$T(\lambda\cdot \vmes)\neq \ve{0}$. As $T(0)=0$, we infer that 
$ \wt(\vmes\G) = \wt(\lambda\cdot \vmes \G) \geq \wt(T(\lambda \cdot \vmes\G))$.
As all entries from $\G$\ are in $\mathbb{F}_q$, it follows from (\ref{eq:tracelinear}) that 
$T(\lambda\cdot \vmes\G) = T(\lambda \cdot \vmes)\G$. As a consequence,
\[ \wt(\vmes\G) \geq \wt(T(\lambda\cdot \vmes\G)) = \wt(T(\lambda\cdot \vmes)\G) \geq d_1, \]
\subsection{Proofs of analytical comparisons of the asymptotic GV bounds}\label{app:asGVcomp}
We will use the following lemma.
\begin{lem}\label{lem:entrop}
	Let $q\geq 2$ be an integer.
	If $0\leq x,y$ are such that $x+y\leq 1$, then 
	\[ h_q(x+y) \leq h_q(x) + h_q(y) . \]
\end{lem}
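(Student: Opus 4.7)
My plan is to deduce the inequality from the concavity of $h_q$ on $[0,1]$ combined with the fact that $h_q(0)=0$, which together yield subadditivity by a standard barycentric argument.

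First, I would verify concavity. Writing
\[
h_q(x) = \bigl(-x\log_q x\bigr) + \bigl(-(1-x)\log_q(1-x)\bigr) + x\log_q(q-1),
\]
each of the first two summands is strictly concave on $(0,1)$ (both have second derivative $-1/(x\ln q)$ and $-1/((1-x)\ln q)$ respectively), while the third is linear. Hence $h_q$ is concave on $(0,1)$, and by continuity at the endpoints (the definition matches the one-sided limits since $t\log_q t\to 0$ as $t\to 0^+$) it is concave on the whole of $[0,1]$.

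Next, given $x,y\geq 0$ with $x+y\leq 1$, I would reduce to the case $x,y>0$: if either of them is zero, subadditivity is immediate because $h_q(0)=0$. So assume $x,y>0$ and set $z=x+y\in(0,1]$. The identities
\[
x = \frac{y}{z}\cdot 0 + \frac{x}{z}\cdot z, \qquad y = \frac{x}{z}\cdot 0 + \frac{y}{z}\cdot z
\]
together with concavity of $h_q$ yield
\[
h_q(x) \geq \frac{y}{z} h_q(0) + \frac{x}{z} h_q(z), \qquad h_q(y) \geq \frac{x}{z} h_q(0) + \frac{y}{z} h_q(z).
\]
Adding the two inequalities and using $h_q(0)=0$ gives $h_q(x)+h_q(y)\geq h_q(z)=h_q(x+y)$, which is exactly the claim.

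The main steps are essentially routine; the only point that needs a little care is to make sure concavity is valid on the closed interval $[0,1]$ rather than just the open one, so that the barycentric inequality applies when $x+y=1$ (for instance when $z=1$, so $h_q(z)=\log_q(q-1)$). This is handled by the continuity of $h_q$ at the endpoints, which is implicit in the piecewise definition given in Section~\ref{ssec:notation}. Nothing in the argument depends on $q$ beyond $q\geq 2$, so the lemma holds in the stated generality.
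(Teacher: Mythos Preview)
Your proof is correct. Both your argument and the paper's rest on the same two facts---concavity of $h_q$ on $[0,1]$ and $h_q(0)=0$---but the execution differs. The paper fixes $y$, sets $f_y(x)=h_q(x+y)-h_q(x)-h_q(y)$, and shows $f_y'(x)=h_q'(x+y)-h_q'(x)\leq 0$ (since $h_q'$ is non-increasing by concavity), whence $f_y(x)\leq f_y(0)=0$. You instead use the chord formulation of concavity directly, writing $x$ and $y$ as convex combinations of $0$ and $z=x+y$ and applying Jensen's inequality. Your route is slightly more elementary in that it avoids differentiating $h_q$ in the main step, and it makes the ``concave $+$ vanishing at the origin $\Rightarrow$ subadditive'' mechanism fully explicit; the paper's route is a touch shorter once one is willing to compute derivatives. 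Either way the argument is routine.
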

\begin{proof}
	Let $0\leq y < 1$, and consider the function $f_y(x)=h_q(x+y)-h_q(x)-h_q(y)$ on the interval $[0,1-y]$.
	Clearly, $f'_y(x)=h'_q(x+y)-h'_q(x) \leq 0$, where the inequality follows from the fact
	that the second derivative of $h_q$ is non-negative. Hence,
	$f_y(x)\leq f_y(0)=0$ for each $x\in[0,1-y]$.
\end{proof}
\subsubsection{Proof of Proposition~\ref{upartialvsaGVnew}}
Assume $\tau$ and $\upsilon$ are such that the conditions of Theorem~\ref{upartial_thm} and Theorem~\ref{Th:aGVnew} are
satisfied, that is, 
such that $2\tau+2\frac{\upsilon}{q} < 1-\frac{1}{q}$ and $h_q(\upsilon)+h_q(2\tau) < 1$. 
By invoking Lemma~\ref{lem:entrop}, we see that 
\[ h_q(2\tau+2\frac{v}{q}) \leq h_q(2\tau) + h_q(2\cdot \frac{\upsilon}{q}) \leq h_q(2\tau) + h_q(\upsilon), \]
where the final inequality holds as $q\geq 2$ and $h_q$ is monotonically increasing on $[0,1-\frac{1}{q}]$.
As a consequence, the code rate guaranteed by Theorem~\ref{upartial_thm}
is at least equal to the code rate guaranteed by Theorem~\ref{Th:aGVnew}.
\subsubsection{Proof of Proposition~\ref{aGVnewvsbGV-s}}
Assume that the conditions of Theorem~\ref{Th:aGVnew} and Theorem~\ref{th:bGV-s} are satisfied.
The difference between the rate of Theorem~\ref{th:bGV-s} and of Theorem~\ref{Th:aGVnew} equals
\begin{equation}\label{eq:diffaGVbGV} 
	h_{2^{\pow}}(\upsilon) - \frac{1}{\pow}h_2(\frac{\upsilon}{2^{\pow-1}}). 
\end{equation}
According to the conditions of Theorem~\ref{Th:aGVnew}, $\upsilon\leq 1-\frac{1}{2^{\pow}}$, and so
$h_{2^{\pow}}(\upsilon) \geq h_2(\frac{\upsilon}{2^{\pow-1}})$.
As $h_{2^{\pow}}(x)=\frac{1}{\pow} h_2(x) + x\log_{2^{\pow}}(2^{\pow}-1)$, 
the difference in (\ref{eq:diffaGVbGV}) is non-negative. 
That is, Theorem~\ref{th:bGV-s} is better than Theorem~\ref{Th:aGVnew}.
\subsubsection{Comparing Theorem~\ref{upartial_thm} and Theorem~\ref{th:bGV-s}}
Assume that $\tau$ and $\upsilon$ are such that the conditions of Theorem~\ref{upartial_thm} and of
Theorem~\ref{th:bGV-s} are satisfied, that is, $2\tau + 2\frac{\upsilon}{2^{\pow}}< 1-\frac{1}{2^{\pow}}$,
\[ 0\leq \upsilon \leq 2^{\pow-2},\; 0\leq 2\tau \leq \frac{1}{2} \mbox{ and } h_2(\frac{\upsilon}{2^{\pow-1}}) + 
h_2(2\tau) < 1 . \]
Let $f_{\pow}(\tau,\upsilon_0)$, where $\upsilon_0=\frac{\upsilon}{2^{\pow-1}}$, be the bound from Theorem~\ref{upartial_thm} 
minus the bound from Theorem~\ref{th:bGV-s}, that is
\[ f_{\pow}(\tau,\upsilon_0)= h_{2^{\pow}}(2\tau) + \frac{1}{\pow}h_2(\upsilon_0)-h_{2^{\pow}}(2\tau + \upsilon_0) .\]
The definition of the entropy function implies that for any $x\in[0,1]$
\begin{equation}\label{eq:entprop}
	h_{2^{\pow}}(x) = \frac{1}{\pow} \left( h_2(x) + x \log_2(2^{\pow}-1)\right. ).
\end{equation}
Applying (\ref{eq:entprop}) , we infer that
\begin{equation}\label{eq:falt}
	\pow f_{\pow}(\tau,\upsilon_0) = h_2(2\tau) + h_2(\upsilon_0) -h_2(2\tau + \upsilon_0) - \upsilon_0\log_2(2^{\pow}-1). 
\end{equation}
In particular, 
 $\pow f_{\pow}(0,\upsilon_0) = -\upsilon_0\log_2(2^{\pow}-1) \leq 0$.
 
So for $\tau=0$, Theorem~\ref{th:bGV-s} is better than Theorem~\ref{upartial_thm}. 
It follows from Lemma~\ref{lem:entrop} that the three leftmost terms in (\ref{eq:falt}) form a non-negative number.
The subtraction of the fourth term, however, can result in a negative function value, especially for large $\pow$.
\begin{examplex}[Numerical example]
	$\mu f_{\mu}(0.055,0.11) = 2\cdot h_2(0.11)- h_2(0.22) -0.11\log_2(2^{\pow}-1)\approx 
	0.23397-0.11 \log_2(2^{\pow}-1)$ is positive for $\mu\leq 2$ and negative otherwise. 
	
	\qquad \qquad\QEDA
\end{examplex}
We now prove Proposition~\ref{upartialvsbGV-s}. That is, we show that for $\tau>0$ and $\upsilon_0$ sufficiently small, $f_{\mu}(\tau,\upsilon_0) > 0$.
This follows from the Taylor expansion
of $\mu f_{\mu}(\tau,\upsilon_0)$ around $\upsilon_0=0$. Indeed, $f_{\mu}(\tau,0)=0$, and $h'_2(x)\rightarrow \infty$
if $x\downarrow 0$.

\end{document}